\def\draft{0}
\def\sigconf{0}
\def\big{0}
\def\anon{0}
\def\extabst{0}
\def\coverpageandmarkers{0}
\newif\ifab 
\newif\iffv 
\newcommand{\onote}[1]{\authnote{Or}{#1}{blue}}
\newcommand{\anote}[1]{\authnote{Andrea}{#1}{red}}
\newcommand{\keygen}{\ensuremath{\mathsf{key\textit{-}gen}}}
\newcommand{\sign}{\ensuremath{\mathsf{sign}}}
\newcommand{\ver}{\ensuremath{\textsf{verify-bolt}}}
\newcommand{\mint}{\ensuremath{\mathsf{mint}}}
\newcommand{\verifysig}{\ensuremath{\mathsf{verify\textit{-}sig}}}
\newcommand{\gencertificate}{\ensuremath{\mathsf{gen\textit{-}certificate}}}
\newcommand{\verifycertificate}{\ensuremath{\mathsf{verify\textit{-}certificate}}}
\newcommand{\sigforge}{\ensuremath{\mathsf{Sig\textit{-}Forge}}}
\newcommand{\sabotagemoney}{\ensuremath{\mathsf{Sabotage\textit{-}Money}}}
\newcommand{\sabotagecertificate}{\ensuremath{\mathsf{Sabotage\textit{-}Certificate}}}
\newcommand{\sabotagesignature}{\ensuremath{\mathsf{Sabotage\textit{-}Signature}}}
\newcommand{\gen}{\ensuremath{\mathsf{Gen}}}
\newtheorem{prop}{Proposition}
\newcommand{\beq}{\begin{eqnarray}}
\newcommand{\eeq}{\end{eqnarray}}
\newcounter{mparcnt}
\renewcommand\themparcnt{\arabic{mparcnt}}
    \newcommand\mpar[1]{\refstepcounter{mparcnt}\marginpar{\textbf{Marker \themparcnt} #1}}
    \newcommand\mpar[1]{}
\crefname{mparcnt}{Marker}{Markers}
\begin{document}
\title{A Quantum Money Solution to the Blockchain Scalability Problem
\ifnum\extabst=1 
\footnote{Extended abstract. This work supersedes two previous independent works, \cite{Col19} and \cite{Or2019}.}
\fi
}
\ifnum\anon=0
    \ifnum\sigconf=0
        \author[1]{Andrea Coladangelo}
        \author[2]{Or Sattath}
        \affil[1]{Computing and Mathematical Sciences, Caltech}
        \affil[2]{Computer Science Department, Ben-Gurion University}
    \else
        \author{Andrea Coladangelo}
        \affiliation{%
        \istitution{Computing and Mathematical Sciences,	
        Caltech}
        \country{USA}}
        \author{Or Sattath}
        \affiliation{%
        \institution{Computer Science Department, Ben-Gurion University}
        \country{Israel}}
    \fi
\else
    \ifnum\sigconf=0
        \author{}
    \fi
\fi
\ifnum\coverpageandmarkers=1
We would like to begin by thanking the reviewers for their in-depth review, and the editor’s comments. These include important points, which we mostly agree with, and we hope that we addressed them. We believe this significantly improved the quality of the manuscript. On top of these, the manuscript includes significant additions, which are partly the result of a merge with a second paper by Or Sattath. 

More specifically, we formalize a new property of a quantum lightning scheme which we call “bolt-to-signature” capability. This is reminiscent of one-time digital signatures. We provide a provably secure construction of a lightning scheme with such a property (assuming a secure lightning scheme which might not satisfy this). The construction is based on the hash-and-sign paradigm as well as Lamport’s signature scheme. We discuss extensively how this resolves some existing practical issues with our payment system in an elegant way, but we also believe that this could be of independent interest. The revised version of the manuscript also includes an additional section, which provides an extensive comparison of our payment system with Bitcoin, and other second-layer solutions, like Bitcoin’s Lightning Network. 

Before proceeding, we believe that although most of Reviewer 2’s comments are to the point, we fear that Reviewer 2 has misunderstood one crucial aspect about our construction. We have clarified this misunderstanding in our answers below, and edited the manuscript to make this crucial aspect clear. We have also addressed the other major concern of Reviewer 2, in our replies below, and in the manuscript where required. We suspect that these additional explanations and edits will lead Reviewer 2 to a reassessment of the submission.

We will start by summarizing the reviewers’ feedback:
Both reviewers agree that the aim of the paper is well motivated, and has scientific value. In simple terms, the aim is to “inherit the trustlessness from the blockchain protocol and the scalability and transaction speed from quantum money” (quote from Reviewer 1). 

Reviewer 1 had several technical comments, which we will touch upon in more detail later. 

Reviewer 2 raised several major concerns. Before addressing these, we comment on one of Reviewer 2’s assessments about the paper.
\vspace{3mm}

\emph{The main contribution seems to be a clever idea of how one can combine a (very strong, too strong?) ideal ledger with Zhandry’s quantum money. In essence, a payee has to find a pre-image x of a hash s = H(x) to claim their money. This preimage is encoded in a quantum state $\ket{\psi}$ which the payer sends to the payee. A measurement then allows the payee to find x. Due to no-cloning (and the properties of Zhandry’s scheme), the payer cannot generate multiple copies of $\ket{\psi}$, and therefore cannot double spend.}
\vspace{3mm}

The first major concern raised by the referee is tightly related to this point as well, so we address these two points two together. We fear that Reviewer 2 has missed a crucial point about our manuscript (probably due to an unclear explanation, which we have extensively clarified.). 

The reviewer noticed that one way to use our payment system is so that Alice, the payer, sends to Bob, the payee, the quantum state. At that point, Bob could “redeem” the quantum money state back to a classical coin. We fear that the referee got the impression that this is the only intended way the system could be used. Our point is that Bob could spend the quantum money to Charlie, the latter could then spend it to Daniel, etc: importantly none of these transactions require posting any new transaction on the blockchain. 
The reviewer wondered whether we assume that some operations are faster than others. We agree that this point was not as clear as it should have been. In short, the point is that some operations are indeed slow -- namely, “write” operations which alter the state of the ledger should be considered as slow (in Bitcoin, this would be the order of 10 minutes, and in Ethereum ~ 10 seconds). Yet, reading information that is already in the ledger is fast -- in systems such as Bitcoin and Ethereum today, this is governed by how long it takes to retrieve an entry from a database of hundreds of gigabytes in size -- which could typically take milliseconds. Therefore, the setup phase, in which a coin is transformed to a quantum money is a slow process, as it requires altering the state of the ledger. But once this is done, payments are fast since verifying a payment only requires reading data from the ledger, and locally applying the quantum lightning verification procedure to the received quantum banknote. This is similar in spirit to the opening of a Bitcoin Lightning Network channel, (which requires adding data to the blockchain), and then transacting using the Lightning Network is fast (since nothing more than read access to the Bitcoin network is needed). We have now added an extensive analysis in Section 8 of the different tradeoffs between a cryptocurrency such as Bitcoin, a “second-layer” solution such as the Lightning Network, and our quantum payment system. We have also clarified which are the “fast” and which are the “slow” operations on the ledger - see Marker \ref{mar:fast_vs_slow}.

We address the remaining two major concerns of Reviewer 2.
\begin{enumerate}
    \item The reviewer states that we discuss the UC framework, and the importance of it in order to argue about composability. In something that seems like a sharp contradiction, the security proofs are not UC. We agree that this discrepancy was not explained, and that it might even seem misleading.
    We have now made extra clear that we do not claim that our scheme is universally composable. Nonetheless, we use the UC framework for two reasons. First and foremost, if someone comes up with a real world implementation of the ideal ledger functionality we discuss, and proves that it UC-securely realizes our ideal ledger functionality, we could then replace the ideal functionality in our protocol with its real world implementation, and the security guarantee that we proved would hold verbatim (by the composition theorem of the UC framework). We feel that this alone is a good enough reason to operate in the UC framework (despite the extra notation). Secondly, we feel that a formal definition of the ideal functionality, as well as the way adversaries and parties are modelled is necessary in order to make any formal (Non-UC) security statement. The UC framework provides a convenient formalism for this. We have added a discussion about these points on \cref{mar:GUC_simplifies,mar:GUC_could_be_used,mar:not_composable}.
    \item The reviewer points out that it is crucial that the verification does not perturb the quantum money. The reviewer wondered why this property is not in the definition. Indeed, this property should have been part of the definition, and was added -- see \cref{mar:no_perturbation}.
    
    We also discuss a related issue, which was not discussed in the original submission, denoted security against sabotage,  see \cref{mar:motivation_sabotage} for the motivation, and \cref{sec: sabotage} for more details. 
\end{enumerate}

We address several other minor issues raised by Reviewer 1:
\begin{enumerate}
    \item \emph{In the ledger, the adversary directly communicates with the ideal functionality. I’m a bit unsure whether directly notifying “the adversary” about additional parties is the most natural choice here. I see how this is a convenient formalization of things, but it is very counterintuitive that the ideal functionality actively helps any adversary. To me it seems that it would be closer to a real-world ledger, and thus more natural, as well as more in the spirit of the UC framework, to fix the number of parties in the beginning. Subsequently, some or all of them can register a PID. That way, the adversary can corrupt an unregistered party and register them. Subsequently, they can retrieve information about the set of players, e.g. by using RetrieveTransaction.}

Yes, we went for the convenient formalization of things, but it could have been as suggested too.
   \item \emph{In the last item of the itemized list in the bullet "execution phase" of the description of smart contracts, I think $\geq$ should be replaced by $>$ and $<$ by $\leq$. That way, the smart contract also terminates when e=``all coins''. Should the behavior of not terminating upon e=``all coins'' be intended, I think this would deserves some discussion.}
   
   Yes, the replacements suggested are correct. We have fixed this (see Figure~\ref{fig: ledger functionality}). Thanks for spotting this. 
\end{enumerate}

\vspace{3mm}

We address the minor comments by referee 2:
\begin{enumerate}
    \item \emph{How are the quantum states sent from payer to payee? On an authenticated (and therefore encrypted) quantum channel as well?}
    
    Indeed, we assume the quantum channel is authenticated. We also assume it is ideal. This is clarified in Marker~\ref{mar:ideal_authenticated}. 
    
    \item \emph{Definition 3 and the following paragraphs refer a lot to $Game_{12}$. There’s a $Game_{1}$ and a $Game_{2}$ just before Def 3, but I can’t find any definition of $Game_{12}$. I’m assuming this is a typo, and is supposed to be $Game_{2}$, because that seems to be the context in which it is used.}
    
    That has been fixed now. We noticed that the notation is cumbersome, and chose to improve it: There is no longer a $Game_1$ -- it is now part of the completeness definition. Furthermore,  $Game_2$ has been renamed ``\textsf{Forge-certificate}'' -- see Marker~\ref{mar:no_perturbation}.

    \item \emph{In the proof of Proposition 1, A is an adversary that can with $Game_2$ ($Game_{12}$?) with non-negligible probability. According to the definition of $Game_2$, A receives (Gen,Ver,H) as input. But in the proof of Proposition 1, $Game_2$ receives (Gen,Ver,H,$\ket{\psi}$) as input.}
    
    All of these issues have been fixed. See the proof of Proposition~\ref{prop: extra property 1}.
    
    \item \emph{In the proof of proposition 1 it says: $\psi’$ must pass Ver with non-negligible probability. What does it mean to ``pass Ver''. Ver seems to just be a map from states to strings, there is no passing condition.}
    
    In the previous version, we considered verification to be “passed” if Ver output a serial number s which was not the symbol ``$\perp$''. We have modified this now to be more natural: Ver takes as input a state and a serial number, and outputs either “accept” (pass) or “reject”.

    \item \emph{The proof of proposition 1 basically concludes by saying that one of the key steps in the proof can be found somewhere in Zha17. This step should be include in the current paper (at least in the appendix), otherwise the proof of Prop 1 is pointless.}
    
    We have clarified and quoted from \cite{zha19} the exact claim needed for our proof -- see Marker \ref{mar:two_preimages}. A similar issue was present for Proposition 2, and we added the required claim -- see Marker \ref{mar:prop2_quote}.
    
    \item \emph{In $Game_1$ the adversary has to produce a state $\ket{\psi}$, but this state isn’t part of the winning condition of $Game_1$, so what is it for?}
    
    See the similar comment after point 2 above.
\end{enumerate}
\fi  
\ifnum\sigconf=0
    \ifdraft{}{\date{}}
    \maketitle
\fi
\sloppy
%
%
\ifnum\extabst=1
\section{Introduction}
Cryptocurrencies, have recently risen to popular attention, the most well-known example being Bitcoin~\cite{Nak08}. 
Informally, a blockchain is a public ledger consisting of a sequence of blocks. Each block typically contains information about a set of transactions, and a new block is appended regularly via a consensus mechanism that involves the parties of a network and no a priori trusted authority. A blockchain is endowed with a native currency which is employed in transactions, and whose basic unit is a ``coin''. The simplest type of transaction is a \textit{payment}, which transfers coins from one party to another. 


A central issue that needs to be resolved for blockchains is \textit{scalability}. This refers to the problem of increasing the throughput of transactions (i.e. transactions per second, which is currently less than 10 in Bitcoin) while maintaining the resources needed for a party to participate in the consensus mechanism approximately constant, and while maintaining security against adversaries that can corrupt constant fractions of the parties in the network. 

In this work, we show that quantum information is inherently well-suited to tackle this problem. We show that a classical blockchain can be leveraged using quantum money to provide a simple solution to the scalability problem.

The main quantum ingredient that we employ is a primitive called \textit{quantum lightning}, formally introduced by Zhandry \cite{zha19} and inspired by Lutomirski et al.'s notion of collision resistant quantum money \cite{lutomirski2009breaking}. In a public-key quantum money scheme, a bank is entrusted with generating quantum states (we refer to these as quantum banknotes) with an associated serial number, and a public verification procedure allows anyone in possession of the banknote to check its validity. Importantly, trust is placed in the fact that the central bank will not create multiple quantum banknotes with the same serial number.
A quantum lightning scheme has the additional feature that no generation procedure, not even the honest one (and hence not even a bank!), can produce two valid money states with the same serial number, except with negligible probability. This opens to the possibility of having a completely decentralized quantum money scheme. However, if the system ought to be trust-less, some issues need to be addressed: most importantly, who is allowed to mint money? Who decides which serial numbers are valid? 

\onote{Removed from extended abstract: We suspect that this is not an isolated example, but that smart contracts could find many other useful applications in quantum cryptography.}




\paragraph{Our contributions}  We design a hybrid classical-quantum payment system that uses quantum states as banknotes and a classical blockchain to settle disputes and to keep track of the valid serial numbers. This is, to the best of our knowledge, the first example of the use of a classical blockchain in combination with quantum cryptographic tools\onote{This is tricky. \cite{Jog16}? I suggest adding the words "provably secure". Though, I don't think we should cite him in the extended abstract.}. Our payment system has the following desirable features:
\begin{itemize}
\item[(i)] It is decentralized, requiring no trust in any single entity.
\item[(ii)] The quantum banknotes enjoy many of the properties of cash, that crypto-currencies do not have. For example, the transaction throughput is unbounded, payments are as quick as quantum communication, and payments do not incur transaction fees.
\item[(iii)] The rightful owner of a quantum banknote can recover the original value, even if the quantum banknote is damaged or lost; An adversary who would try to abuse this feature would be penalized financially. 
\end{itemize}

Our contribution is primarily conceptual, but our treatment is formal. 
We model a blockchain using the Generalized Universal Composability framework (GUC) of Canetti et al. \cite{canetti2007global}, and we formulate an ideal functionality for a blockchain that handles smart contracts. The main reason why such an approach is desirable is that it abstracts the features of blockchains and smart contracts into building blocks that can be utilized to design more complex protocols in a modular way. We emphasize that the security notion that we prove for our payment system in the GUC setting (with access to the ideal smart contracts functionality) is ``one-shot'' and not composable (the latter would be a significantly more ambitious project which we leave as future work).

As a further technical contribution, in order to achieve part (iii), we develop a novel implementation of a one-time digital signature scheme, based on properties of quantum lightning and inspired by Lamport's signature scheme, in which only the user holding the quantum lightning state can sign a message. 


In the full version, we provide two constructions of our payment system: the first is fully formal and works in the idealized GUC setting with access to an ideal functionality for smart contracts. The second is a construction on the Bitcoin blockchain. Its treatment is less rigorous, but it addresses some of the limitations of the idealized setting.

\ifnum\draft=1 Other works \cite{garay2015bitcoin, pass2017analysis} have operated within a similar framework, and have focused on analyzing security of the Bitcoin protocol. 


\fi
\ifnum\draft=1 Other approaches that do not fall into the GUC framework include Other approaches include \cite{kosba2016hawk} and \cite{dziembowski2018fairswap}. 
\fi

We use the framework of universal smart-contracts as an abstraction layer which significantly simplifies exposition and the security analysis.
However, using this framework has a few disadvantages. 
(i) To the best of our knowledge, there is no known proof of a secure realization of an ideal functionality for a ledger supporting smart-contracts by any crypto-currency. Proving that a complicated system such as Bitcoin or Ethereum GUC-securely realize a proposed ledger functionality seems almost impossible.
(ii) Our GUC framework is an idealized one. For example, it is not clear how would one implement the global setup in a system such as Bitcoin; The users are assumed to be online all the time, which is not always the case. 

We provide a more detailed construction which could be realized in a system such as Bitcoin.  For example, we discuss in detail how to implement the global setup, and certain optimizations to save space on the block-chain. This also highlights the fact that we do not need a universal smart contract system, such as Ethereum to implement our construction. This construction might also be easier to follow for those that are more familiar with Bitcoin, but not experts in the UC framework. The main drawback of this construction is that the security arguments in this construction are less rigorous than the ones used in the GUC framework.

\paragraph{A sketch of our payment system}

The main ingredient that we employ is quantum lightning. As suggested by Zhandry, this primitive seems well-suited for designing a decentralized payment system, as it is infeasible for anyone to copy banknotes (even a hypothetical bank). However, since the generation procedure is publicly known, there needs to be a mechanism that regulates the generation of new valid banknotes
. 

We show how stateful smart contracts on a classical blockchain can be used to provide such a mechanism. For instance, they allow to easily keep track of a publicly trusted list of valid serial numbers. We elaborate on this. In our payment system, the native coin of a classical blockchain is used as a baseline classical currency. Any party can spend coins on the classical blockchain to add a serial number of their choice to the list of valid serial numbers. More precisely, they can deposit any amount (of their choice) $d$ of coins into an appropriately specified smart contract and set the initial value of a \textit{serial number} state variable to whatever they wish (presumably the serial number of a quantum banknote that they have just generated locally). They have thus effectively added to the blockchain a serial number associated to a quantum banknote that, in virtue of this, we can think of as having ``acquired'' value $d$. Payments are made by transferring quantum banknotes: party $A$, the payer, transfers his quantum banknote with serial number $s$ to party $B$, the payee, and references a smart contract whose serial number state variable is $s$. Party $B$ then locally verifies that the received banknote is valid. The appeal of such a payment protocol is that transactions only involve the payer and the payee: no consensus mechanism is required to validate the payment, and the payment need not be recorded on the blockchain. The latter is only invoked when a banknote is first generated, and in case of a dispute. Thus, throughput of transactions is no longer a concern. Likewise, long waiting times between when a payment is initiated and when it is confirmed are also no longer a concern since payments are verified immediately by the payee. One can think of the quantum banknotes as providing an off-chain layer that allows for virtually unlimited throughput.

The full payment system includes a mechanism that allows any party in possession of a quantum banknote to recover the coins deposited in the corresponding smart contract (this makes quantum banknotes and the currency on the blockchain in some sense interchangeable). It also includes a mechanism that allows any honest party who has lost or damaged a valid quantum banknote to recover them.

The latter two mechanisms are more tricky to design. In order to achieve these, we formalize an additional property of a quantum lightning scheme, which we call ``banknote-to-certificate capability'', and show that Zhandry's proposed constructions satisfy this property. This asserts informally that it is possible to measure a valid quantum banknote and obtain a classical certificate, but it is impossible to simultaneously hold both a valid banknote and a valid certificate. In other words, the classical certificate acts as a ``proof of destruction'': a certificate which guarantees that no quantum lightning state with the corresponding serial number exists. '' We also introduce a novel implementation of a one-time signature scheme based on the ``banknote-to-certificate capability'', and inspired by Lamport's signature scheme. This protects honest parties who broadcast a classical certificate to the network from having their classical certificate stolen before it is registered on the blockchain. We refer to the full version of the paper for more detail.

\paragraph{Limitations of our solution}
We see two main limitations to our payment system:
\begin{enumerate*}
    \item 
    The technologies will not be available in the foreseeable future.
    \item The constructions known for quantum lightning, which our construction is based upon, rely on non-standard assumptions. 
    \onote{Originally, it was: "Our construction is based on the existence of a secure quantum lightning scheme. There is only one known concrete construction for quantum lightning whose security hinges on a non-standard computational assumption about the multi-collision resistance of certain degree-2 hash functions." But what about the construction of Farhi et al.? See \cref{sec:farhi_et_al}. I changed it to the slightly more concise version above.}
\end{enumerate*}

\bibliographystyle{alphaabbrurldoieprint}
\bibliography{quantum_money_solution_blockchain_scalability}

\else
  \abstract{We put forward the idea that classical blockchains and smart contracts are potentially useful primitives not only for classical cryptography, but for quantum cryptography as well. Abstractly, a smart contract is a functionality that allows parties to deposit funds, and release them upon fulfillment of algorithmically checkable conditions, and can thus be employed as a formal tool to enforce monetary incentives.

In this work, we give the first example of the use of smart contracts in a quantum setting.
We describe a simple hybrid classical-quantum payment system whose main ingredients are a classical blockchain capable of handling stateful smart contracts, and quantum lightning, a strengthening of public-key quantum money introduced by Zhandry \cite{zha19}. Our hybrid payment system employs quantum states as banknotes and a classical blockchain to settle disputes and to keep track of the valid serial numbers. It has several desirable properties: it is decentralized, requiring no trust in any single entity; payments are as quick as quantum communication, regardless of the total number of users; when a quantum banknote is damaged or lost, the rightful owner can recover the lost value. 
}

\vspace{2mm}
\noindent\textbf{Note:} This work supersedes two previous independent works, \cite{Col19} and \cite{Or2019}.

\tableofcontents

\newpage

\section{Introduction}

Cryptocurrencies, along with blockchains and smart contracts, have recently risen to popular attention, the most well-known examples being Bitcoin and Ethereum \cite{Nak08,But14}. 
Informally, a blockchain is a public ledger consisting of a sequence of blocks. Each block typically contains information about a set of transactions, and a new block is appended regularly via a consensus mechanism that involves the parties of a network and no a priori trusted authority. A blockchain is endowed with a native currency which is employed in transactions, and whose basic unit is a ``coin''. The simplest type of transaction is a \textit{payment}, which transfers coins from one party to another. However, more general transactions are allowed, which are known as \textit{smart contracts}. These can be thought of as contracts stored on a blockchain, and whose consequences are executed upon fulfilment of algorithmically checkable conditions.


A central issue that needs to be resolved for blockchains to achieve mass-adoption is \textit{scalability}. This refers to the problem of increasing the throughput of transactions (i.e. transactions per second) while maintaining the resources needed for a party to participate in the consensus mechanism approximately constant, and while maintaining security against adversaries that can corrupt constant fractions of the parties in the network. For example, Bitcoin and Ethereum, can currently handle only on the order of $10$ transactions per second (Visa for a comparison handles about $3500$ per second)\onote{Source? The following source \url{https://usa.visa.com/run-your-business/small-business-tools/retail.html} suggests ~1700 transactions per second, and supports up to 24,000 tps. It seems that this source is rather old ~2010.}. 

In this work, we show that quantum information is inherently well-suited to tackle this problem. We show that a classical blockchain can be leveraged using tools from quantum cryptography (in particular, quantum money) to provide a simple solution to the scalability problem.\footnote{We clarify that this solution only solves the scalability problem for \textit{payment} transactions, and not for the more general \textit{smart contracts} transactions.}

The main quantum ingredient that we employ is a primitive called \textit{quantum lightning}, formally introduced by Zhandry \cite{zha19} and inspired by Lutomirski et al.'s notion of collision resistant quantum money \cite{lutomirski2009breaking}. In a public-key quantum money scheme, a bank is entrusted with generating quantum states (we refer to these as quantum banknotes) with an associated serial number, and a public verification procedure allows anyone in possession of the banknote to check its validity. Importantly, trust is placed in the fact that the central bank will not create multiple quantum banknotes with the same serial number.
A quantum lightning scheme has the additional feature that no generation procedure, not even the honest one (and hence not even a bank!), can produce two valid money states with the same serial number, except with negligible probability. This opens to the possibility of having a completely decentralized quantum money scheme. However, if the system ought to be trust-less, some issues need to be addressed: most importantly, who is allowed to mint money? Who decides which serial numbers are valid? Our solution leverages a (classical) blockchain to address these questions. 

\paragraph{Our contributions} We design a hybrid classical-quantum payment system that uses quantum states as banknotes and a classical blockchain to settle disputes and to keep track of the valid serial numbers. This is, to the best of our knowledge, the first example of the use of a classical blockchain in combination with quantum cryptographic tools\onote{This is tricky. \cite{Jog16}? I suggest adding the words "provably secure". Though, I don't think we should cite him in the extended abstract.}. Our payment system has the following desirable features:
\begin{itemize}
\item[(i)] It is decentralized, requiring no trust in any single entity.
\item[(ii)] Payments involve the exchange of quantum banknotes, and enjoy many of the properties of cash, which cryptocurrencies do not have. For example, transactions are not recorded on the blockchain, and they involve only the payer and the payee. Thus the throughput is unbounded. Payments are as quick as quantum communication, and they do not incur transaction fees.
\item[(iii)] The rightful owner of a quantum banknote can recover the original value, even if the quantum banknote is damaged or lost; an adversary who tried to abuse this feature would be penalized financially. 
\end{itemize}

Our contribution is primarily conceptual, but our treatment is formal: we work within the Generalized Universal Composability framework (GUC) of Canetti et al. \cite{canetti2007global}, and we formulate an ideal functionality for a blockchain that supports smart contracts. Note that we do not prove composable security of our payment system. Instead, we prove a "one-shot" version of security, assuming parties have access to such an ideal functionality. Nonetheless, we find it desirable to work within the GUC framework, and we discuss the reason for this choice in more detail below. We also provide an informal construction of our payment system on the Bitcoin blockchain. Its treatment is less rigorous, but it addresses some ways in which the payment system could be optimized.

As a further technical contribution, in order to achieve part (iii), we formalize a novel property of quantum lightning schemes, which we call ``bolt-to-signature'' capability and is reminiscent of one-time digital signatures. We provide a provably secure construction of a lightning scheme with such a property (assuming a secure lightning scheme which might not satisfy this). The construction is based on the hash-and-sign paradigm as well as Lamport signatures scheme. It allows a user holding a valid quantum lightning state with serial number $s$ to sign a message $\alpha$ \textit{with respect to $s$}. The security guarantees are that no one who does not possess a lightning state with serial number $s$ can forge signatures \textit{with respect to $s$}, and once the lightning state is utilized to produce even a single signature, it will no longer pass the lightning verification procedure. We envision that such a primitive could find applications elsewhere and is of independent interest. 

\paragraph{How to model a blockchain?}
\onote{I copy/pasted the "how to model a blockchain" paragraph, and made lots of changes to it - see below.}

The essential properties of blockchains and smart contracts can be abstracted by modeling them as ideal functionalities in the Universal Composability (UC) framework of Canetti \cite{canetti2001universally}. Such a framework provides both a formal model for multiparty computation, and formal notions of security with strong composability properties. The approach of studying blockchains and smart contracts within such a framework was first proposed by Bentov al. in \cite{bentov2014use}, and explored further in \cite{bentov2017instantaneous}. The main reason why such an approach is desirable is that it abstracts the features of blockchains and smart contracts into building blocks that can be utilized to design more complex protocols in a modular way. The limitation of the works \cite{bentov2014use,bentov2017instantaneous} is that they modify the original model of computation of UC in order to incorporate coins, but they do not prove that a composition theorem holds in this variant. A more natural approach was proposed by Kiayias et al. in~\cite{kiayias2016fair}, which uses the Generalized Universal Composability (GUC) framework by Canetti et al.~\cite{canetti2007global}. 

In this work, we define a ledger functionality in the GUC framework which supports universal smart contracts. We use this as an abstraction layer which significantly simplifies exposition and the security analysis\mpar{\label{mar:GUC_simplifies}}. The downside of this approach is that, to the best of our knowledge, there is no known proof of a GUC-secure realization of an ideal functionality for a ledger supporting smart contracts by any cryptocurrency.\onote{What's the difference between GUC and UC? Why UC is not sufficient for our implication?} Proving that a complicated system such as Bitcoin or Ethereum GUC-securely realizes a simple ledger functionality, even without the support for smart contracts, requires already substantial work \cite{badertscher2017bitcoin}. The upside is that, as soon as one provides a GUC secure realization on some cryptocurrency of our ideal ledger functionality, one can replace the latter in our payment system with its real world implementation\mpar{\label{mar:GUC_could_be_used}}, and the security of our payment system would hold verbatim, by the composition properties of the GUC framework. For this reason, we find the approach very desirable.  Other approaches that do not fall into the GUC framework include \cite{kosba2016hawk} and \cite{dziembowski2018fairswap}.  

We emphasize that we do not prove that our payment system can be composed securely,\mpar{\label{mar:not_composable}} i.e. we do not define an ideal functionality for our payment system and prove a secure realization of it. Rather, the security we prove is only "one-shot". 

\onote{If there exists a real-world GUC realization, then our payment system could be realized as well, and be secure. On the other hand, we do not model our payment system using ideal functionalities, and in particular, it cannot be used in other protocols.}
\ifnum\draft=1 Other works \cite{garay2015bitcoin, pass2017analysis} have operated within a similar framework, and have focused on analyzing security of the Bitcoin protocol. 

Finally, we remark that the security notion that we prove for our payment system in the GUC setting (with access to the ideal smart contracts functionality) is ``one-shot'' and not composable (the latter would be a significantly more ambitious project which we leave as future work).

\fi
\ifnum\draft=1 Other approaches that do not fall into the GUC framework include Other approaches include \cite{kosba2016hawk} and \cite{dziembowski2018fairswap}. 
\fi


\paragraph{A sketch of our payment system}

The main ingredient that we employ is quantum lightning. As suggested by Zhandry, this primitive seems well-suited for designing a decentralized payment system, as it is infeasible for anyone to copy banknotes (even a hypothetical bank). However, since the generation procedure is publicly known, there needs to be a mechanism that regulates the generation of new valid banknotes (to prevent parties from continuously generating banknotes). 

We show how stateful smart contracts on a classical blockchain can be used to provide such a mechanism. For instance, they allow to easily keep track of a publicly trusted list of valid serial numbers. We elaborate on this. In our payment system, the native coin of a classical blockchain is used as a baseline classical currency. Any party can spend coins on the classical blockchain to add a serial number of their choice to the list of valid serial numbers. More precisely, they can deposit any amount (of their choice) $d$ of coins into an appropriately specified smart contract and set the initial value of a \textit{serial number} state variable to whatever they wish (presumably the serial number of a quantum banknote that they have just generated locally). They have thus effectively added to the blockchain a serial number associated to a quantum banknote that, in virtue of this, we can think of as having ``acquired'' value $d$. Payments are made by transferring quantum banknotes: party $A$, the payer, transfers his quantum banknote with serial number $s$ to party $B$, the payee, and references a smart contract whose serial number state variable is $s$. Party $B$ then \emph{locally} verifies that the received quantum banknote is valid. This completes the transaction. Notice that this does not involve interaction with any other third party, and only requires read access to the blockchain. Of course, the same quantum banknote can be successively spent an unlimited number of times in the same manner, without ever posting any new transaction on the blockchain. The latter is only invoked when a banknote is first generated, and in case of a dispute. Thus, throughput of transactions is no longer a concern. Likewise, long waiting times between when a payment is initiated and when it is confirmed (which are typically in the order of minutes -- for example, it is recommended to accept a Bitcoin transaction after 6 confirmations, which takes an hour in expectation) are also no longer a concern since payments are verified immediately by the payee. One can think of the quantum banknotes as providing an off-chain layer that allows for virtually unlimited throughput. We give an extended comparison of our payment system with other off-chain solutions, like Bitcoin's Lightning Network, in Section \ref{sec: comparison}. 

The full payment system includes two additional optional features\mpar{}:
\begin{itemize}
    \item[(i)] A mechanism that allows any party in possession of a quantum banknote to recover the coins deposited in the corresponding smart contract. Together with the mechanism outlined earlier, this makes quantum banknotes and coins on the blockchain in some sense interchangeable: one can always convert one into the other by publishing a single transaction on the blockchain.
    \item[(ii)] A mechanism that allows any honest party who has lost or damaged a valid quantum banknote to change the \textit{serial number} state variable of the corresponding smart contract to a fresh value of their choice. 
\end{itemize}

These two additional desirable features are realizable as long as the quantum lightning scheme employed satisfies an additional property. We formalize this property, which we call "bolt-to-certificate capability", and show that Zhandry's proposed constructions satisfy this property.
The latter asserts informally that it is possible to measure a valid quantum banknote and obtain a classical certificate, but it is impossible to simultaneously hold both a valid banknote and a valid certificate. In other words, the classical certificate acts as a ``proof of destruction'': a certificate which guarantees that no quantum lightning state with the corresponding serial number exists. We also introduce a novel implementation of a one-time signature scheme based on the ``bolt-to-certificate capability'', and inspired by Lamport's signature scheme. This protects honest parties who broadcast a classical certificate to the network from having their classical certificate stolen before it is registered on the blockchain. 

We emphasize that the bolt-to-certificate capability of the lightning scheme is only required in order to achieve the two additional features (i) and (ii) above, but is not necessary in order to achieve the basic functionality of our payment system described above.

\paragraph{Limitations of our solution}
We see two main limitations to our payment system:
\begin{itemize}
    \item 
    The technologies needed to implement our payment system will not be available in the foreseeable future. For instance, our payment system would require the ability of each party to store large entangled quantum states for extended periods of time. It would also require that each party be able to send quantum states to any party it wishes to make payments to. The latter could be achieved, for example, in a network in which any two parties have the ability to request joint EPR pairs (i.e. maximally entangled pairs of qubits), which is one of the primary components of a ``quantum internet'' \cite{wehner2018quantum}. One can view our scheme as a possible use case of a quantum internet.
    \item The only known concrete constructions of a quantum lightning scheme rely on non-standard assumptions for security. The construction of Farhi et al. \cite{farhi2012quantum} contains no security proof. The construction of Zhandry \cite{zha19} is secure based on an assumption related to the multi-collision resistance of certain degree-2 hash functions. Arguably, neither of these assumptions is well-studied.
    \onote{Originally, it was: "Our construction is based on the existence of a secure quantum lightning scheme. There is only one known concrete construction for quantum lightning whose security hinges on a non-standard computational assumption about the multi-collision resistance of certain degree-2 hash functions." But what about the construction of Farhi et al.? See \cref{sec:farhi_et_al}. I changed it to the slightly more concise version above.}
\end{itemize}
In \cref{sec: comparison}, we extensively discuss the disadvantages (as well as the advantages) of our payment system vis-à-vis with other classical alternatives -- namely, a standard crypto-currency such as Bitcoin and second layer solutions such as the Lightning Network.

\paragraph{Outline} Section \ref{sec: prelim} covers preliminaries: \ref{sec: 2-1} covers basic notation; \ref{sec: lightning} introduces quantum lightning; \ref{sec: uc} gives a concise overview of the Universal Composability framework of Canetti \cite{canetti2001universally}. Section \ref{sec: blockchains} gives first an informal description of blockchains and smart contracts, followed by a formal definition of our global ideal functionality for a transaction ledger that handles stateful smart contracts. Section \ref{sec: main} describes our payment system. In Section \ref{sec: security}, we describe an adversarial model, and then prove security guarantees with respect to it. 

\section{Preliminaries} 
\label{sec: prelim}

\subsection{Notation}
\label{sec: 2-1}
For a function $f: \mathbb{N} \rightarrow \mathbb{R}$, we say that $f$ is \textit{negligible}, and we write $f(n) = negl(n)$, if for any positive polynomial $p(n)$ and all sufficiently large $n$'s, $f(n) < \frac{1}{p(n)}$. A \textit{binary} random variable is a random variable over $\{0,1\}$. We say that two ensembles of binary random variables $\{X_n\}$ and $\{Y_n\}$ are \textit{indistinguishable} if, 
$$\left| \, \Pr[X_n = 1] - \Pr[Y_n = 1]\,\right| = negl(n).$$ We use the terms PPT and QPT as abbreviations of probabilistic polynomial time and quantum polynomial time respectively.


\subsection{Quantum money and quantum lightning}
\label{sec: lightning}
Quantum money is a theoretical form of payment first proposed by Wiesner \cite{Wie83}, which replaces physical banknotes with quantum states. In essence, a quantum money scheme consists of a generation procedure, which mints banknotes, and a verification procedure, which verifies the validity of minted banknotes. A banknote consists of a quantum state together with an associated serial number. The appeal of quantum money comes primarily from a fundamental theorem in quantum theory, the No-Cloning theorem, which informally states that there does not exist a quantum operation that can clone arbitrary states. A second appealing property of quantum money, which is not celebrated nearly as much as the first, is that quantum money can be transferred almost instantaneously (by quantum teleportation for example). The first proposals for quantum money schemes required a central bank to carry out both the generation and verification procedures. The idea of public key quantum money was later formalized by Aaronson \cite{aaronson2009quantum}. In public-key quantum money, the verification procedure is public, meaning that anyone with access to a quantum banknote can verify its validity.

In this section, we focus on quantum lightning, a primitive recently proposed by Zhandry \cite{zha19},  and we enhance this to a decentralized quantum payment system. Informally, a quantum lightning scheme is a strengthening of public-key quantum money. It consists of a public generation procedure and a public verification procedure which satisfy the following two properties:
\begin{itemize}
\item Any quantum banknote generated by the honest generation procedure is accepted with probability negligibly close to $1$ by the verification procedure.
\item No adversarial generation procedure (not even the honest one) can generate two banknotes with the same serial number which both pass the verification procedure with non-negligible probability.
\end{itemize}

As mentioned earlier, there is only one known construction of quantum lightning, by Zhandry \cite{zha19}, who gives a construction which is secure under a computational assumption related to the multi-collision resistance of some degree-2 hash function. Zhandry also proves that any non-collapsing hash function can be used to construct quantum lightning. However, to the best of our knowledge, there are no known hash functions that are proven to be non-collapsing. In this section, we define quantum lightning formally, but we do not discuss any possible construction. Rather, in Section \ref{sec: main}, we will use quantum lightning as an off-the-shelf primitive.
\begin{definition}[Quantum lightning \cite{zha19}]
\label{def:quantum_lightning_completeness}
A quantum lightning scheme consists of a PPT algorithm $\textsf{QL.Setup}(1^\lambda)$ (where $\lambda$ is a security parameter) which samples a pair of polynomial-time quantum algorithms $(\textsf{gen-bolt}$, $\textsf{verify-bolt})$. $\textsf{gen-bolt}$ outputs pairs of the form $\left(\ket{\psi} \in \mathcal{H}_{\lambda}, s \in \{0,1\}^{\lambda}\right)$. We refer to $\ket{\psi}$ as a ``bolt'' and to $s$ as a ``serial number''. $\textsf{ver-bolt}$ takes as input a pair of the same form, and outputs either ``accept'' (1) or ``reject'' (0) (together with a post-measurement state). They satisfy the following:
\begin{itemize}
\item \begin{align}
    \Pr[\textsf{verify-bolt}(\ket{\psi},s) &=1: (\ket{\psi},s) \leftarrow \textsf{gen-bolt}, (\textsf{gen-bolt}, \textsf{verify-bolt}) \leftarrow \textsf{QL.Setup}] \\ &= 1-negl(\lambda)
\end{align}
\item For a state $\ket{\psi}$ let $\mathcal{M}_{\ket\psi}$ be the two outcome measurement which accepts $\ket{\psi}$ and rejects all states orthogonal to it.
\begin{align}
    \Pr[\mathcal M_{\ket{\psi}}(\ket{\psi'})=1&:(b,\ket{\psi'})\leftarrow \textsf{verify-bolt}(\ket{\psi},s), (\ket{\psi},s) \leftarrow \textsf{gen-bolt},  \\& (\textsf{gen-bolt}, \textsf{verify-bolt}) \leftarrow \textsf{QL.Setup}(\lambda)]= 1-negl(\lambda)
\end{align}
Here $\ket{\psi'}$ is the post-measurement state upon running $\textsf{verify-bolt}(\ket{\psi},s)$.
\item For all $s' \in \{0,1\}^{\lambda}$,
\begin{align}
\Pr[\textsf{verify-bolt}(\ket{\psi},s') =&1 \,\, \land \,\, s' \neq s: ((\ket{\psi},s) \leftarrow \textsf{gen-bolt}, (\textsf{gen-bolt}, \textsf{verify-bolt}) \leftarrow \textsf{QL.Setup}] \\&= negl(\lambda) 
\end{align}
\end{itemize}
\end{definition}
\mpar{\label{mar:no_perturbation}}
The three requirements simply ask that, with overwhelming probability, for any validly generated bolt $\ket{\psi}$, there is a single serial number $s$ such that $(\ket{\psi}, s)$ is accepted by the verification procedure, and that the verification does not perturb the bolt, except negligibly.

For security, we require that no adversarial generation procedure can produce two bolts with the same serial number. Formally, we define security via the following game between a challenger and an adversary $\mathcal{A}$.

\begin{itemize}
    \item The challenger runs $(\textsf{gen-bolt}, \textsf{verify-bolt}) \leftarrow \textsf{QL.Setup}(\lambda)$ and sends $(\textsf{gen-bolt}, \textsf{verify-bolt})$ to $\mathcal{A}$.
    \item $\mathcal{A}$ produces a pair $\ket{\Psi_{12}} \in \mathcal{H}_{\lambda}^{\otimes 2}, s \in \{0,1\}^{\lambda}$.
    \item The challenger runs $\textsf{verify-bolt}(\cdot,s)$ on each half of $\ket{\Psi_{12}}$. The output of the game is $1$ if both outcomes are ``accept''.
\end{itemize}

We let $\textsf{Counterfeit}(\lambda, \mathcal{A})$ be the random variable which denotes the output of the game.

\begin{definition}[Security \cite{zha19}]
\label{def: lightning basic security}
A quantum lightning scheme is secure if, for all polynomial-time quantum adversaries $\mathcal{A}$, $$ \Pr[\textsf{Counterfeit}(\lambda, \mathcal{A}) = 1 ] = \textnormal{negl}(\lambda).$$
\end{definition}

We define an additional property of a quantum lightning scheme, which in essence establishes that one can trade a quantum banknote for some useful classical certificate. Intuitively, this is meant to capture the fact that in the proposed construction of quantum lightning by Zhandry, one can measure a bolt with serial number $y$ in the computational basis to obtain a pre-image of $y$ under some hash function. However, doing so damages the bolt so that it will no longer pass verification. In order to define this additional property, we change the procedure $\textsf{QL.Setup}(1^{\lambda})$ slightly, so that it outputs a tuple $(\textsf{gen-bolt}, \textsf{verify-bolt}, \textsf{gen-certificate}, \textsf{verify-certificate})$, where $\textsf{gen-certificate}$ is a QPT algorithm that takes as input a quantum money state and a serial number and outputs a classical string of some fixed length $l(\lambda)$ for some polynomially bounded function $l$, which we refer to as a \textit{certificate}, and $\textsf{verify-certificate}$ is a PPT algorithm which takes as input a serial number and a certificate, and outputs ``accept'' ($1$) or ``reject'' ($0$). The additional property is defined based on the following game $\textsf{Forge-certificate}$ between a challenger and an adversary $\mathcal{A}$:

\begin{itemize}
    \item The challenger runs $(\textsf{gen-bolt}, \textsf{verify-bolt}, \textsf{gen-certificate}, \textsf{verify-certificate}) \leftarrow \textsf{QL.Setup}(1^\lambda)$ and sends the tuple to $\mathcal{A}$.
    \item $\mathcal{A}$ returns $c \in \{0,1\}^{l(\lambda)}$ and $(\ket{\psi}, s)$.
    \item The challenger runs $\textsf{verify-certificate}(s,c)$ and  $\textsf{verify-bolt}(\ket{\psi},s)$. Outputs $1$ if they both accept.
\end{itemize}

Let $\textsf{Forge-certificate}(\mathcal{A}, \lambda)$ be the random variable for the output of the challenger in the game above.

\begin{definition}[Trading the bolt for a classical certificate]
\label{def: extra property}
Let $\lambda \in \mathbb{N}$. We say that a quantum lightning scheme has ``bolt-to-certificate capability'' if:

\begin{itemize}
\item[(I)]
\begin{align}
\Pr[\textsf{verify-certificate}(s,c) =1: &(\textsf{gen-bolt}, \textsf{verify-bolt}, \textsf{gen-certificate}, \textsf{verify-certificate}) \leftarrow \textsf{QL.Setup}(1^{\lambda}), \\
&(\ket{\psi}, s) \leftarrow \textsf{gen-bolt}, \\ & c \leftarrow \textsf{gen-certificate}(\ket{\psi}, s)] = 1-negl(\lambda)
\end{align}
\item[(II)] For all polynomial-time quantum algorithms $\mathcal{A}$, $$ \Pr[ \textsf{Forge-certificate}(\mathcal{A}, \lambda)= 1 ] = negl(\lambda).$$
\end{itemize}
\end{definition}

Notice that property $(II)$ also implies that for most setups and serial numbers $s$ it is hard for any adversary to find a $c$ such that $\textsf{verify-certificate}(s,c) = 1$ without access to a valid state whose serial number is $s$. In fact, if there was an adversary $\mathcal{A}$ which succeeded at that, this could clearly be used to to construct an adversary $\mathcal{A'}$ that succeeds in $\textsf{Forge-certificate}$: Upon receiving $(\textsf{gen-bolt}, \textsf{verify-bolt}, \textsf{gen-certificate}, \textsf{verify-certificate})$ from the challenger, $\mathcal{A}'$ computes $(\ket{\psi},s) \leftarrow \textsf{gen-bolt}$; then runs $\mathcal{A}$ on input $s$ to obtain some $c$. $\mathcal{A}'$ returns $c$,$\ket{\psi}$ and $s$ to the challenger. We emphasize that in game $\textsf{Forge-certificate}$ it is the adversary himself who generates the state $\ket{\psi}$. This is important because when we employ the quantum lightning scheme later on parties are allowed to generate their own quantum banknotes.

\begin{prop}
\label{prop: extra property 1}
Any scheme that uses Zhandry's construction instantiated with a non-collapsing hash function \cite{zha19} satisfies the property of Definition \ref{def: extra property}. 
\end{prop}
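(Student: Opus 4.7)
The plan is to exhibit the natural bolt-to-certificate procedures for Zhandry's construction and reduce property (II) directly to the no-collision property on which the security of the scheme is based. Recall that in Zhandry's construction a valid bolt with serial number $s$ is (close to) a uniform superposition over the pre-images of $s$ under a suitable non-collapsing hash function $H$. I would therefore define
\begin{align*}
&\textsf{gen-certificate}(\ket{\psi},s): \ \text{measure } \ket{\psi} \text{ in the computational basis, output the result } c,\\
&\textsf{verify-certificate}(s,c): \ \text{accept iff } H(c)=s.
\end{align*}

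Property (I) (completeness) then follows immediately from the structure of bolts: an honestly generated bolt is (negligibly close to) a superposition supported on $H^{-1}(s)$, so the measurement outcome $c$ satisfies $H(c)=s$ with probability $1-\mathrm{negl}(\lambda)$, and $\textsf{verify-certificate}(s,c)$ accepts.

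For property (II), I would argue by contradiction. Suppose $\mathcal{A}$ wins $\textsf{Forge-certificate}$ with non-negligible probability $\varepsilon$. Then $\mathcal{A}$ outputs $(c,\ket{\psi},s)$ such that $H(c)=s$ and $\textsf{verify-bolt}(\ket{\psi},s)$ accepts. I would build a reduction $\mathcal{B}$ to the lightning security (or, more directly, to the underlying no-cloning / multi-collision resistance property invoked in Zhandry's construction): $\mathcal{B}$ runs $\mathcal{A}$, keeps the classical string $c$, and on the returned bolt $\ket{\psi}$ applies $\textsf{gen-certificate}$ to obtain a second string $c'$ with $H(c')=s$. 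The key step, which is the one the paper proposes to import verbatim from Zhandry's paper (the "two pre-images" claim referenced at \cref{mar:two_preimages}), is that with non-negligible probability $c'\neq c$: a valid bolt for $s$ projected onto the subspace orthogonal to any fixed pre-image of $s$ still has non-negligible weight, so conditioning on already possessing $c$ does not collapse the measurement outcome to $c$. This yields two distinct pre-images of the same $s$ under $H$ with non-negligible probability, contradicting the hardness assumption on which the quantum lightning scheme is proven secure.

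The main obstacle is precisely that last step: showing that measuring the adversarially produced bolt does not, with overwhelming probability, return the same $c$ that the adversary already committed to. Naively this could fail (e.g., if the bolt happens to be the computational basis state $\ket{c}$). The remedy is that, by the non-collapsing property of $H$ and the structure of Zhandry's bolts, any state accepted by $\textsf{verify-bolt}(\cdot,s)$ with non-negligible probability must have non-negligible overlap with a state supported on at least two distinct pre-images of $s$; otherwise one could turn such a "single pre-image bolt" into a distinguisher between the collapsing and non-collapsing behavior of $H$, contradicting non-collapsingness. I would therefore cite the corresponding lemma from \cite{zha19} as a black box to conclude that $\Pr[c'\neq c]$ is non-negligible, thereby completing the reduction.
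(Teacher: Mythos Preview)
Your proposal is correct and follows essentially the same approach as the paper: define $\textsf{gen-certificate}$ as a computational-basis measurement and $\textsf{verify-certificate}$ as a hash-preimage check, then reduce property~(II) to finding a collision for $H$ by measuring the adversary's bolt to extract a second preimage, invoking Zhandry's lemma that any state passing verification must have non-negligible weight on at least two preimages. The only cosmetic difference is that the paper spells out the $n$-fold tensor-product structure of Zhandry's bolts (serial number $s=(y_1,\dots,y_n)$, certificate $c=(x_1,\dots,x_n)$ with $H(x_i)=y_i$), whereas you treat the bolt as a single register; this does not affect the argument.
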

\begin{proof}
Zhandry already proved that his scheme satisfies \cref{def:quantum_lightning_completeness,def: lightning basic security}. As we will see, since our construction does not change $\textsf{QL.Setup},\  \textsf{gen-bolt}$, and $ \textsf{verify-bolt}$, we only need to prove that it satisfies \cref{def: extra property}. We refer the reader to $\cite{zha19}$ for a definition of non-collapsing hash function.
In Zhandry's construction based on a non-collapsing hash function, $\textsf{QL.Setup}(1^{\lambda})$ outputs $(\textsf{gen-bolt}, \textsf{verify-bolt})$. A bolt generated from \textsf{gen-bolt} has the form $\ket{\Psi} = \bigotimes_{i=1}^n \ket{\psi_{y_i}}$, where $y_i \in \{0,1\}^{\lambda}$ for all $i$ and $\ket{\psi_{y_i}} = \sum_{x: H(x) = y_i} \ket{x}$, where $H$ is a non-collapsing hash-function, and $n \in \mathbb{N}$ is polynomial in the security parameter $\lambda$. \textsf{verify-bolt} has the form of $n$-fold product of a verification procedure ``$\textnormal{Mini-Ver}$'' which acts on a single one of the $n$ registers, though it is not crucial to understand how $\textnormal{Mini-Ver}$ for this work. 
In Zhandry's construction, the serial number associated to the bolt $\ket{\Psi}$ above is $s = (y_1,\ldots,y_n)$.

We define $\gencertificate$ as the QPT algorithm that measures $\ket{\psi}$ in the standard basis, and outputs the outcome. When applied to an honestly generated bolt, the outcomes are pre-images $x_i$ of $y_i$, for $i=1,..,n$. We define $\verifycertificate$ as the deterministic algorithm which receives a serial number $s=(y_1,\ldots,y_n)$ and a certificate $c=(x_1,\ldots,x_n)$ and checks that for all $i\in [n]$, $H(x_i)=y_i$.

Is is clear that $(I)$ holds. 

For property $(II)$, suppose there exists $\mathcal{A}$ such that $\Pr[\textsf{Forge-certificate}(\mathcal{A}, \lambda)= 1]$ is non-negligible. We use $\mathcal{A}$ to construct an adversary $\mathcal{A}'$ that breaks collision-resistance of a non-collapsing hash function as follows: $\mathcal{A}'$ runs $(\textsf{gen-bolt}, \textsf{verify-bolt}) \leftarrow \textsf{QL.Setup}(1^\lambda)$. Let $H$ be the non-collapsing hash function hard-coded in the description of \textsf{verify-bolt}. $\mathcal{A'}$ sends the tuple $(\textsf{gen-bolt}, \textsf{verify-bolt}, \textsf{gen-certificate}, \textsf{verify-certificate})$ to $\mathcal{A}$, where the latter two are defined as above in terms of $H$. $\mathcal{A}$ returns $(c, \ket{\psi})$, where $c$ is parsed as $c = (x_1,..,x_n)$.
 $\mathcal{A}'$ then measures each of the $n$ registers of $\ket{\psi}$ to get $c' = (x'_1,..,x'_n)$. If $x_i \neq x_i'$, then $\mathcal{A}'$ outputs $(x_i, x_i')$. We claim that with non-neglibile probability $\mathcal{A'}$ outputs a collision for $H$. To see this, notice that since $\mathcal{A}$ wins $\textsf{Forge-certificate}$ with non-negligible probability, then $\ket{\psi}$ must pass \textsf{verify-bolt} with non-negligible probability; and from the analysis of \cite{zha19}, any such state must be such that at least one of the registers is in a superposition which has non-negligible weight on at least two pre-images. For more details, see the proof of Theorem 5 in \cite{zha19}, and specifically the following claim:\mpar{\label{mar:two_preimages}}
 \begin{quote}
     If the bolts are measured, two different pre-images of the same y, and hence
a collision for $H^{\tensor r}$, will be obtained with probability at least 1/200.
 \end{quote}
\end{proof}

\begin{prop}
\label{prop: extra property 2}
Zhandry's construction based on the multi-collision resistance of certain degree-2 hash functions (from section 6 of \cite{zha19}) satisfies the property of Definition \ref{def: extra property}. 
\end{prop}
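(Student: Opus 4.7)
The plan is to mirror the proof of Proposition \ref{prop: extra property 1}, replacing the non-collapsing hash function by the degree-2 hash function $H$ used in Section 6 of \cite{zha19}, and replacing the appeal to a generic collision argument by an appeal to the multi-collision resistance assumption that underlies that construction. Recall that in Zhandry's degree-2 construction a bolt has the form $\ket{\Psi_y}=\sum_{x:H(x)=y}\alpha_x\ket{x}$ (suitably weighted), the serial number is $y=H(x)$ for any $x$ in the support, and $\textsf{verify-bolt}$ essentially projects onto the span of such superpositions over preimages of $y$. In particular, measuring a valid bolt in the computational basis returns a preimage $x$ of $y$ with overwhelming probability.

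With this in mind, I would define $\gencertificate(\ket{\psi},s)$ as the algorithm that measures $\ket{\psi}$ in the computational basis and returns the outcome $x$, and $\verifycertificate(s,c)$ as the deterministic check $H(c)\stackrel{?}{=}s$. Property $(I)$ of Definition \ref{def: extra property} is then immediate from the structure of an honestly generated bolt $(\ket{\Psi_y},y)\leftarrow\textsf{gen-bolt}$: the measurement outcome is, with probability $1-\negl(\lambda)$, some preimage of $y$ under $H$, which by definition passes $\verifycertificate$.

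For property $(II)$, I would argue by contraposition: suppose $\mathcal{A}$ wins $\textsf{Forge-certificate}$ with non-negligible probability, producing $(c,\ket{\psi},s)$ such that $H(c)=s$ and $\textsf{verify-bolt}(\ket{\psi},s)$ accepts. I would use $\mathcal{A}$ to build $\mathcal{A}'$ that violates the multi-collision assumption from Section 6 of \cite{zha19}. Concretely, $\mathcal{A}'$ runs $\mathcal{A}$, records $c$, and then measures the accepted post-verification state in the computational basis to obtain $x'$; by the analysis in \cite{zha19}, a state accepted by $\textsf{verify-bolt}$ for serial number $s$ is, up to negligible error, supported on preimages of $s$ and in fact places non-negligible weight on a polynomial number of distinct preimages (this is precisely what makes the verification procedure meaningful in the degree-2 setting). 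Iterating the "gen-bolt--produce--measure" template (or simply invoking the multi-preimage extraction lemma from \cite{zha19}, analogous to the quoted claim used in Marker \ref{mar:two_preimages}) gives, together with $c$, enough distinct preimages of $s$ to constitute the kind of multi-collision ruled out by Zhandry's assumption.

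The main obstacle, as in Proposition \ref{prop: extra property 1}, is isolating the precise statement from \cite{zha19} that guarantees that a state accepted by $\textsf{verify-bolt}$ must encode several distinct preimages of the serial number, so that combining the classical certificate $c$ from $\mathcal{A}$ with measurement outcomes of $\ket{\psi}$ produces an actual violation of the multi-collision assumption (rather than merely a trivial "collision" that re-uses $c$ itself). I would resolve this by quoting the relevant lemma from Section 6 of \cite{zha19} verbatim (analogously to Marker \ref{mar:two_preimages}), and checking that the reduction succeeds with non-negligible probability by a union bound over the event that $\mathcal{A}$ wins $\textsf{Forge-certificate}$ and the event that the measurement outcome $x'$ differs from $c$.
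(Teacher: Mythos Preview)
Your overall strategy is right, but you have the structure of Zhandry's degree-2 bolt wrong, and this matters for the reduction. In that construction a \emph{mini}-bolt with serial number $y$ is not a single register carrying $\sum_{x:H(x)=y}\alpha_x\ket{x}$; it is a $(k+1)$-fold tensor power $\ket{\Psi}^{\otimes(k+1)}$ of such a superposition, and a full bolt is a tensor product of $n$ such mini-bolts. Consequently, the natural certificate is not a single preimage per serial number but a $(k+1)$-tuple $(z^{(i)}_1,\ldots,z^{(i)}_{k+1})$ of preimages of each $y_i$, and $\verifycertificate$ checks $H(z^{(i)}_j)=y_i$ for all $j$. The hardness assumption is $(2k+2)$-multi-collision resistance (while $(k+1)$-collisions are easy).

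This is exactly what makes the reduction go through in one shot: the adversary hands you a certificate containing $(k+1)$ preimages of each $y_i$ \emph{and} a state that passes verification; measuring that state in the computational basis yields a fresh $(k+1)$-tuple of preimages for each mini-bolt, and the high-min-entropy claim from Zhandry's Section~6.3 guarantees (with non-negligible probability) that these are distinct from the certificate's tuple, giving a $(2k+2)$-collision for some $i$. Your single-preimage certificate only yields two preimages total per serial number, which is not a violation of anything, and your proposed fixes do not work: you cannot ``iterate'' because you get one bolt and measuring destroys it, and there is no lemma in \cite{zha19} that extracts many preimages from a single register. The $(k+1)$-fold tensor structure of the mini-bolt is the mechanism that does this extraction for you, and you need to use it.
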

The proof is similar to the proof of Proposition \ref{prop: extra property 1}. We include it for completeness in Appendix \ref{sec: appendix}. 

\begin{fact}
Zhandry's construction from \cref{prop: extra property 2} only requires a common random string (CRS) for $\textsf{QL.Setup}$.
\label{fac: Zhandry requires CRS}
\end{fact}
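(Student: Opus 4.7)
The plan is to unpack the definition of $\textsf{QL.Setup}$ in Zhandry's construction from Section~6 of \cite{zha19} and observe that its only randomized step is the sampling of uniformly random parameters of a degree-2 hash function: concretely, uniformly random elements of a finite field $\mathbb{F}_q$ arranged into a short tuple of matrices whose dimensions are fixed polynomials in $\lambda$. First I would restate the explicit form of $\textsf{gen-bolt}$ and $\textsf{verify-bolt}$ as given in \cite{zha19}, together with the algorithms $\textsf{gen-certificate}$ and $\textsf{verify-certificate}$ defined in the proof of \cref{prop: extra property 2}, and point out that each is a deterministic function of these public parameters (up to internal randomness used during bolt preparation and measurement, which is independent of setup). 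Crucially, no trapdoor, secret auxiliary state, or structured randomness is retained by the sampler, so the output of $\textsf{QL.Setup}$ is entirely determined by a uniform public string.

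Once that is established, it remains to argue that a uniformly random bit string of polynomial length can be parsed to produce the required field elements. When $q$ is a power of two this is immediate. For a general prime or prime-power $q$, $\lceil \log_2 q \rceil + \omega(\log \lambda)$ uniformly random bits per field element suffice to sample a distribution within negligible statistical distance of the uniform distribution on $\mathbb{F}_q$ via standard rejection sampling with a fallback value; a union bound over the polynomially many field elements keeps the total statistical distance negligible in $\lambda$.

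The step that requires the most care is verifying that the security guarantees of \cref{def: lightning basic security} and \cref{def: extra property}, proved in \cite{zha19} and in \cref{prop: extra property 2} under the assumption that parameters are sampled exactly uniformly, are preserved when parameters are instead produced from a CRS via the bit-parser above. I would handle this by a one-line hybrid argument: any efficient adversary whose success probability against the CRS-based variant differs non-negligibly from its success probability against the ideal uniform sampler would yield an efficient statistical distinguisher between the two parameter distributions, contradicting the negligible statistical distance established in the previous step. I do not foresee any deeper obstacle, since the fact that Zhandry's construction uses only public uniform randomness is essentially immediate from its presentation; the proof is largely a bookkeeping exercise in transferring a uniform-sampling security statement to a CRS-sampled one.
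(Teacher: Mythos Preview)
Your proposal is correct, but note that the paper does not prove this fact at all: it is stated without argument, as an observation about Zhandry's construction that the reader is expected to verify by inspection of Section~6 of \cite{zha19}. Your write-up therefore goes well beyond what the paper offers, supplying both the explicit identification of the uniform sampling step and the hybrid argument that transfers security from exact uniform parameters to CRS-derived ones. That level of detail is fine, though for a statement labeled ``Fact'' the expected standard is closer to your first paragraph alone: point out that $\textsf{QL.Setup}$ in Zhandry's degree-2 hash construction samples only public uniform randomness (the matrix entries over $\mathbb{F}_q$) with no trapdoor retained, and conclude.
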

This fact will be relevant when we discuss a concrete implementation in a system such as Bitcoin. For more details, see \cref{sec: practical implementation on Bitcoin}.
\paragraph{Other Quantum Lightning Schemes?}
\label{par:farhi_et_al}
Farhi et al.~\cite{FGH+12} constructed a public quantum money scheme which they conjectured to have the following property:
\begin{quote}A rogue mint running the same algorithm as the mint can produce a
new set of money pairs, but (with high probability) none of the serial numbers
will match those that the mint originally produced.
\end{quote}
This is less formal, but we believe captures Zhandry's security definition of a quantum lightning (though it was introduced roughly 7 years earlier). 

Very recently, Peter Shor gave a talk on ongoing (unpublished) work for a lattice-based quantum lightning scheme (see \url{https://youtu.be/8fzLByTn8Xk}).

Here, we briefly compare the two existing published constructions, and we omit Shor's scheme from this discussion. 
\begin{itemize}
    \item Farhi et al. only provide several plausibility arguments supporting the security of their scheme, but their work does not contain any security proof. In a follow-up work, Lutomirski~\cite{Lut11} showed a security reduction from a problem \emph{related} to counterfeiting Farhi et al.'s money scheme.  Even though it does not have a (proper) security proof, Farhi et al.'s scheme was first published 8 years ago, and was not broken since then. 
Zhandry's construction is proved to be secure under a non-standard hardness assumption, which was first introduced in that work. In his Eurocrypt talk, Zhandry reports that the construction is ``broken in some settings'' -- see \url{https://youtu.be/fjumbNTZSik?t=1302}.\footnote{Additionally, the hardness assumption was changed between the second and third version on the IACR e-print. This might suggest that the original hardness assumption was also broken. The most updated e-print version does not discuss that discrepancy. }
\item The scheme of Farhi et al. does not require a trusted setup. 
\item The scheme of Farhi et al. does not have a bolt-to-certificate capability. This property is required in our payment system to transform quantum banknotes back to coins on the blockchain -- see \cref{sec: main}.
\end{itemize}

\subsection{Universal Composability}
\label{sec: uc}
This section is intended as a concise primer about the Universal Composability (UC) model of Canetti \cite{canetti2001universally}. We refer the reader to \cite{canetti2001universally} for a rigorous definition and treatment of the UC model and of composable security, and to the tutorial $\cite{canetti2006tutorial}$ for a more gentle introduction. At the end, we provide a brief overview of the Generalized UC model (GUC) \cite{canetti2007global}. While introducing UC and GUC, we also setup some of the notation that we will employ in the rest of the paper. As mentioned in the Introduction, we elect to work in the GUC, because this provides a convenient formal abstraction layer for modeling a ledger functionality that supports smart-contracts, and allows for a clean security proof in the idealized setting in which parties have access to this functionality.

The reader familiar with the UC and GUC framework may wish to skip ahead to the next section.

In the universal composability framework (UC), parties are modelled as Interactive Turing Machines (ITM) who can communicate by writing on each other's externally writable tapes, subject to some global constraints.  Informally, a protocol is specified by a code $\pi$ for an ITM, and consists of various rounds of communication and local computation between instances of  ITMs (the parties), each running the code $\pi$ on their machine, on some private input.

Security in the UC model is defined via the notion of emulation. Informally, we say that a protocol $\pi$ emulates a protocol $\phi$ if whatever can be achieved by an adversary attacking $\pi$ can also be achieved by some other adversary attacking $\phi$. This is formalized by introducing simulators and environments.

Given protocols $\pi$ and $\phi$, we say that $\pi$ emulates (or "is as secure as") $\phi$ in the UC model, if for any polynomial-time adversary $\mathcal{A}$ attacking protocol $\pi$, there exists a polynomial-time simulator $\mathcal{S}$ attacking $\phi$ such that no polynomial-time distinguisher $\mathcal{E}$, referred to as the \textit{environment}, can distinguish between $\pi$ running with $\mathcal{A}$ and $\phi$ running with $\mathcal{S}$. Here, the environment $\mathcal{E}$ is allowed to choose the protocol inputs, read the protocol outputs, including outputs from the adversary or the simulator, and to communicate with the adversary or simulator during the execution of the protocol (without of course being told whether the interaction is with the adversary or with the simulator). In this framework, one can formulate security of a multiparty cryptographic task by first defining an \textit{ideal functionality} $\mathcal{F}$ that behaves exactly as intended, and then providing a ``real-world'' protocol that emulates, or ``securely realizes'', the ideal functionality $\mathcal{F}$. 

We give more formal definitions for the above intuition. To formulate precisely what it means for an environment $\mathcal{E}$ to tell two executions apart, one has to formalize the interaction between $\mathcal{E}$ and the protocols in these executions. Concisely, an execution of a protocol $\pi$ with adversary $\mathcal{A}$ and environment $\mathcal{E}$ consists of a sequence of activations of ITMs. At each activation, the active ITM runs according to its code, its state and the content of its tapes, until it reaches a special \textsf{wait} state. The sequence of activations proceeds as follows: The environment $\mathcal{E}$ gets activated first and chooses inputs for $\mathcal{A}$ and for all parties. Once $\mathcal{A}$ or a party is actived by an incoming message or an input, it runs its code until it produces an outgoing message for another party, an output for $\mathcal{E}$, or it reaches the $\textsf{wait}$ state, in which case $\mathcal{E}$ is activated again. The execution terminates when $\mathcal{E}$ produces its output, which can be taken to be a single bit. Note that each time it is activated, $\mathcal{E}$ is also allowed to invoke a new party, and assign a unique PID (party identifier) to it. 
Allowing the environment to invoke new parties will be particularly important in Section \ref{sec: security}, where we discuss security. There, the fact that the environment has this ability implies that our security notion captures realistic scenarios in which the set of parties is not fixed at the start, but is allowed to change. Moreover, each invocation of a protocol $\pi$ is assigned a unique session identifier SID, to distinguish it from other invocations of $\pi$. We denote by $\textrm{EXEC}_{\pi,\mathcal{A}, \mathcal{E}}(\lambda, z)$ the output of environment $\mathcal{E}$ initialized with input $z$, and security parameter $\lambda$ in an execution of $\pi$ with adversary $\mathcal{A}$.  

We are ready to state the following (slightly informal) definition.

\begin{definition}
\label{def: emulation}
A protocol $\pi$ UC-emulates a protocol $\phi$ if, for any PPT adversary $\mathcal{A}$, there exists a PPT simulator $\mathcal{S}$ such that, for any PPT environment $\mathcal{E}$, the families of random variables $\{\textrm{EXEC}_{\pi,\mathcal{A}, \mathcal{E}}(\lambda, z)\}_{\lambda \in \mathbb{N}, z \in \{0,1\}^{poly(\lambda)}}$ and $\{\text{EXEC}_{\phi,\mathcal{S}, \mathcal{E}}(\lambda, z)\}_{\lambda \in \mathbb{N}, z \in \{0,1\}^{poly(\lambda)}}$ are indistinguishable.
\end{definition}

Then, given an ideal functionality $\mathcal{F}$ which captures the intended ideal behaviour of a certain cryptographic task, one can define the ITM code $I_{\mathcal{F}}$, which behaves as follows: the ITM running $I_\mathcal{F}$ simply forwards any inputs received to the ideal functionality $\mathcal{F}$. We then say that a ``real-world'' protocol $\pi$ securely realizes $\mathcal{F}$ if $\pi$ emulates $I_{\mathcal{F}}$ according to Definition \ref{def: emulation}.

\paragraph{A composition theorem} The notion of security we just defined is strong. One of the main advantages of such a security definition is that it supports composition, i.e. security remains when secure protocols are executed concurrently, and arbitrary messages can be sent between executions. We use the notation $\sigma^{\pi}$ for a protocol $\sigma$ that makes up to polynomially many calls to another protocol $\pi$. In a typical scenario, $\sigma^{\mathcal{F}}$ is a protocol that makes use of an ideal functionality $\mathcal{F}$, and $\sigma^{\pi}$ is the protocol that results by implementing $\mathcal{F}$ through the protocol $\pi$ (i.e. replacing calls to $\mathcal{F}$ by calls to $\pi$). It is natural to expect that if $\pi$ securely realizes $\mathcal{F}$, then $\sigma^{\pi}$ securely realizes $\sigma^{\mathcal{F}}$. This is the content of the following theorem.

\begin{theorem}[Universal Composition Theorem]
Let $\pi$, $\phi$, $\sigma$ be polynomial-time protocols. Suppose protocol $\pi$ UC-emulates $\phi$. Then $\sigma^{\pi}$ UC-emulates $\sigma^{\phi}$.
\end{theorem}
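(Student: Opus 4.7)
The plan is to argue by contradiction using a standard hybrid argument over the invocations of the subprotocol. Suppose toward contradiction that $\sigma^{\pi}$ does not UC-emulate $\sigma^{\phi}$: then there is a PPT adversary $\mathcal{A}$ attacking $\sigma^{\pi}$ such that for the natural candidate simulator $\mathcal{S}$ described below, some PPT environment $\mathcal{E}$ distinguishes $\textrm{EXEC}_{\sigma^{\pi},\mathcal{A},\mathcal{E}}$ from $\textrm{EXEC}_{\sigma^{\phi},\mathcal{S},\mathcal{E}}$ with non-negligible advantage $\varepsilon(\lambda)$. Since $\pi$ UC-emulates $\phi$, let $\mathcal{S}_{\pi}$ be the single-session simulator guaranteed by Definition \ref{def: emulation}. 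I would define $\mathcal{S}$ to internally run $\mathcal{A}$ and, for every subsession of $\pi$ that $\mathcal{A}$ attacks in its simulated world, route the traffic of that subsession through a freshly initialized copy of $\mathcal{S}_{\pi}$ connected to the corresponding $\phi$-subsession of $\sigma^{\phi}$.

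Next I would set up the hybrid experiments. Let $q = q(\lambda)$ be a polynomial upper bound on the number of $\pi$-subsessions invoked in a run of $\sigma^{\pi}$; such a $q$ exists because $\sigma$ is polynomial-time. Define hybrids $H_0, H_1, \ldots, H_q$, where $H_i$ replaces the first $i$ subsessions of $\pi$ by subsessions of $\phi$ with $\mathcal{S}_{\pi}$ attached (so that $\mathcal{A}$'s view of each replaced subsession is the simulated one), and leaves the remaining $q - i$ subsessions as genuine instances of $\pi$ interacting directly with $\mathcal{A}$. By construction $H_0 = \textrm{EXEC}_{\sigma^{\pi},\mathcal{A},\mathcal{E}}$ and $H_q = \textrm{EXEC}_{\sigma^{\phi},\mathcal{S},\mathcal{E}}$. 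The triangle inequality combined with pigeonhole yields an index $i^\star$ for which $\mathcal{E}$ distinguishes $H_{i^\star}$ from $H_{i^\star+1}$ with advantage at least $\varepsilon(\lambda)/q(\lambda)$, which is still non-negligible.

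The heart of the argument is to collapse this single-index distinguishing event into a violation of the hypothesis that $\pi$ UC-emulates $\phi$. I would construct a new environment $\mathcal{E}'$, together with a single-session adversary $\mathcal{A}'$ extracted from $\mathcal{A}$, that internally emulates the entire execution of $\sigma^{\pi}$ except for the $(i^\star+1)$-st subsession: it runs the code of $\sigma$, simulates $\mathcal{E}$, $\mathcal{A}$, and all other subsessions (the first $i^\star$ using $\phi$ plus $\mathcal{S}_{\pi}$, the later ones using $\pi$ plus $\mathcal{A}$), while forwarding the inputs and messages of the designated subsession to its external interface. When that interface is connected to $\pi$ with $\mathcal{A}'$, $\mathcal{E}'$ reproduces exactly the distribution of $H_{i^\star}$; when it is connected to $\phi$ with $\mathcal{S}_{\pi}$, it reproduces $H_{i^\star+1}$. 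Thus $\mathcal{E}'$ distinguishes $\pi$ from $\phi$ with non-negligible advantage, contradicting the assumed UC-emulation.

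The main obstacle will be ensuring that $\mathcal{E}'$ runs in polynomial time and that the cut-and-route of a single subsession is truly faithful. Polynomial-time is delivered essentially for free, since $\sigma$, $\pi$, $\phi$, $\mathcal{A}$, $\mathcal{E}$, and $\mathcal{S}_{\pi}$ are all PPT and $q$ is polynomial. Faithful routing, however, is delicate: one must pin down how $\mathcal{A}$'s interface to a given subsession is cleanly separated from its interfaces to the others, how session identifiers are assigned consistently across the hybrids, and how concurrent activations interleave without implicitly revealing which hybrid is being run. The cleanest way to discharge this bookkeeping is to first invoke the dummy-adversary theorem to reduce the hypothesis to the case where the adversary is a transparent relay between the environment and the subsession; the cut-and-route then becomes essentially a forwarding rule, and equality of $\mathcal{E}'$'s output distribution with the corresponding hybrid can be verified by an inductive inspection of activations.
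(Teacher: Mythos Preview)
The paper does not prove this theorem; it merely quotes the Universal Composition Theorem as a known result from Canetti's UC framework \cite{canetti2001universally} and uses it as a black box. There is therefore no in-paper proof to compare your proposal against.

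For what it is worth, your sketch is the standard hybrid argument from the literature, and the overall shape is correct. Two remarks. First, the phrase ``the first $i$ subsessions'' presupposes a fixed ordering, but in UC subprotocol instances are spawned dynamically and concurrently by $\sigma$; you need to order them by the time their first activation occurs (or by SID assignment), and your constructed environment $\mathcal{E}'$ must be able to identify on the fly which instance is the $(i^\star+1)$-st and reroute exactly that one externally. Second, your instinct to pass through the dummy-adversary completeness theorem is exactly how Canetti handles the ``faithful routing'' bookkeeping, so that part of the plan is on solid ground. None of this is a gap so much as the usual delicacy of the UC composition proof; the paper simply cites the result rather than reproducing it.
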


Replacing $\phi$ by $I_{\mathcal{F}}$ for some ideal functionality $\mathcal{F}$ in the above theorem yields the composable security notion discussed above.


\paragraph{Generalized UC model}
The formalism of the original UC model is not able to handle security requirements in the presence of a ``global trusted setup''. By this, we mean some global information accessible to all parties, which is guaranteed to have certain properties.  Examples of this are a public-key infrastructure or a common reference string. Emulation in the original UC sense is not enough to guarantee composability properties in the presence of a global setup. Indeed, one can construct examples in which a UC-secure protocol for some functionality interacts badly with another UC-secure protocol and affects its security, if both protocols make reference to the same global setup. For more details and concrete examples see \cite{canetti2007global}.

The generalized UC framework (GUC) of Canetti et al. \cite{canetti2007global} allows for a ``global setup''. The latter is modelled as an ideal functionality which is allowed to interact not only with the parties running the protocol, but also with the environment. GUC formulates a stronger security notion, which is sufficient to guarantee a composition theorem, i.e. ideal functionalities with access to a shared global functionality $\mathcal{G}$ can be replaced by protocols that securely realize them in the presence of $\mathcal{G}$. Further, one can also replace global ideal functionalities with appropriate protocols realizing them. This kind of replacement does not immediately follow from the previous composition theorem and requires a more careful analysis, as is done in \cite{canetti2016pki}, where sufficient conditions for this replacement are established.

\paragraph{Universal Composability in the quantum setting}
In our setting, we are interested in honest parties, adversaries and environments that are quantum polynomial-time ITMs. The notion of Universal Composability has been studied in the quantum setting in \cite{benor2004general}, \cite{unruh2004simulatable} and \cite{unruh2010universally}. In particular, in \cite{unruh2010universally}, Unruh extends the model of computation of UC and its composition theorems to the setting in which polynomial-time classical ITMs are replaced by polynomial-time quantum ITMs (and ideal functionalities are still classical). The proofs are essentially the same as in the classical setting. Although the quantum version of the Generalized UC framework has not been explicitly studied in \cite{unruh2010universally}, one can check that the proofs of the composition theorems for GUC from \cite{canetti2007global} and \cite{canetti2016pki} also go through virtually unchanged in the quantum setting.

\section{Blockchains and smart contracts}
\label{sec: blockchains}
In this section, we start by describing blockchains and smart contracts informally. We follow this by a more formal description. As mentioned in the introduction, the essential features of blockchains and smart contracts can be abstracted by modeling them as ideal functionalities in the Universal Composability framework of Canetti \cite{canetti2001universally}. In this section, we introduce a global ideal functionality that abstracts the properties of a transaction ledger capable of handling stateful smart contracts. We call this $\mathcal{F}_{Ledg}$, which we describe in Fig. \ref{fig: ledger functionality}. \anote{The rest of this paragraph is new} We remark that $\mathcal{F}_{Ledg}$ is somewhat of an idealized functionality which allows us to obtain a clean description and security proof for our payment system. However,  $\mathcal{F}_{Ledg}$ does not capture, for example, attacks that stem from miners possibly delaying honest parties' messages from being recorded on the blockchain. We discuss this and other issues extensively in Section \ref{sec: practical issues}. We also discuss ways to resolve such issues in detail, but we do not formally incorporate these in the description of our our payment system in Section \ref{sec: main}: since we view our contribution as primarily conceptual, we elect to keep the exposition and security proof of the basic payment system as clean and accessible as possible. 
\vspace{1.5mm}

Informally, a blockchain is a public ledger consisting of a sequence of blocks. Each block typically contains information about a set of transactions, and a new block is appended regularly via a consensus mechanism that involves the nodes of a network. A blockchain is equipped with a native currency which is employed in transactions, and whose basic unit is referred to as a \textit{coin}.

Each user in the network is associated with a public key (this can be thought of as the user's address). A typical transaction is a message which transfers coins from a public key to another. It is considered valid if it is digitally signed using the secret key corresponding to the sending address. 

More precisely, in Bitcoin, parties do not keep track of users's accounts, but rather they just maintain a local copy of a set known as ``unspent transaction outputs set'' (UTXO set). An unspent output is a transaction that has not yet been ``claimed'', i.e. the coins of these transactions have not yet been spent by the receiver. Each unspent output in the UTXO set includes a circuit (also known as a ``script'') such that any user that can provide an input which is accepted by the circuit (i.e. a witness) can make a transaction that spends these coins, thus creating a new unspent output. Hence, if only one user knows the witness to the circuit, he is effectively the owner of these coins. For a standard payment transaction, the witness is a signature and the circuit verifies the signature. However, more complex circuits are also allowed, and these give rise to more complex transactions than simple payments: smart contracts. A smart contract can be thought of as a transaction which deposits coins to an address. The coins are released upon fulfillment of certain pre-established conditions. 

In \cite{bentov2014use} and \cite{bentov2017instantaneous}, smart contracts are defined as ideal functionalities in a variant of the Universal Composability (UC) model \cite{canetti2001universally}. The ideal functionality that abstracts the simplest smart contracts was formalized in \cite{bentov2014use}, and called ``Claim or Refund''. Informally, this functionality specifies that a sender $P$ locks his coins and chooses a circuit $\phi$, such that a receiver $Q$ can gain possession of these coins by providing a witness $w$ such that $\phi(w) = 1$ before an established time, and otherwise the sender can reclaim his coins. The ``Claim or Refund'' ideal functionality can be realized in Bitcoin as long as the circuit $\phi$ can be described in Bitcoin's scripting language. On the other hand, Ethereum's scripting language is Turing-complete, and so any circuit $\phi$ can be described.

``Claim or refund'' ideal functionalities can be further generalized to ``stateful contracts''. In Ethereum, each unspent output also maintains a \textit{state}. In other words, each unspent output comprises not only a circuit $\phi$, but also state variables. Parties can claim partial amounts of coins by providing witnesses that satisfy the circuit $\phi$ in accordance with the current state variables. In addition, $\phi$ also specifies an update rule for the state variables, which are updated accordingly. We refer to these type of transactions as stateful contracts, as opposed to the ``stateless'' contract of ``Claim or Refund''. 
Stateful contracts can be realized in Ethereum, but not in Bitcoin. From now onwards, we will only work with stateful contracts. We will use the terms ``smart contracts'' and ``stateful contracts'' interchangeably. 

We emphasize that our modeling is inspired by \cite{bentov2014use} and \cite{bentov2017instantaneous}, but differs in the way that coins are formalized. One difference from the model of Bentov et al. is that there, in order to handle coins, the authors augment the original UC model by endowing each party with a \textit{wallet} and a \textit{safe}, and by considering \textit{coins} as atomic entities which can be exchanged between parties. To the best of our knowledge, this variant of the UC framework is not subsumed by any of the previously studied variants, and thus it is not known whether a composition theorem holds for it.

On the other hand, we feel that a more natural approach is to work in the Generalized UC model \cite{canetti2007global}, and to define a global ideal functionality $\mathcal{F}_{Ledg}$ which abstracts the essential features of a \textit{transaction ledger} capable of handling stateful smart contracts. This approach was first proposed by Kiayias et al. \cite{kiayias2016fair}. The appeal of modeling a transaction ledger as a global ideal functionality is that composition theorems are known in the Generalized UC framework. In virtue of this, any secure protocol for some task that makes calls to $\mathcal{F}_{Ledg}$ can be arbitrarily composed while still maintaining security. This means that one need not worry about composing different concurrent protocols which reference the same transaction ledger. One would hope that it is also the case that a secure protocol that makes calls to $\mathcal{F}_{Ledg}$ remains secure when the latter are replaced by calls to secure real-world realizations of it (on Ethereum for example). This requires a more careful analysis, and Canetti et al. provide in \cite{canetti2016pki} sufficient conditions for this replacement to be possible. We do not prove that a secure real-world realization of $\mathcal{F}_{Ledg}$ on an existing blockchain exists, but we believe that $\mathcal{F}_{Ledg}$, or a close variant of it, should be securely realizable on the Ethereum blockchain. In any case, we work abstractly by designing our payment system, and proving it secure, assuming access to such an ideal functionality. The appeal of such an approach is that the security of the higher-level protocols is independent of the details of the particular real-world implementation of $\mathcal{F}_{Ledg}$. 

Next, we describe our global ideal functionality $\mathcal{F}_{Ledg}$. In doing so, we establish the notation that we will utilize in the rest of the paper. 

\vspace{1.5mm}

\paragraph{Global ledger ideal functionality} 


We present in Fig. \ref{fig: ledger functionality} our global ledger ideal functionality $\mathcal{F}_{Ledg}$. In a nutshell, this keeps track of every registered party's coins, and allows any party to transfer coins in their name to any other party. It also allows any party to retrieve information about the number of coins of any other party, as well as about any previous transaction. The initial amount of coins of a newly registered party is determined by its PID (recall that in a UC-execution the PID of each invoked party is specified by the environment; see Section \ref{sec: uc} for more details). Moreover, $\mathcal{F}_{Ledg}$ handles (stateful) smart contracts: it accepts deposits from the parties involved in a contract and then pays rewards appropriately. Recall that in stateful smart contracts a party or a set of parties deposit an amount of coins to the contract. The contract is specified by a circuit $\phi$, together with an initial value for a state variable $\textsf{st}$. A state transition is triggered by any party $P$ with PID $pid$ sending a witness $w$ which is accepted by $\phi$ in accordance with the current state and the current time $t$. More precisely, the contract runs $\phi(pid,w,t,\textsf{st})$, which outputs either ``$\perp$'' or a new state (stored in the variable \textsf{st}) and a number of coins $d \in \mathbb{N}$ that is released to $P$. Each contract then repeatedly accepts state transitions until it has distributed all the coins that were deposited into it at the start. Notice that $\phi$ can accept different witnesses at different times (the acceptance of a witness can depend on the current time $t$ and the current value of the state variable $\textsf{st}$). Information about existing smart contracts can also be retrieved by any party.

The stateful-contract portion of $\mathcal{F}_{Ledg}$ resembles closely the functionality $\mathcal{F}_{StCon}$ from \cite{bentov2017instantaneous}. Our approach differs from that of $\cite{bentov2017instantaneous}$ in that the we make the smart contract functionality part of the global ideal functionality $\mathcal{F}_{Ledg}$ which also keeps track of party's coins and transactions. In \cite{bentov2017instantaneous} instead, coins are incorporated in the computation model by augmenting the ITMs with wallets and safes (this changes the model of computation in a way that is not captured by any of the the previously studied variants of the UC framework).

We implicitly assume access to an ideal functionality for message authentication $\mathcal{F}_{auth}$ which all parties employ when sending their messages, and also to a global ideal functionality for a clock that keeps track of time. We assume implicitly that $\mathcal{F}_{Ledg}$ makes calls to the clock and keeps track of time. Alternatively, we could just have $F_{Ledg}$ maintain a local variable that counts the number of transactions performed, and a local time variable $t$, which is increased by $1$ every time the number of transactions reaches a certain number, after which the transaction counter is reset (this mimics the process of addition of blocks in a blockchain, and time simply counts the number of blocks). From now onwards, we do not formally reference calls to $\mathcal{F}_{auth}$ or to the clock to avoid overloading notation. We are now ready to define $\mathcal{F}_{Ledg}$.

\begin{figure}[H]
\rule[1ex]{16.5cm}{0.5pt}
{\centering \textbf{Global ledger ideal functionality} \par}

Initialize the sets $\textsf{parties} = \{\}$, $\textsf{contracts} = \{\}$, $\textsf{AllTransactions} = \{\}$. (Throughout, the variable $t$ denotes the current time.)

\paragraph{Add party:} Upon receiving a message \textsf{AddParty} from a party with PID $pid = (id, d)$, send (\textsf{AddedParty}, $pid$) to the adversary; upon receiving a message \textsf{ok} from the adversary, and if this is the first request from $pid$, add $pid$ to the set $\textsf{parties}$. Set $pid.\textsf{id} \leftarrow id$ and $pid.\textsf{coins} \leftarrow d$.

\paragraph{Retrieve party:} Upon receiving a message (\textsf{RetrieveParty}, $pid$) from some party $P$ (or the adversary), output (\textsf{RetrieveParty}, $pid$, $d$) to $P$ (or to the adversary), where $d = \perp$ if $pid \notin \textsf{parties}$, else $d = pid.\textsf{coins}$. (We abuse notation here in that, when taken as part of a message, $pid$ is treated as a string, but, when called by the functionality, it is a variable with attributes $pid.\textsf{id}$ and $pid.\textsf{coins}$).

\paragraph{Add transaction:} Upon receiving a message ($\textsf{AddTransaction}$, $pid'$, $d$) from some party $P$ with PID $pid$, and $pid \in \textsf{parties}$, do the following:
\begin{itemize}
    \item If $pid' \in \textsf{parties}$ and $pid.\textsf{coins} >d$, update $pid'.\textsf{coins} \leftarrow pid'.\textsf{coins} + d$ and $pid.\textsf{coins}  \leftarrow pid.\textsf{coins} - d$. Set $trId = |\textsf{AllTransactions}| + 1$. Add a variable named $trId$ to $\textsf{AllTransactions}$, with attribute $trId.\textsf{transaction} = (pid, pid', d, t)$. Send a message (\textsf{Executed}, $trId$) to $P$.
    \item Else, return $\perp$ to $P$.
\end{itemize}

\paragraph{Retrieve transaction:} Upon receiving a message (\textsf{RetrieveTransaction}, $trId$) from some party $P$ (or the adversary), output (\textsf{RetrieveTransaction}, $trId$, $s$), where $s = \perp$ if $trId \notin \textsf{allTransactions}$, and $s = trId.\textsf{transaction}$ otherwise.

\paragraph{Add/trigger smart contract:} Upon receiving a message (\textsf{AddSmartContract}, \textsf{Params}=$(I, D, \phi, \textsf{st}_0)$), where $I$ is a set of PID's, $D$ is a set $\{(pid, d_{pid}): pid \in I\}$ of ``initial deposits'', with $d_{pid}$ being the amount required initially from the party with PID $pid$, $\phi$ is a circuit, and $\textsf{st}_0$ is the initial value of a state variable $\textsf{st}$, check that $I \subseteq \textsf{parties}$. If not, ignore the message; if yes, set $ssid = |\textsf{contracts}| + 1$. Add a variable named $ssid$ to $\textsf{contracts}$ with attributes $ss id.\textsf{Params} = (I, D, \phi, \textsf{st}_0)$, $ssid.\textsf{state} = \textsf{st}$ and $ssid.\textsf{coins} \leftarrow 0$. Send a message (\textsf{RecordedContract}, $ssid$) to $P$. Then, do the following:
\begin{itemize}
    \item \textbf{Initialization phase:} Wait to get message $(\textsf{InitializeWithCoins}, \textit{ssid}, \textsf{Params}=(I, D, \phi, \textsf{st}_0))$ from party with PID $pid$ for all $pid \in I$. When all messages are received, and if, for all $pid \in I$, $pid.\textsf{coins} \geq d_{pid}$, then, for all $pid \in I$, update: $pid.\textsf{coins} \leftarrow pid.\textsf{coins} - d_{pid}$ and $ssid.\textsf{coins} \leftarrow ssid.\textsf{coins} + d_{pid}$. Set $\textsf{st} \leftarrow \textsf{st}_0$  (We assume that $ssid.\textsf{state}$ changes dynamically with $\textsf{st}$).
    \item \textbf{Execution phase:} Repeat until termination: Upon receiving a message of the form $(\textsf{Trigger}, \textit{ssid}, w, d)$ at time $t$ from some party with PID $pid \in \textsf{parties}$ (where it can also be $d=0$) such that $\phi(pid, w, t, \textsf{st}, d) \neq \perp$, do the following:
    \begin{itemize}
        \item If $d>0$, update $pid.\textsf{coins} \leftarrow pid.\textsf{coins} - d$ and $ssid.\textsf{coins} \leftarrow ssid.\textsf{coins} + d$
        \item Update $(\textsf{st}, e) \leftarrow \phi(pid, w, t, \textsf{st}, d)$. 
        \item If $e = \textnormal{``all coins''}$, let $q := ssid.\textsf{coins}$. Send the message (\textsf{Reward}, $ssid$, $q$) to the party with PID $pid$ and update $pid.\textsf{coins} \leftarrow pid.\textsf{coins} + q$ and $ssid.\textsf{coins} \leftarrow 0$. If $e > 0$ and $ssid.\textsf{coins} > e$, send the message (\textsf{Reward}, $ssid$, $e$) to the party with PID $pid$, and update $pid.\textsf{coins} \leftarrow pid.\textsf{coins} + e$ and $ssid.\textsf{coins} \leftarrow ssid.\textsf{coins} -e$. Else, if $ssid.\textsf{coins} = e' \leq e$, send the message (\textsf{Reward}, $ssid$, $e'$) to the party with PID $pid$, and update $pid.\textsf{coins} \leftarrow pid.\textsf{coins} + e'$ and $ssid.\textsf{coins} \leftarrow 0$. Then, terminate.
    \end{itemize}
\end{itemize}

\paragraph{Retrieve smart contract:} Upon receiving a message (\textsf{RetrieveContract}, $ssid$) from some party $P$ (or the adversary), output (\textsf{RetrieveContract}, $ssid$, $z$), where $z = \perp$ if $ssid \notin \textsf{contracts}$, and $z = (ssid.\textsf{Params}, ssid.\textsf{state}, ssid.\textsf{coins})$ otherwise. 

\rule[2ex]{16.5cm}{0.5pt}\vspace{-.5cm}
\caption{Global ledger ideal functionality $\mathcal{F}_{Ledg}$}
  \label{fig: ledger functionality}
 
\end{figure}

We think of "Retrieve" operations as being fast, or free, as they do not alter the state of the ledger. \mpar{\label{mar:fast_vs_slow}}We think of "Add" and "Trigger" operations as being slow, as they alter the state of the ledger.

From now on, we will often refer to the number of coins $ssid.\textsf{coins}$ of a contract with session identifier $ssid$ as the coins \textit{deposited} in the contract. When we say that a contract $\textit{releases}$ some coins to a party $P$ with PID $pid$, we mean more precisely that $\mathcal{F}_{Ledg}$ updates its local variables and moves coins from $ssid.\textsf{coins}$ to $pid.\textsf{coins}$.

\section{A payment system based on quantum lightning and a classical blockchain}
\label{sec: main}
In this section, we describe our payment system. We give first an informal description, and in Section \ref{sec: 3-2} we give a formal description. 

The building block of our payment system is a quantum lightning scheme, reviewed in detail in Section \ref{sec: lightning}. Recall that a quantum lightning scheme consists of a generation procedure which creates quantum banknotes, and a verification procedure that verifies them and assigns serial numbers. The security guarantee is that no generation procedure (not even the honest one) can create two banknotes with the same serial number except with negligible probability. As mentioned earlier, this property is desirable if one wants to design a decentralized payment system, as it prevents anyone from cloning banknotes (even the person who generates them). However, this calls for a mechanism to regulate generation of new valid quantum banknotes.

In this section, we describe formally a proposal that employs smart contracts to provide such a mechanism. As we have described informally in the introduction, the high-level idea is to keep track of the valid serial numbers using smart contracts. Any party is allowed to deposit any amount of coins $d$ (of their choice) into a smart contract with specific parameters (see definition \ref{def: banknote-contract} below), and with an initial value of his choice for a $\textit{serial number}$ state variable. We can think of the quantum banknote with the chosen serial number as having ``acquired'' value $d$. A payment involves only two parties: a payer, who sends a quantum banknote, and a payee who receives it and verifies it locally. As anticipated in the introduction, the full payment system includes the following additional features, which we describe here informally in a little more detail (all of these are described formally in Section \ref{sec: 3-2}):

\begin{itemize}
\item Removing a serial number from the list of valid serial numbers in order to recover the amount of coins deposited in the corresponding smart contract. This makes the two commodities (quantum banknotes and coins on the blockchain) interchangeable. This is achieved by exploiting the additional property of of some quantum lightning scheme from Definition \ref{def: extra property}. Recall that, informally, this property states that there is some classical certificate that can be recovered by measuring a valid quantum banknote, which no efficient algorithm can recover otherwise. The key is that once the state is measured to recover this certificate, it is damaged in a way that it only passes verification with negligible probability (meaning that it can no longer be spent). We allow users to submit this classical certificate to a smart contract, and if the certificate is consistent with the serial number stored in the contract, then the latter releases all of the coins deposited in the contract to the user. 
\item Allowing a party to replace an existing serial number with a new one of their choice in case they lose a valid quantum state (they are fragile after all!). We allow a user $P$ to file a ``lost banknote claim'' by sending a message and some fixed amount of coins $d_0$ to a smart contract whose serial number is the serial number of the lost banknote. The idea is that if no one challenges this claim, then after a specified time $t_{tr}$ user $P$ can submit a message which changes the value of the serial number state variable to a value of his choice and recovers the previously deposited $d_0$ coins. On the other hand, if a user $P$ maliciously files a claim to some contract with serial number $s$, then any user $Q$ who possesses the valid banknote with serial number $s$ can recover the classical certificate from Definition \ref{def: extra property}, and submit it to the contract. This releases all the coins deposited in the contract to $Q$ (including the $d_0$ deposited by $P$ to make the claim). As you might notice, this requires honest users to monitor existing contracts for ``lost banknote claims''. This, however, is not much of a burden if $t_{tr}$ is made large enough (say a week or a month). The requirement of being online once a week or once a month is easy to meet in practice.
\end{itemize}

\subsection{The payment system and its components}
\label{sec: 3-2}

In this section, we describe in detail all of the components of the payment system. It consists of the following: a protocol to generate valid quantum banknotes; a protocol to make a payment; a protocol to file a claim for a lost banknote; a protocol to prevent malicious attempts at filing claims for lost banknotes; and a protocol to trade a valid quantum banknote in exchange for coins.

Let $\lambda \in \mathbb{N}$. From now onwards, we assume that\\ $$(\textsf{gen-bolt}, \textsf{verify-bolt}, \textsf{gen-certificate}, \textsf{verify-certificate}) \leftarrow \textsf{QL.Setup}(\lambda),$$ where the latter is the setup procedure of a quantum lightning scheme with bolt-to-certificate capability (i.e. a quantum lightning scheme that satisfies the additional property of Definition \ref{def: extra property}). We are thus assuming a trusted setup for our payment system (note there are practical ways to ensure that such a setup, which is a one-time procedure, is performed legitimately even in a network where many parties might be dishonest). 
We also assume that all parties have access to an authenticated and ideal quantum channel.\mpar{\label{mar:ideal_authenticated}}

Recall from the description of $\mathcal{F}_{Ledg}$ that each smart contract is specified by several parameters: $I$ is a set of PIDs of parties who are expected to make the initial deposit, with $\{d_{pid} : pid \in I\}$ being the required initial deposit amounts; a circuit $\phi$ specifies how the state variables are updated and when coins are released; an initial value $\textsf{st}_0$ for the state variable $\textsf{st}$; a session identifier $ssid$.

In Definition \ref{def: banknote-contract} below, we define an instantiation of smart contracts with a particular choice of parameters, which we refer to as banknote-contracts. Banknote-contracts are the building blocks of the protocols that make up our payment system. We describe a banknote-contract informally before giving a formal definition. 

A banknote-contract is a smart contract initialized by a single party, and it has a state variable of the form $\textsf{st} = (\textsf{serial}, \textsf{ActiveLostClaim})$. The party initializes the banknote-contract by depositing a number of coins $d$ and by setting the initial value of $\textsf{serial}$ to any desired value. The banknote-contract handles the following type of requests:
\begin{itemize}
    \item As long as $\textsf{ActiveLostClaim} = \text{``No active claim''}$ (which signifies that there are no currently active lost-banknote claims), any party $P$ can send the message \textsf{BanknoteLost}, together with a pre-established amount of coins $d_0$ to the contract. This will trigger an update of the state variable $\textsf{ActiveLostClaim}$ to reflect the active lost-banknote claim by party $P$. 
    \item As long as there is an active lost-banknote claim, i.e. $\textsf{ActiveLostClaim} = \text{``Claim by $pid$ at time $t$''}$, any party $Q$ can challenge that claim by submitting a message $(\textsf{ChallengeClaim}, c, s')$ to the contract, where $s'$ is a proposed new serial number. We say that $c$ is a valid classical certificate for the current value $s$ of $\textsf{serial}$ if $\textsf{verify-certificate}(s,c) = 1$. Such a $c$ can be thought of as a proof that whoever is challenging the claim actually possessed a quantum banknote with serial number $s$, and destroyed it in order to obtain the certificate $c$, and thus that the current active lost-banknote claim is malicious. If $c$ is a valid classical certificate for $s$, then $\textsf{serial}$ is updated to the new value $s'$ submitted by $Q$, who also receives all of the coins deposited in the contract (including the $d_0$ coins deposited by the malicious claim).
    \item If party $P$ has previously submitted a lost-banknote claim, and his claim stays unchallenged for time $t_{tr}$, then party $P$ can send a message $(\textsf{ClaimUnchallenged}, s')$ to the contract, where $s'$ is a proposed new serial number. Then the contract returns to $P$ the $d_0$ coins he initially deposited when making the claim, and updates $\textsf{serial}$ to $s'$.
    \item Any party $P$ can submit to the contract a message $(\textsf{RecoverCoins}, c)$. If $c$ is a valid classical certificate for the current value of $s$ of $\textsf{serial}$, then the contract releases to $P$ all the coins currently deposited in the contract. This allows party $P$ to ``convert'' back his quantum banknote into coins.
\end{itemize}

Next, we will formally define banknote-contracts, and then formally describe all of the protocols that make the payment system.

\begin{figure}[H]
\rule[1ex]{16.5cm}{0.5pt}\\
$\phi_{\$}\left(pid, w, t, (\textsf{serial}, \textsf{ActiveLostClaim}), d\right)$ takes as input strings $pid$ and $w$, where $pid$ is meant to be the PID of some party $P$, and we refer to $w$ as the ``witness'', $t \in \mathbb{N}$ denotes the ``current time'' mantained by $\mathcal{F}_{Ledg}$, $(\textsf{serial}, \textsf{ActiveLostClaim})$ is the current value of the state variable, and $d \in \mathbb{N}$ is the number of coins that are being deposited to the smart contract with the current message. $\phi_{\$}$ has hardcoded parameters: $d_0 \in \mathbb{N}$ the amount of coins needed to file a claim for a lost money state, $t_{tr} \in \mathbb{N}$ the time after which an unchallenged claim can be settled ($d_0$ and $t_{tr}$ are fixed constants agreed upon by all parties, and they are the same for all banknote-contracts), $l(\lambda)$ the length of a certificate in the lightning scheme, a description of \textsf{verify-certificate}. The circuit $\phi_{\$}$ outputs new values for the state variables and an amount of coins as follows:

On input $(pid, w, t, (\textsf{serial} = s, \textsf{ActiveLostClaim}), d)$, $\phi_{\$}$ does the following:
\begin{itemize}
\item If $\textsf{ActiveLostClaim} = \text{``No active claim''}$:
\begin{itemize}
\item If $w = \textsf{BanknoteLost}$ and $d = d_0$, then $\phi$ outputs  $\big((\textsf{serial} = s, \textsf{ActiveLostClaim} = \text{``Claim by $pid$ at time $t$''}), 0\big)$ (to symbolize that at time $t$ party with PID $pid$ has claimed to have lost the money state with serial number $s$, and that zero coins are being released).
\item If $w = (\textsf{RecoverCoins}, c)$, where $c \in \{0,1\}^{l(\lambda)}$ and $\textsf{verify-certificate}(s,c) = 1$, then $\phi$ outputs $\big( (\textsf{serial} = \perp, \textsf{ActiveLostClaim} = \perp), \textnormal{``all coins''}\big)$
\end{itemize}
\item If $\textsf{ActiveLostClaim} = \text{``Claim by $pid'$ at time $t_0$''}$ for some $pid', t_0$: 
\begin{itemize}
\item If $w = (\textsf{ChallengeClaim}, c, s')$,  where $c \in \{0,1\}^{l(\lambda)}$ and $\textsf{verify-certificate}(s,c) = 1$, and $s' \in \{0,1\}^{\lambda}$ then $\phi$ outputs  $\big((\textsf{serial} = s', \textsf{ActiveLostClaim} = \text{``No active claim''}), d_0 \big)$.
\item If $w = (\textsf{ClaimUnchallenged}, s')$, $pid=pid'$ and $t-t_0 > t_{tr}$, then $\phi$ outputs  $\big((\textsf{serial} = s', \textsf{ActiveLostClaim} = \text{``No active claim''}), d_0 \big)$.
\end{itemize}
\end{itemize}

\rule[2ex]{16.5cm}{0.5pt}\vspace{-.5cm}
\caption{Circuit $\phi_{\$}$ for banknote-contracts}
  \label{fig: circuit phi}
  
\end{figure}

\begin{definition}(Banknote-contract)
\label{def: banknote-contract}
A banknote-contract, is a smart contract on $\mathcal{F}_{Ledg}$ specified by parameters of the following form: $I = \{pid\}$ for some $pid \in [n]$, $D = \{(pid, d_{pid})\}$ for some $d_{pid} \in \mathbb{N}$, $\textsf{st}_0 = (s, \text{``No active claim''})$ for some $s \in \{0,1\}^{\lambda}$, and circuit $\phi = \phi_{\$}$, where $\phi_{\$}$ is defined as in Fig. \ref{fig: circuit phi}.
\end{definition}

For convenience, we denote by $\textsf{serial}$ and $\textsf{ActiveLostClaim}$ respectively the first and second entry of the state variable of a banknote-contract.


\paragraph{Generating valid quantum banknotes} We describe the formal procedure for generating a valid quantum banknote.



\begin{figure}[H]
\rule[1ex]{16.5cm}{0.5pt}\\
Protocol carried out by some party $P$ with PID $pid$.\\

Input of $P$: An integer $d$ such that $pid.\textsf{coins} > d$ in $\mathcal{F}_{Ledg}$ ($d$ is the ``value'' of the prospective banknote).
\begin{itemize}
\item Run $(\ket{\psi},s) \leftarrow \textsf{gen-bolt}$. 
\item Send $\big(\textsf{AddSmartContract},  \textsf{Params} \big)$ to $\mathcal{F}_{Ledg}$, where $\textsf{Params} = (\{pid\}, \{(pid, d)\}, \phi, (s, \text{``No active claim''}))$. Upon receipt of a message of the form (\textsf{RecordedContract}, $ssid$), send the message (\textsf{InitializeWithCoins}, $ssid$, \textsf{Params}) to $\mathcal{F}_{Ledg}$.
\end{itemize}

\rule[2ex]{16.5cm}{0.5pt}\vspace{-.5cm}
\caption{Generating a valid banknote}
  \label{fig: protocol gen banknote}
  
\end{figure}

\paragraph{Making a payment} We describe formally the protocol for making a payment in Fig. \ref{fig: protocol making a payment}. Informally, the protocol is between a party $P$, the payer, and a party $Q$, the payee. In order to pay party $Q$ with a bolt whose serial number is $s$, party $P$ sends the valid bolt to party $Q$, the payee, together with the $ssid$ of a smart contract with $\textsf{serial} =s$. Party $Q$ verifies that $ssid$ corresponds to a banknote-contract with $\textsf{serial} = s$, and verifies that the banknote passes verification and has serial number $s$. 

\begin{figure}[H]
\rule[1ex]{16.5cm}{0.5pt}
The protocol is between some party $P$ with PID $pid$(the payer) and a party $Q$ with PID $pid'$ (the payee):\\

Input of $P$: $\ket{\Psi}$, a valid bolt with serial number $s$. $ssid$ the session identifier of a smart contract on $\mathcal{F}_{Ledg}$ such that $ssid.\textsf{state} = (s,\text{``No active claim''})$, and  $ssid.\textsf{coins} = d$. 
\begin{itemize}
\item $P$ sends $(\ket{\Psi},s,ssid,d)$ to $Q$.
\item $Q$ sends a message (\textsf{RetrieveContract}, $ssid$) to $\mathcal{F}_{Ledg}$. Upon receiving a message (\textsf{RetrieveContract}, $ssid$, $z$) from $\mathcal{F}_{Ledg}$ (where $z = (ssid.\textsf{Params}, ssid.\textsf{state}, ssid.\textsf{coins})$ if $P$ is honest), $Q$ does the following: 
\begin{itemize}
\item If $z = (\textsf{Params}, (s, \text{``No active claim''}), d)$, then $Q$ checks that the parameters $\textsf{Params}$ are of the form of a banknote-contract (from Definition \ref{def: banknote-contract}). If so, runs $\textsf{verify-bolt}(\ket{\Psi},s)$ and checks that the outcome is $1$. If so, sends the message $\textsf{accept}$ to $P$.
\item Else, $Q$ sends the message \textsf{reject} and the state $\ket{\Psi}$ back to $P$.
\end{itemize} 
\end{itemize}

\rule[2ex]{16.5cm}{0.5pt}\vspace{-.5cm}
\caption{Protocol for making and verifying a payment}
  \label{fig: protocol making a payment}
  
\end{figure}

\paragraph{Recovering lost banknotes} 
\label{sec: recovering}
As much as we can hope for experimental progress in the development of quantum memories, for the foreseeable future we can expect quantum memories to only be able to store states for a time on the order of days. It is thus important that any payment system involving quantum money is equipped with a procedure for users to recover the value associated to quantum states that get damaged and become unusable. Either users should be able to ``convert'' quantum money states back to coins on the blockchain, or they should be able, upon losing a quantum banknote, to change the serial number state variable of the associated smart contract to a new serial number (presumably of freshly generated quantum banknote). Here, we describe a protocol for the latter. After this, we will describe a protocol for the former.

Informally, a party $P$ who has lost a quantum banknote with serial number $s$ associated to a smart contract with session identifier $ssid$, makes a ``lost banknote claim'' at time $t$ by depositing a number of coins $d_0$ to that banknote-contract. Recall the definition of banknote-contracts from Definition \ref{def: banknote-contract}, and in particular of the circuit $\phi_{\$}$:
\begin{itemize}
\item If party $P$ is honest, then after a time $t_{tr}$ has elapsed, he will be able to update the state variable $\textsf{serial}$ of the banknote-contract from $s$ to $s'$ (where $s'$ is presumably the serial number of a new valid bolt that party $P$ has just generated).
\item If party $P$ is dishonest, and he is claiming to have lost a banknote with serial number $s$ that someone else possesses, then the legitimate owner can run $\textsf{gen-certificate}(\ket{\psi},s)$ where $\ket{\Psi}$ is the legitimate bolt, and obtain a valid certificate $c$. He can then send $c$ to the contract and a new serial number $s'$ (presumably of a freshly generate bolt) and obtain $d_0$ coins from the contract (the $d_0$ coins deposited by $P$ in his malicious claim).
\end{itemize}
We describe the protocol formally in Fig. \ref{fig: protocol making a claim}.

One might wonder whether, in practice, an adversary can instruct a corrupt party to make a ``lost banknote claim'', and then intercept an honest party's classical certificate $c$ before this is posted to the blockchain, and have a second corrupt party post it instead. This attack would allow the adversary to ``steal'' the honest party's value. Or alternatively, an adversary could monitor the network for lost-banknote claims by honest parties, and whenever he sees one he will delay this claim, and instruct a corrupt party to make the same claim so that it is registered first on the ledger. In our analysis, we do not worry about such attacks, as we assume access to the ideal functionality $\mathcal{F}_{Ledg}$, which, by definition, deals with incoming messages in the order that they are received. We also assume in our adversarial model, specified more precisely in Section \ref{sec: security}, that the adversary does not have any control over the delivery of messages (and their timing). If one assumes a more powerful adversary (with some control over the timing of delivery of messages), then the first issue can still be resolved elegantly. The second issue has a satisfactory resolution but it is trickier to analyze formally. We discuss this in more detail in Section \ref{sec: practical issues}. 

\begin{figure}[H]
\rule[1ex]{16.5cm}{0.5pt}
Protocol carried out by party $P$ with PID $pid$ for changing the serial number of a smart contract.\\

$P$'s input: $s$ the serial number of a (lost) quantum banknote. $ssid$ the session identifier of a banknote-contract such that $ssid.\textsf{state} = (s, \text{``No active claim''})$.

\begin{itemize}
\item $P$ sends $(\textsf{Trigger}, ssid, \textsf{BanknoteLost}, d_0$) to $\mathcal{F}_{Ledg}$. This updates $ssid.\textsf{state}$ to $(s, \text{``Active claim by $pid$ at time $t$''})$
(where $t$ is the current time mantained by $\mathcal{F}_{Ledg}$), and deposits $d_0$ coins into the contract.
\item After time $t_{tr}$, $P$ sends $(\textsf{Trigger}, ssid, (\textsf{ClaimUnchallenged}, s'), 0)$ to $\mathcal{F}_{Ledg}$. If $P$ was honest then $ssid.\textsf{state}$ is updated to $(s', \text{``No active claim''})$, and $d_0$ coins are released to $P$.
\end{itemize}

\rule[2ex]{16.5cm}{0.5pt}\vspace{-.5cm}
\caption{Protocol for changing the serial number of a smart contract}
  \label{fig: protocol making a claim}
  
\end{figure}

Next, we give a protocol carried out by all parties to prevent malicious attempts at changing the state variable $\textsf{serial}$ of a smart contract. Informally, this involves checking the blockchain regularly for malicious attempts at filing lost-banknote claims. 

Recall that $t_{tr}$ was defined in Definition \ref{def: banknote-contract}.

\begin{figure}[H]
\rule[1ex]{16.5cm}{0.5pt}

Protocol carried out by a party $P$ to prevent malicious attempts at changing the state variable $\textsf{serial}$ of a smart contract.\\

Input of $P$: A triple $(\ket{\Psi}, s, ssid)$, where $\ket{\Psi}$ is a quantum banknote with serial number $s$, and $ssid$ is the session identifier of a banknote-contract such that $ssid.\textsf{state} = (s, \text{``No active claim''})$ \\

At regular intervals of time $t_r - 1$, do the following: 
\begin{itemize}
\item Send a message (\textsf{RetrieveContract}, $ssid$) to $\mathcal{F}_{Ledg}$. Upon receiving a message (\textsf{RetrieveContract}, $ssid$, $z$) from $\mathcal{F}_{Ledg}$, if $z =(\textsf{Params}, (s, \text{``Claim by $pid'$ at time $t$ ''}), d)$ for some $pid', t, d$ and for some banknote-contract parameters $\textsf{Params}$:
\begin{itemize}
    \item Run $c \leftarrow \textsf{gen-certificate}(\ket{\Psi}, s)$.
    \item Sample $(\ket{\Psi'},s') \leftarrow \textsf{gen-bolt}$.
    \item Send $(\textsf{Trigger},ssid, (\textsf{ChallengeClaim}, c, s'), 0)$ to $\mathcal{F}_{Ledg}$. (If $P$ was honest, this updates $ssid.\textsf{state} \leftarrow (s', \text{``No active claim''})$ and releases $d_0$ coins to $P$).
\end{itemize} 
\end{itemize}

\rule[2ex]{16.5cm}{0.5pt}\vspace{-.5cm}
\caption{Protocol for preventing malicious attempts at changing the state variable $\textsf{serial}$ of a smart contract. }
  \label{fig: protocol preventing}
  
\end{figure}

\paragraph{Trading a quantum banknote for coins:} Finally, we describe a protocol for trading a quantum banknote to recover all the coins deposited in its associated banknote-contract. 

\begin{figure}[H]
\rule[1ex]{16.5cm}{0.5pt}

Protocol carried out by a party $P$.\\

Input of $P$: A tuple $(\ket{\Psi}, s, ssid, d)$, where $\ket{\Psi}$ is a quantum banknote with serial number $s$, and $ssid$ is the session identifier of a banknote-contract such that $ssid.\textsf{state} = (z, \text{``No active claim''})$ and $ssid.\textsf{coins} = d$.\\ 

\begin{itemize}
\item Run  $c \leftarrow \textsf{gen-certificate}(\ket{\Psi},s)$. 
\item Send message $(\textsf{Trigger}, ssid, (\textsf{RecoverCoins}, c), 0)$ to $\mathcal{F}_{Ledg}$. This releases $d$ coins to $P$.
\end{itemize}

\rule[2ex]{16.5cm}{0.5pt}\vspace{-.5cm}
\caption{Protocol for trading a quantum banknote for coins.}
  \label{fig: protocol trading banknotes}
  
\end{figure}

\section{Security}
\label{sec: security}

We first specify an adversarial model. Security with respect to this adversarial model is formally captured by Theorem \ref{thm: security}. At a high-level, Theorem \ref{thm: security} establishes that, within this adversarial model, no adversary can increase his ``value'' beyond what he has legitimately spent or received to and from honest parties. This captures, for example, the fact that the adversary will not be able to double-spend his banknotes, or successfully file a ``lost banknote claim'' for banknotes he does not legitimately possess.

\paragraph{Adversarial model} 

We assume that all the messages of honest parties are sent using the ideal functionality for authenticated communication $\mathcal{F}_{auth}$, and that the adversary sees all messages that are sent (in UC language, we assume that the adversary is activated every time a party sends a message) but has no control over the delivery of messages (whether they are delivered or not) and their timing. Our payment system can be made to work also if we assume that the adversary can delay delivery of honest parties' messages by a fixed amount of time (see the remark preceding Fig. \ref{fig: protocol making a claim} for more details), but, for simplicity, we do not grant the adversary this power.

The adversary can corrupt any number of parties, and it may do so adaptively, meaning that the corrupted parties are not fixed at the start, but rather an honest party can become corrupted, or a corrupted party can return honest, at any point. The process of corruption is modeled analogously as in the original UC framework, where the adversary simply writes a \textsf{corrupt} message on the incoming tape of an honest party, upon which the honest party hands all of its information to the adversary, who can send messages on the corrupted party's behalf. Our setting is slightly more involved in that corrupted parties also possess some quantum information, in particular the quantum banknotes. We assume that when an adversary corrupts a party he takes all of its quantum banknotes. Importantly, we assume that these are not returned to the party once the party is no longer corrupted. It might seem surprising that we do not upper bound the fraction of corrupted parties. Indeed, such a bound would only be needed in order to realize securely the ideal functionality $\mathcal{F}_{Ledg}$ (any consensus-based realization of $F_{Ledg}$ would require such a bound). Here, we assume access to such an ideal functionality, and we do not worry about its secure realization. Naturally, when replacing the ideal functionalities with real-world realizations one would set the appropriate bound on the corruption power of the adversary, but we emphasize that our schemes are independent of the particular real-world realization. Note that we do not fix a set of parties at the start, but rather new parties can be created (see below for more details).

We assume that (ITMs of) honest parties run the code $\pi$. This represents the ``honest'' code which executes the protocols from Section \ref{sec: main} as specified. The input to $\pi$ then specifies when and which protocols from Section \ref{sec: main} are to be executed. As part of $\pi$, we specify that, upon invocation, a party sends a message \textsf{AddParty} to $\mathcal{F}_{Ledg}$ to register itself. We also specify as part of $\pi$ that an honest party runs the protocol of Fig. \ref{fig: protocol preventing} (to prevent malicious claims for lost banknotes). Moreover, for notational convenience, we specify as part of $\pi$ that each party maintains a local variable $\textsf{banknoteValue}$, which keeps track of the total value of the quantum banknotes possessed by the party. $\textsf{banknoteValue}$ is initialized to $0$, and updated as follows. Whenever a party $P$ successfully receives a quantum banknote (i.e. $P$ is the payee in the protocol from Fig. \ref{fig: protocol making a payment} and does not abort) of value $d$ (i.e. the associated smart contract has $d$ coins deposited), then $P$ updates $\textsf{banknoteValue} \leftarrow \textsf{banknoteValue} +d$. Similarly, when $P$ sends a quantum banknote of value $d$, it updates $\textsf{banknoteValue} \leftarrow \textsf{banknoteValue} -d$.
Finally, we specify also as part of $\pi$, that whenever a party that was corrupted is no longer corrupted, it resets $\textsf{banknoteValue} = 0$ (this is because we assumed that quantum banknotes are not returned by the adversary). The following paragraph leads up to a notion of security and a security theorem.

Let $\mathcal{A}$ be a quantum polynomial-time adversary and $\mathcal{E}$ a quantum polynomial-time environment. Consider an execution of $\pi$ with adversary $\mathcal{A}$ and environment $\mathcal{E}$ (see Section \ref{sec: uc} for more details on what an ``execution'' is precisely). We keep track of two quantities during the execution, which we denote as \textsf{AdversaryValueReceived} and \textsf{AdversaryValueCurrentOrSpent} (These quantities are not computed by any of the parties, adversary or environment. Rather, they are just introduced for the purpose of defining security). The former represents the amount of value, coins or banknotes, that the adversary has received either by virtue of having corrupted a party, or by having received a payment from an honest party. The latter counts the total number of coins currently possessed by corrupted parties, as recorded on $\mathcal{F}_{Ledg}$, and the total amount spent by the adversary to honest parties either via coins or via quantum banknotes (it does not count the value of quantum banknotes currently possessed; these only count once they are successfully spent). Both quantities are initialized to $0$, and updated as follows throughout the execution:
\begin{itemize}
    \item[(i)] When $\mathcal{A}$ corrupts a party $P$: let $d$ be the number of coins of $P$ according to the global functionality $\mathcal{F}_{Ledg}$ and $d'$ be $P$'s \textsf{banknoteValue} just before being corrupted. Then, $\textsf{AdversaryValueReceived} \leftarrow \textsf{AdversaryValueReceived} + d + d'$, and $\textsf{AdversaryValueCurrentOrSpent} \leftarrow \textsf{AdversaryValueCurrentOrSpent} + d$.
    \item[(ii)] When a corrupted party $P$ with $d$ coins and $\textsf{banknoteValue} = d'$ ceases to be corrupted and returns honest, $\textsf{AdversaryValueReceived} \leftarrow \textsf{AdversaryValueReceived} - d$.
    \item[(iii)] When an honest party pays $d$ coins to a corrupted party, $\textsf{AdversaryValueReceived} \leftarrow \textsf{AdversaryValueReceived} + d$. Likewise, when an honest party sends a quantum banknote of value $d$ to a corrupted party, through the protocol of Fig. \ref{fig: protocol making a payment}, then (even if the corrupted party does not return \textsf{accept}) 
    $\textsf{AdversaryValueReceived} \leftarrow \textsf{AdversaryValueReceived} + d$.
    \item[(iv)] When $\mathcal{A}$ succesfully spends a quantum banknote of value $d$ to an honest party $P$, i.e. a corrupted party is the payer in the protocol from Fig. \ref{fig: protocol making a payment} and $P$ is the payee and returns \textsf{accept}, or when $\mathcal{A}$ pays $d$ coins to an honest party, then $\textsf{AdversaryValueCurrentOrSpent} \leftarrow \textsf{AdversaryValueCurrentOrSpent} + d$.
    \item[(v)]
    When a corrupted party receives $d$ coins from a banknote-contract, then $\textsf{AdversaryValueCurrentOrSpent} \leftarrow \textsf{AdversaryValueCurrentOrSpent} + d$. Notice that this can happen only in two ways: $\mathcal{A}$ successfully converts a quantum banknote of value $d$ to coins on $\mathcal{F}_{Ledg}$ (via the protocol of Fig. \ref{fig: protocol trading banknotes}), or a corrupted party successfully challenges a \textsf{BanknoteLost} claim (in this case $d = d_0$).
\end{itemize}
Intuitively, if our payment scheme is secure, then at no point in time should the adversary be able to make $\textsf{AdversaryValueCurrentOrSpent} - \textsf{AdversaryValueReceived} > 0$. This would mean that he has successfully spent/stolen value other than the one he received by virtue of corrupting a party or receiving honest payments. The following theorem formally captures this notion of security. First, we denote by $\mathcal{F}_{Ledg}\text{-}\textrm{EXEC}^{(MaxNetValue)}_{\pi, \mathcal{A}, \mathcal{E}}(\lambda, z)$ the maximum value of $\textsf{AdversaryValueCurrentOrSpent} - \textsf{AdversaryValueReceived}$ during an execution of $\pi$ with adversary $\mathcal{A}$ and environment $\mathcal{E}$, with global shared functionality $\mathcal{F}_{Ledg}$.

\begin{theorem}[Security]
\label{thm: security}
For any quantum polynomial-time adversary $\mathcal{A}$ and quantum polynomial-time environment $\mathcal{E}$, 
$$ \Pr[\mathcal{F}_{Ledg}\text{-}\textrm{EXEC}^{(MaxNetValue)}_{\pi, \mathcal{A}, \mathcal{E}}(\lambda, z) > 0] = negl(\lambda).$$
\end{theorem}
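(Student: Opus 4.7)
The plan is to proceed by contradiction. I would assume there exist quantum polynomial-time $\mathcal{A}$ and $\mathcal{E}$ for which $\Pr[\mathcal{F}_{Ledg}\text{-}\textrm{EXEC}^{(MaxNetValue)}_{\pi,\mathcal{A},\mathcal{E}}(\lambda,z) > 0]$ is non-negligible, and let $T$ denote the first time in the execution at which $\Delta := \textsf{AdversaryValueCurrentOrSpent} - \textsf{AdversaryValueReceived}$ becomes strictly positive. Inspecting the update rules $(i)$--$(v)$, the only events that can strictly increase $\Delta$ are of type $(iv)$ or $(v)$: either the adversary successfully spends a banknote (or coins) to an honest payee, or a corrupted party collects coins from a banknote-contract via $\textsf{RecoverCoins}$ or $\textsf{ChallengeClaim}$. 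Since $\mathcal{F}_{Ledg}$ debits coins exactly, pure coin transfers cannot by themselves drive $\Delta$ above $0$; the increase must therefore be attributable to a quantum banknote or a certificate action.

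The central bookkeeping device I would introduce is a \emph{virtual ledger} tracking, for each banknote-contract $ssid$ with current $\textsf{serial}=s$, a unique ``rightful holder'' of its deposited coins: initially the minter; updated to the payee whenever the payment protocol of \cref{fig: protocol making a payment} completes with $\textsf{accept}$; and updated appropriately by corruption and un-corruption events. A straightforward induction on the number of steps shows that, conditioned on no ``bad event'' occurring (defined below), one always has $\Delta \leq 0$, because every type-$(iv)$ or type-$(v)$ increase of $\Delta$ is matched by a corresponding debit from the pool of value rightfully owned by the adversary in the virtual ledger. I would define a bad event at time $T$ as either $(B_1)$ the global state contains two disjoint registers that simultaneously pass $\textsf{verify-bolt}(\cdot,s)$ for the same serial $s$ with non-negligible probability, or $(B_2)$ the global state contains (possibly split across the adversary and honest parties) a valid bolt with serial $s$ together with a string $c$ such that $\textsf{verify-certificate}(s,c)=1$. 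Minimality of $T$ together with the virtual-ledger updates would imply that at least one of $(B_1), (B_2)$ must occur at or just before $T$.

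The remainder of the argument consists of two reductions. From any $(\mathcal{A},\mathcal{E})$ triggering $(B_1)$ non-negligibly, I would build $\mathcal{B}_1$ that internally simulates the entire GUC execution (all honest parties, $\mathcal{F}_{Ledg}$, and $\mathcal{E}$ are polynomial-time) using the $(\textsf{gen-bolt},\textsf{verify-bolt})$ received from the $\textsf{Counterfeit}$ challenger for every bolt generation; at time $T$, $\mathcal{B}_1$ forwards to the challenger the two colliding registers together with the serial $s$, winning the game of \cref{def: lightning basic security}. From an $(\mathcal{A},\mathcal{E})$ triggering $(B_2)$ non-negligibly, I would build $\mathcal{B}_2$ that analogously embeds the setup tuple received from the $\textsf{Forge-certificate}$ challenger, and at time $T$ outputs the forged certificate $c$ together with the bolt $\ket\psi$ and serial $s$ (wherever in the simulated global state these are located), winning the game of \cref{def: extra property}. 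Either conclusion contradicts the security of the underlying lightning scheme.

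The main obstacle I anticipate is making the virtual-ledger invariant airtight under adaptive corruption. One must verify, for each corruption/un-corruption event, that the bookkeeping updates of the virtual ledger exactly mirror the updates of $\textsf{AdversaryValueReceived}$ and $\textsf{AdversaryValueCurrentOrSpent}$, so that neither side of the induction drifts; the convention that banknotes are never returned upon un-corruption is essential here. A related subtlety concerns the honest monitoring protocol of \cref{fig: protocol preventing}: its period $t_r$ must be chosen strictly smaller than $t_{tr}$, so that whenever the adversary files a $\textsf{BanknoteLost}$ claim on a serial $s$ whose bolt is honestly held, the honest challenge message is guaranteed to reach $\mathcal{F}_{Ledg}$ before settlement; since, by assumption, the adversary has no power to delay messages, the only way the adversary can still extract coins from such a contract is to produce a valid certificate $c$ on its own, which is precisely event $(B_2)$. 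Once these two points are carefully worked out, the reductions themselves are syntactically straightforward, because both challengers hand over the full $\textsf{QL.Setup}$ tuple, so $\mathcal{B}_1$ and $\mathcal{B}_2$ can honestly simulate every other bolt and certificate produced during the execution.
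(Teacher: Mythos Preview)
Your proposal is correct and follows essentially the same logical path as the paper's proof: both argue by contradiction, observe that only actions of type (iv) and (v) can push $\Delta$ positive, and reduce the two resulting scenarios to the games $\textsf{Counterfeit}$ and $\textsf{Forge-certificate}$ respectively, with the reduction machine internally simulating the entire execution (honest parties, $\mathcal{F}_{Ledg}$, and $\mathcal{E}$) using the setup tuple handed over by the challenger.

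The organizational difference is that the paper does a direct case split (double-spend in (iv); then, within (v), sub-cases for $\textsf{RecoverCoins}$ and $\textsf{ChallengeClaim}$, each further split according to whether the adversary ever held the bolt), whereas you package the same logic as a single invariant maintained by a virtual ledger of ``rightful holders'' together with two bad events $(B_1),(B_2)$. Your formulation is cleaner and makes the induction over execution steps explicit; it also surfaces the timing constraint $t_r<t_{tr}$ for the monitoring protocol, which the paper uses only implicitly. The paper's version, in turn, spells out more concretely why each sub-case of (v) forces either a fresh certificate without the bolt or a certificate alongside a previously-spent bolt, which in your write-up is compressed into the single event $(B_2)$; when you flesh out the proof you will want to make that case distinction explicit to justify why $(B_2)$ is indeed the only obstruction in the (v) branch.
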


The rationale behind considering executions of $\pi$ and quantifying over all possible adversaries and environments is that doing so captures all possible ways in which a (dynamically changing) system of honest parties running our payment system alongside an adversary  can behave (where the adversary respects our adversarial model).

Recall that, in an execution of $\pi$, the environment has the ability to invoke new parties and assign to them new unique PIDs. Since in $\mathcal{F}_{Ledg}$ the PIDs are used to register parties and initialize their number of coins, this means that the environment has the ability to pick the initial number of coins of any new party that it invokes. Moreover, by writing inputs to the parties input tapes, the environment can instruct honest parties to perform the honest protocols from Section \ref{sec: main} in any order it likes. Quantifying over all adversaries and environments, in the statement of Theorem \ref{thm: security} means that the adversary and the environment can intuitively be thought of as one single adversary. The statement of the theorem thus captures security against realistic scenarios in which new parties can be adversarially created with an adversarially chosen number of coins, and they can be instructed to perform the honest protocols of the payment system from Section \ref{sec: main}, in whatever sequence is convenient to the adversary. 

\begin{proof}[Proof of Theorem \ref{thm: security}]
Suppose for a contradiction that there exists $\mathcal{A}$ and $\mathcal{E}$ such that
\begin{equation}
\label{eq: 1}
\Pr[\mathcal{F}_{Ledg}\text{-}\textrm{EXEC}^{(MaxNetValue)}_{\pi, \mathcal{A}, \mathcal{E}}(\lambda, z) > 0] \neq negl(\lambda).
\end{equation}

Then, we go through all of the possible ways that an adversary can increase its net value, i.e. increase the quantity $\textsf{AdversaryValueCurrentOrSpent} - \textsf{AdversaryValueReceived}$: the adversary can do so through actions from items (ii), (iv) and (v) above. Amongst these, it is easy to see that action (ii) never results in $\textsf{AdversaryValueCurrentOrSpent} - \textsf{AdversaryValueReceived} > 0$. Thus, in order for \eqref{eq: 1} to hold, it must be the case that one of the following happens with non-negligible probability within an execution of $\pi$ with adversary $\mathcal{A}$ and environment $\mathcal{E}$.
\begin{itemize}
\item An action from item (iv) resulted in a positive net value for $\mathcal{A}$, i.e. $\textsf{AdversaryValueCurrentOrSpent} - \textsf{AdversaryValueReceived} > 0$. Notice that for this to happen it must be the case that $\mathcal{A}$ has double-spent a banknote, i.e. $\mathcal{A}$ has produced two banknotes with the same serial number that have both been accepted by honest parties in a payment protocol of Fig. \ref{fig: protocol making a payment}, and so they have both passed verification. But then, it is straightforward to see that we can use this adversary, together with $\mathcal{E}$ to construct an adversary $\mathcal{A}'$ that breaks the security of the quantum lightning scheme (i.e. game \textsf{Counterfeit}): $\mathcal{A}'$ simply simulates an execution of protocol $\pi$ with adversary $\mathcal{A}$ and environment $\mathcal{E}$, and with non-negligible probability the adversary $\mathcal{A}$ in this execution produces two banknotes with the same serial number. $\mathcal{A}'$ uses these banknotes to win the security game of quantum lightning. 
\item An action from item (v) resulted in a positive net value for $\mathcal{A}$. Then, notice that for this to happen it must be that either: 

\begin{itemize}
    \item $\mathcal{A}$ has sent a message $(\textsf{Trigger}, ssid, (\textsf{RecoverCoins}, c), 0)$ to $\mathcal{F}_{Ledg}$ for some $ssid$ and $c$ such that $\textsf{verify-certificate}(s,c) = 1$, where $ssid.\textsf{state} = (s, \text{``No active claim''})$, and the last ``make a payment'' protocol (from Fig. \ref{fig: protocol making a payment}) referencing $ssid$ had an honest party as payee which remained honest at least up until after $\mathcal{A}$ sent his message (or the banknote-contract was initialized by an honest user and the banknote was never spent). But then, one of the following must have happened: 
\begin{itemize}
\item $\mathcal{A}$ possessed a bolt $\ket{\Psi}$ with serial number $s$ at some point, before $\ket{\Psi}$ was spent to the honest user. Then, this adversary would have recovered a valid $c$ and also spent a bolt with serial number $s$ successfully to an honest user. But such an $\mathcal{A}$, together with $\mathcal{E}$, can be used to win game $\textsf{Forge-certificate}$ from Definition \ref{def: extra property} with non-negligible probability, with a similar reduction to the one above, thus violating the property of Definition \ref{def: extra property}. 
\item $\mathcal{A}$ recovered $c$ such that $\textsf{verify-certificate}(s,c) = 1$ without ever possessing a valid bolt with serial number $s$. Again, such an adversary could be used, together with $\mathcal{E}$ to win $\textsf{Forge-certificate}$ from Definition \ref{def: extra property}).
\item $\mathcal{A}$ has successfully changed the serial number of contract $ssid$ to $s$ from some previous $s'$ without possessing a bolt $\ket{\Psi}$ with serial number $s'$. This cannot happen since any honest user who possesses the valid bolt with serial number $s'$ performs the protocol of Fig. \ref{fig: protocol preventing}.
\end{itemize}

\item $\mathcal{A}$ has sent a message $(\textsf{Trigger}, ssid, (\textsf{ChallengeClaim}, c), 0)$ to $\mathcal{F}_{Ledg}$ for some $ssid$ such $ssid.\textsf{state} = (s,\text{``Claim by $pid$ at time $t$''})$ for some $s, pid, t$ with $pid$ honest and $c$ such that $\textsf{verify-certificate}(s,c) = 1$. Since $pid$ is honest, he must be the last to have possessed a valid bolt with serial number $s$. Then, there are two possibilities:
\begin{itemize}
\item $\mathcal{A}$ never possessed a valid bolt with serial number $s$, and succeeded in recovering $c$ such that $\textsf{verify-certificate}(s,c) = 1$. Analogously to earlier, this adversary, together with $\mathcal{E}$, can be used to win $\textsf{Forge-certificate}$.
\item $\mathcal{A}$ possessed a bolt $\ket{\Psi}$ with serial number $s$ at some point, before $\ket{\Psi}$ was spent to an honest user. Analogously to earlier, this means such an $\mathcal{A}$ both recovered a $c$ with $\textsf{verify-certificate}(s,c) = 1$ and spent a bolt with serial number $s$ successfully. Such an $\mathcal{A}$ can be used, together with $\mathcal{E}$, to win $\textsf{Forge-certificate}$. 
\end{itemize}

\end{itemize} 

\end{itemize}

\end{proof}
\paragraph{Remark:} The security guarantee of Theorem \ref{thm: security} establishes that an adversary cannot end up with more value than he started with (after taking into account the amount he received from honest parties and the amount he succesfully spent to honest parties). However, we do not analyze formally attacks which do not make the adversary gain value directly, but which ``sabotage'' honest parties, making them lose value. We believe that it should be possible to capture such attacks within our model by modifying the way we keep track of the adversary's value, but we leave this analysis for future work. We discuss and formalize the notion of ``sabotage'' in detail in Section \ref{sec: sabotage}.


\section{Practical issues in a less idealized setting}
\label{sec: practical issues}
The ideal functionality $\mathcal{F}_{Ledg}$ defined in Section \ref{sec: blockchains} does not capture adversaries that are allowed to see messages sent by honest parties to $\mathcal{F}_{Ledg}$ before they are registered on the ledger, and who could try to use this information to their advantage: by definition of $\mathcal{F}_{Ledg}$, messages are processed and registered on the ledger in exactly the order that they are sent, and are not seen by the adversary until they make it onto the ledger. While such a definition of $\mathcal{F}_{Ledg}$ makes for a clear exposition and clean proof of security, it is in practice unrealistic. In typical real-world implementations of blockchains, miners (and hence potentially adversaries) can see a pool of pending messages which have not yet been processed and registered on the blockchain, and can potentially delay the processing of certain messages, while speeding up the processing of others. This makes the system susceptible to the attacks described in the following subsection.

\subsection{Attacks outside of the idealized setting}
\label{sec: possible attacks}
\begin{enumerate}[(i)]
    \item An adversary could file a malicious ``lost banknote claim'' (running the protocol of Fig. \ref{fig: protocol making a claim}) corresponding to a serial number $s$ of a quantum banknote that he does not possess. This would prompt an honest party who possesses a valid quantum banknote with serial number $s$ to publish the corresponding classical certificate $x$, in order to stop the malicious claim. If the adversary could read this message before it is published on the ledger, it could instruct a corrupt party to also publish $x$, and attempt to have this message appear first on the ledger. This would effectively result in the adversary having stolen the honest party's value associated to the serial number $s$.
    \item Suppose an honest party wants to trade their quantum banknote with serial number $s$ (registered on the ledger) for the coins deposited in the corresponding contract. The honesty party executes the protocol of Fig. \ref{fig: protocol trading banknotes}. This includes publishing the classical certificate $x$ associated to $s$. An adversary who sees $x$ before it is registered on the ledger, could instruct a corrupt party to make the same claim for for the coins in the contract associated to $s$. If the corrupt party's message is processed faster than the honest party's message, the adversary has succeeded in stealing the coins deposited in the contract.
    \item Suppose an honest party has lost a valid quantum banknote with serial number $s$ associated to some contract on the ledger. The honest party files a ``lost banknote claim'' by executing the protocol of Fig. \ref{fig: protocol making a claim}. An adversary who hears this before the claim is registered on the ledger could instruct a corrupt party to make a similar claim, and have it appear on the ledger before the honest claim. This would result in the corrupt party obtaining a valid quantum banknote associated to the above contract.
    \item \label{it:sabotage_in_ideal_model} The unforgeability property alone is not enough for \emph{public} quantum money, as it allows sabotage\mpar{\label{mar:motivation_sabotage}}: an attacker might be able to burn other people's money, without a direct gain from the attack. Consider an adversary that wants to harm its competitor. The adversary does not follow the honest protocol that generates the quantum money state. Instead, it could  create a tweaked quantum money which has a noticeable probability to pass verification once, and fail the second time. This way, the adversary could buy some merchandise using this tweaked quantum money. When the merchant will try to pay to others using this quantum money, the verification will fail -- the receiver will run the verification (and this is exactly the second verification), which will cause it to fail. This is not an issue with schemes in which the verification is a projective (or very close to being projective), but, for example, the scheme by Farhi et al.~\cite{FGH+12} is \emph{not} projective.
    
    Indeed, even though our security proof (see Theorem~\ref{thm: security}) guarantees that an adversary cannot end up with more money that he was legitimately given, it does not rule out sabotage. 
\item Similarly to the sabotage attack on the verification of quantum money, an attacker might sabotage the ability to produce a valid certificate from the verified money. 
\end{enumerate}

In the next Section \ref{sec: bolt_to_signature}, we will introduce a novel property of quantum lightning schemes, which we call \textit{bolt-to-signature capability}, we give a provably-secure construction of it. We will employ this in Section \ref{sec: resolution practical} to give a somewhat elegant resolution to issues (i) and (ii) above. In Section \ref{sec: sabotage}, we show that our scheme is secure against sabotage attacks, hence resolving issues (iv) and (v). In \cref{sec: resolution practical} we discuss a practical approach to resolving issue (iii) above, though it does not contain a formal analysis for reasons which are made clear there.

\subsection{Trading a Bolt for a Signature}
\label{sec: bolt_to_signature}
We define a new property of a quantum lightning scheme which we call ``trading a bolt for a signature'' (and we say that a lightning scheme has \textit{bolt-to-signature} capability). This is an extension of the property of ``trading a bolt for a certificate'' defined in Section \ref{sec: lightning}. The known concrete constructions of quantum lightning do not possess this property, but we will show (Fig. \ref{fig: bolt-to-signature}) that a lightning scheme with \textit{bolt-to-certificate} capability can be bootstrapped to obtain a lightning scheme with \textit{bolt-to-signature} capability. We envision that this primitive could find application elsewhere, and is of independent interest. 

We start with an informal description of this property, highlighting the difference between the \textit{certificate} and \textit{signature} properties. Suppose Alice shows Bob a certificate with a serial number $s$. Bob can conclude that Alice cannot hold a bolt with the same serial number. In particular, if she held such bolt, it means she must have measured and destroyed it to produce the certificate. For the \textit{bolt-to-signature} property, we ask the following: 

\begin{itemize}
    \item There is a way for Alice, who holds a bolt with serial number $s$, to produce a signature of any message $\alpha$ of her choice, \textit{with respect to} serial number $s$.
    \item The signature should be verifiable by anyone who knows $s$. 
    \item Just like a certificate, anyone who accepts the signature of $\alpha$ with respect to $s$ can conclude that Alice can no longer hold a quantum money with serial number $s$;
    \item (one-time security) As long as Alice signs a \textit{single} message $\alpha$, no one other than Alice should be able to forge a signature for a message $\alpha' \neq \alpha$ (even though her state is no longer a valid bolt, Alice can still sign more than one message, but the unforgeability guarantee no longer holds).
\end{itemize}

 A quantum lightning scheme with \textit{bolt-to-signature} capability is a quantum lightning scheme with two additional algorithms: $\textsf{gen-sig}$ is a QPT algorithm which receives as input a quantum state, a serial number, and a message of any length, and outputs a classical signature. $\textsf{verify-sig}$ is a PPT algorithm which receives a serial number, a message of any length and a signature, and either accepts or rejects. Thus, we modify the setup procedure of the lightning scheme so that \textsf{QL.Setup} outputs a tuple $(\textsf{gen-bolt}, \textsf{verify-bolt}, \textsf{gen-sig}, \textsf{verify-sig})$.
 
 The definition of the property has a completeness and a soundness part. The latter is formally defined through the following game \textsf{Forge-sig}. The game \textsf{Forge-sig} is similar in spirit to the game for \emph{onetime} security of a standard digital signature scheme -- see, e.g.~\cite[Definition 12.14]{KL14}, ~\cite[Definition 6.4.2]{Gol04}. 
 \begin{itemize}
     \item The challenger runs $(\textsf{gen-bolt}, \textsf{verify-bolt}, \textsf{gen-sig}, \textsf{verify-sig}) \leftarrow \textsf{QL.Setup}(1^\lambda)$ and sends this tuple to $\mathcal{A}$.
     \item $\mathcal{A}$ sends $(\ket{\psi}, s)$ and $\alpha$ to the challenger.
     \item The challenger runs $\textsf{verify-bolt}(\ket{\psi}, s)$. If this rejects, the challenger outputs ``$0$''. Else it proceeds to the next step.
     \item Let $\ket{\psi'}$ be the leftover state. The challenger runs $\sigma \leftarrow \textsf{gen-sig}(\ket{\psi'},s,\alpha)$ and sends $\sigma$ to $\adv$.
     \item $\adv$ returns a pair $(\alpha',\sigma')$.
     \item The challenger checks that $\alpha \neq \alpha'$ and runs $\textsf{verify-sig}(s, \alpha', \sigma')$. If the latter accepts, the challenger outputs ``$1$''.
     \end{itemize}
Let $\textsf{Forge-sig}(\adv,\lambda)$ be the random variable for the outcome of the game.

\begin{definition}[Trading a bolt for a signature]
\label{def: bolt-to-sig}
We say that a quantum lightning scheme has bolt-to-signature capability if the following holds:
\begin{itemize}
    \item[(I)] For every $\alpha$:
    \begin{align}
\Pr[ \textsf{verify-sig}(s, \sigma, \alpha) =  1:  &(\textsf{gen-bolt}, \textsf{verify-bolt}, \textsf{gen-sig}, \textsf{verify-sig}) \leftarrow \textsf{QL.Setup}(1^{\lambda}), \\
&(\ket{\psi}, s) \leftarrow \textsf{gen-bolt}, \\ & \sigma \leftarrow \textsf{gen-sig}(\ket{\psi}, s, \alpha)] = 1-negl(\lambda)
\end{align}
    \item[(II)] For all polynomial-time quantum algorithms $\mathcal{A}$, $$ \Pr[ \textsf{Forge-sig}(\mathcal{A}, \lambda)= 1 ] = negl(\lambda).$$
\end{itemize}
\end{definition}


Informally, the security definition based on game \textsf{Forge-sig} guarantees that:
\begin{itemize}
    \item As long as Alice signs only one message with respect to $s$, no one except her can forge a signature of another message with respect to $s$. This property is very similar to one-time unforgeability for digital signatures, if one views the bolt as a secret key, and the serial number $s$ as a public key. The difference is that the ``secret key'' in this case has meaning beyond enabling the signing of messages: it can be spent as quantum money. This is what make the next property important.
    \item Signing a message destroys the bolt, i.e. it is infeasible to simultaneously produce both a valid signature with respect to a serial number $s$ and a bolt with serial number $s$ which passes verification (an adversary who can succeed at this is easily seen to imply an adversary who wins $\textsf{Forge-sig}$). This property is unique to the quantum lightning setting. It says that signing a message with respect to serial number $s$ inevitably destroys the bolt with serial number $s$. We remark that it is possible to sign more messages with the leftover state, but such state will no longer pass the quantum lightning verification procedure, i.e. it can no longer be spent. One can think of the bolt with serial number $s$ as being ``burnt'' once the owner decides to use it to sign a message.
\end{itemize} 

We are now ready to present our construction of a quantum lightning scheme with bolt-to-signature capability. The construction is based on the hash-and-sign paradigm (see, e.g.,~\cite{KL14}), as well as Lamport signatures~\cite{Lam79} (familiarity with those is helpful, although our presentation is self-contained). For convenience, we use $\mathcal{H}$ to denote a family of fixed-length hash functions, and we use the notation $H \leftarrow \mathcal{H}(1^n)$ to denote an efficent random sampling of a hash function $H$ with output length $n$ from the family $\mathcal{H}$.

\begin{figure}[H]
\rule[1ex]{16.5cm}{0.5pt}
Given: A quantum lightning scheme with bolt-to-certificate capability with setup procedure $\textsf{QLC.Setup}$. A family $\mathcal{H}$ of fixed-length collision-resistant hash functions.

\vspace{2mm}

$\textsf{QLDS.Setup}$: takes as input a security parameter $\lambda$, and outputs a tuple of (descriptions of) algorithms $(\textsf{QLDS.Gen}, \textsf{QLDS.Ver},  \textsf{QLDS.gen-sig}, \textsf{QLDS}.\verifysig)$.
\begin{itemize}
\item Let $n = \poly$. Sample $(\textsf{gen-bolt}, \textsf{verify-bolt}, \textsf{gen-bolt}, \textsf{verify-certificate}) \leftarrow \textsf{QLC.Setup()}$. Sample $H:\{0,1\}^* \rightarrow \{0,1\}^n \leftarrow \mathcal{H}(1^n)$.
\item \textsf{QLDS.Gen}: Run \textsf{gen-bolt} $2n$ times to obtain bolts $(\ket{\psi_i}  \in \mathcal{H}_{\lambda}, s_i)$, for $i=1,..,2n$. Let $\ket{\Psi} = \bigotimes_{i=1,..,2n} \ket{\psi_i} \in \mathcal{H}_{\lambda}^{\otimes 2n}$. Let $s = s_1 || .. || s_{2n}$. Output $(\ket{\Psi}, s)$.
\item \textsf{QLDS.Ver}: Takes as input a state $\ket{\Psi} \in \mathcal{H}_{\lambda}^{\otimes 2n}$. Applies \textsf{QLDS.Ver} to each of the $2n$ factors, and outputs ``accept'' if all $2n$ verifications output ``accept''.
\item \textsf{QLDS.gen-sig}: Takes as input a state $\ket{\Psi} \in \mathcal{H}_{\lambda}^{\otimes 2n}$, a serial number $s$ and a message $\alpha$ of any length. Lets $\beta = H(\alpha) \in \{0,1\}^n$. For $i=1,..,n$:\\
Run $\textsf{gen-certificate}$ on the $(\beta_i\cdot n +i)$-th factor to obtain a certificate $x_i$. Outputs $\sigma = x_1 ||.. ||x_n$.
\item \textsf{QLDS.verify-sig}: Takes as input a serial number $s$, a message $\alpha$, a signature $\sigma$. Parses $s$ as $s = s_1||..||s_{2n}$, $\sigma$ as $\sigma = x_1||..||x_n$. Computes $\beta = H(\alpha)$. For $i = 1,..,n$:\\
Let $r_i = \textsf{verify-certificate}(s_{\beta_i \cdot n + i}, x_i)$. Output ``accept'' if $r_i = 1$ for all $i$.
\end{itemize}

\rule[2ex]{16.5cm}{0.5pt}\vspace{-.5cm}
\caption{Our construction of a quantum lightning scheme with bolt-to-signature capability $QLDS$}
  \label{fig: bolt-to-signature}
\end{figure}

We state and prove the main theorem of this section.

\begin{theorem} If there exists a secure quantum lightning scheme which has a bolt-to-certificate capability, and a family of fixed length quantum-secure collision-resistant hash functions, then there exists a quantum lightning scheme with bolt-to-signature capability.

Specifically, under the assumptions that $\textsf{QLC}$ is a secure quantum lightning scheme with a bolt-to-certificate capability and that $\mathcal{H}$ is a fixed-length family of collision-resistant hash functions, the construction $\textsf{QLDS}$ of Fig. \ref{fig: bolt-to-signature} is a secure quantum lightning scheme with bolt-to-signature capability.
\label{thm:certificate_implies_signature}
\end{theorem}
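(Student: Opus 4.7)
My plan is to verify the four required properties of $\textsf{QLDS}$: completeness and security as a quantum lightning scheme (Definitions \ref{def:quantum_lightning_completeness} and \ref{def: lightning basic security}), and completeness and security of its bolt-to-signature capability (Definition \ref{def: bolt-to-sig}). The first three are essentially bookkeeping. Honest generation produces a tensor of $2n$ independent QLC bolts, $\textsf{QLDS.Ver}$ is the coordinatewise application of $\textsf{verify-bolt}$, and $\textsf{QLDS.gen-sig}$ runs $\textsf{gen-certificate}$ on $n$ of these $2n$ factors. Thus completeness as a lightning scheme and completeness of bolt-to-signature follow from the corresponding completeness clauses of QLC by a union bound over $2n$ (resp.\ $n$) negligible failures. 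Security as a lightning scheme (\textsf{Counterfeit}) is by a one-line reduction: two $\textsf{QLDS}$ bolts that pass $\textsf{QLDS.Ver}$ with the same serial number $s=s_1\|\cdots\|s_{2n}$ give, on their first register, two QLC bolts that pass $\textsf{verify-bolt}$ with serial number $s_1$.

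The main step is the security of the bolt-to-signature capability. Suppose $\adv$ wins $\textsf{Forge-sig}$ with non-negligible probability. On input $(\ket{\Psi},s,\alpha)$, the challenger runs $\textsf{QLDS.Ver}$ and, conditioned on acceptance, runs $\textsf{gen-certificate}$ on the $n$ registers at positions $\beta_i\cdot n+i$ with $\beta=H(\alpha)$, leaving the $n$ registers at the complementary positions $(1-\beta_i)\cdot n+i$ untouched. The forgery $(\alpha',\sigma')$ satisfies $\alpha'\neq\alpha$ and $\textsf{verify-sig}(s,\alpha',\sigma')=1$; write $\beta'=H(\alpha')$ and $\sigma'=x'_1\|\cdots\|x'_n$. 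Now case-split. If $\beta=\beta'$, then $(\alpha,\alpha')$ is a collision for $H$, and a reduction that samples the QLC setup itself and simulates $\textsf{Forge-sig}$ internally breaks the collision-resistance of $\mathcal{H}$. Otherwise pick any $i\in[n]$ with $\beta'_i=1-\beta_i$; the position $k:=\beta'_i\cdot n+i=(1-\beta_i)\cdot n+i$ is provably not among the destroyed indices (the equality $\beta_m\cdot n+m=k$ forces $m=i$ and then $\beta_i=1-\beta_i$), while acceptance of $\textsf{verify-sig}$ forces $\textsf{verify-certificate}(s_k,x'_i)=1$. A reduction that instead samples $H$ itself and receives the QLC setup from a $\textsf{Forge-certificate}$ challenger now extracts the $k$-th register of the surviving state and submits $(x'_i,\text{this register},s_k)$ to win $\textsf{Forge-certificate}$. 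Since the two cases partition the successful forgery event, one of the two reductions succeeds with non-negligible probability, contradicting the hypotheses.

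The delicate point is justifying that the extracted $k$-th factor is genuinely a valid QLC bolt for $s_k$, given that $\adv$'s $\ket{\Psi}$ may be adversarially chosen and entangled across the $2n$ registers. The plan is to invoke the non-perturbation clause of Definition \ref{def:quantum_lightning_completeness} (marker \cref{mar:no_perturbation}), which, combined with completeness, makes $\textsf{verify-bolt}$ close to the projection $\mathcal{M}_{\ket{\psi_s}}$ onto the canonical bolt with serial number $s$. Because $\textsf{QLDS.Ver}$ is a product measurement across the $2n$ registers, conditioning on its acceptance projects $\ket{\Psi}$, up to negligible error, onto the product of canonical bolts $\bigotimes_j\ket{\psi_{s_j}}$; and because $\textsf{gen-certificate}$ acts nontrivially only on the destroyed registers, the untouched $k$-th register remains (up to negligible error) the canonical $\ket{\psi_{s_k}}$, which passes $\textsf{verify-bolt}(\cdot,s_k)$ with overwhelming probability. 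Tracking these approximation errors through the reduction changes the success probability only by a negligible amount and completes the argument.
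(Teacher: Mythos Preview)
Your overall architecture matches the paper's: reduce completeness and \textsf{Counterfeit} by a union bound over the $2n$ factors, and for \textsf{Forge-sig} case-split on whether $H(\alpha)=H(\alpha')$, handing the collision case to $\mathcal{H}$ and the non-collision case to \textsf{Forge-certificate}. The paper does exactly this.

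The gap is in your ``delicate point.'' You claim that the non-perturbation clause of Definition~\ref{def:quantum_lightning_completeness} makes $\textsf{verify-bolt}(\cdot,s)$ close to the rank-one projector $\mathcal{M}_{\ket{\psi_s}}$, so that after $\textsf{QLDS.Ver}$ accepts, the adversarially chosen $\ket{\Psi}$ is projected onto a product of canonical bolts. But the clause says no such thing: it only asserts that an \emph{honestly generated} bolt is (approximately) a fixed point of verification. It places no constraint on the acceptance subspace for adversarial inputs, and in particular does not force that subspace to be one-dimensional. The paper itself emphasizes this elsewhere (Section~\ref{sec: sabotage}): some lightning schemes admit adversarial states that pass one verification and fail the next, precisely because verification is not projective. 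So once you have verified register $k$ and then acted on the (possibly entangled) $\beta$-registers, you have no guarantee that register $k$ still passes $\textsf{verify-bolt}$, and the appeal to non-perturbation does not rescue this.

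The paper sidesteps the issue by a different reduction. Its adversary $\mathcal{B}$ does \emph{not} faithfully simulate the \textsf{Forge-sig} challenger: it runs $\textsf{verify-bolt}$ only on the $n$ registers indexed by $\beta$ and leaves the complementary registers, including register $k$, completely untouched; the single application of $\textsf{verify-bolt}$ to register $k$ is deferred to the \textsf{Forge-certificate} challenger. The point is that operations on the $\beta$-registers (verification and certificate extraction) commute with verification on register $k$, so one can reorder the real \textsf{Forge-sig} experiment to push the verification of register $k$ after $\mathcal{A}$'s forgery; the event ``$\mathcal{A}$ wins and $\beta\neq\beta'$'' then implies both $\textsf{verify-bolt}(\cdot,s_k)$ and $\textsf{verify-certificate}(s_k,x'_i)$ accept on the untouched register, which is exactly $\mathcal{B}$ winning. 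Your reduction can be repaired along the same lines: skip the verification of the non-$\beta$ registers inside the simulation and let the \textsf{Forge-certificate} challenger perform the one verification you actually need.
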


\begin{proof}
First of all, it is clear that $\textsf{QLDS}$ is still a secure quantum lightning scheme according to Definition \ref{def: lightning basic security}.

The fact that $\textsf{QLDS}$ satisfies the correctness requirement of a quantum lightning scheme with bolt-to-signature capability ($(I)$ in Definition \ref{def: bolt-to-sig}) follows from the correctness property of $\textsf{QLC}$ (namely $(I)$ in Definition \ref{def: extra property}) and that $n$ is at most polynomial in $\lambda$. 


Assume towards a contradiction that there exists a QPT adversary $\adv$ that wins the $\sigforge$ game with non-negligible probability $\epsilon(\secpar)$. We construct an adversary $\mathcal{B}$ that wins the game $\textsf{Forge-certificate}$ (from Definition \ref{def: extra property}) with probability at least $\epsilon(\lambda)$. $\mathcal{B}$ runs as follows:
\begin{itemize}
    \item $\mathcal{B}$ receives a tuple $(\textsf{gen-bolt}, \textsf{verify-bolt}, \textsf{gen-certificate}, \textsf{verify-certificate})$ from the challenger.
    \item $\mathcal{B}$ constructs algorithms $\textsf{gen-sig}$ and $\textsf{verify-sig}$ from $\textsf{gen-certificate}$ and \textsf{verify-certificate}. Sends the tuple $(\textsf{gen-bolt}, \textsf{verify-bolt}, \textsf{gen-sig}, \textsf{verify-sig})$ to $\mathcal{A}$. 
    \item $\mathcal{A}$ returns a pair $(\ket{\psi}, s)$ and a message $\alpha$. Let $\beta = H(\alpha)$. $\mathcal{B}$ simulates the next steps of the challenger in $\textsf{Forge-sig}$ with the following modification: it runs $\textsf{verify-bolt}(\ket{\psi},s)$ but only measures the registeres corresponding to indexes associated with $\beta$ (i.e. $\beta_i \cdot n +i$ for $i \in [n]$). It then runs $\sigma \leftarrow \textsf{gen-sig}(\ket{\psi'}, s, \alpha)$, where $\ket{\psi'}$ is the leftover state after verification. $\mathcal{B}$ sends $\sigma$ to $\adv$, and $\mathcal{A}$ returns a pair $(\alpha', \sigma')$, where $\sigma' = x'_1 || .. || x'_n$.
    \item $\mathcal{B}$ computes $\beta = H(\alpha)$ and $\beta' = H(\alpha')$. If $\beta \neq \beta'$, let $i$ be the first index such that $\beta_i \neq \beta_i'$. $\mathcal{B}$ outputs $(\ket{\psi_{\beta_i' \cdot n +i}}, s_{\beta_i' \cdot n +i})$ and $x'_i$.
\end{itemize}
We analyze the winning probability of $\mathcal{B}$ in game \textsf{Forge-certificate}. With probability at least $\epsilon(\lambda)$ it is $\alpha \neq \alpha'$ (otherwise $\mathcal{A}$ simply loses). Moreover, with overwhelming probability, it must be $\beta \neq \beta'$, otherwise it is immediate such an $\mathcal{A}$ could be used to break collision-resistance of $H$. Hence, with non-negligible probability, there is an index $i$ such that $\beta_i\neq \beta_i'$. Using the definition of \textsf{QLDS.verify-sig}, we deduce that with probability at least $\epsilon$ it must be that $\textsf{verify-certificate}(s_{\beta_i' \cdot n +i}, x'_i) =1$. Moreover, the state $(\ket{\psi_{\beta_i' \cdot n +i}}$ was not measured, and it must pass verification with probability at least $\epsilon(\lambda)$.

\end{proof}

\subsection{Security against Sabotage}
\label{sec: sabotage}
To address Item (\ref{it:sabotage_in_ideal_model}) in \cref{sec: possible attacks}, we define two security games that capture the notion of sabotage; the first is denoted $\sabotagemoney$:
\begin{itemize}
    \item The challenger runs $(\textsf{gen-bolt}, \textsf{verify-bolt}) \leftarrow \textsf{QL.Setup}(\lambda)$ and sends $(\textsf{gen-bolt}, \textsf{verify-bolt})$ to $\mathcal{A}$.
    \item $\adv$ outputs a quantum state $\ket{\psi}$ and sends it to the challenger.
    \item The challenger runs $\textsf{verify-bolt}$ two consecutive times on the quantum state $\ket{\psi}$. 
    \item The adversary wins if the first verification accepts with a serial number $s$ and the second rejects, or accepts with a serial number $s'\neq s$. Let $\sabotagemoney(\adv,\lambda)$ be the random variable that is $1$ if the adversary $\adv$ wins, and is $0$ otherwise.
\end{itemize}
\begin{definition}[Security against sabotage] 
A quantum lighting scheme is secure against sabotage if for every QPT $\adv$ there exists a negligible function $\negl[]$ such that:
\begin{equation}
    \Pr(\sabotagemoney(\adv,\lambda)=1)=\negl
    \label{eq:sabotage_lighning}
\end{equation}
\label{def:sabotage_lightning}
\end{definition}
 The security against sabotage was first defined in the context of quantum money in ~\cite{BS16} (though, the term sabotage was not used).

We extend the notion of sabotage in the natural way for schemes with bolt-to-certificate or bolt-to-signature capability. Our goal is to avoid a scenario in which an adversary gives a user a quantum lightning state which passes verification, but later fails to produce a valid certificate or signature.
We define the following experiment, $\sabotagecertificate$:
\begin{enumerate}
     \item The challenger runs $(\textsf{gen-bolt}, \textsf{verify-bolt},\gencertificate,\verifycertificate) \leftarrow \textsf{QL.Setup}(\lambda)$ and sends that tuple to $\mathcal{A}$.
    \item $\adv$ sends $\ket{\psi},s$ to the challenger.
    \item The challenger runs $\textsf{verify-bolt}(\ket{\psi},s)$. If verification fails, set $r=1$. Otherwise, the challenger uses the post-measured state $\ket{\psi'}$ to generate a certificate $c\gets \gencertificate(\ket{\psi'},s)$, and checks whether it is a valid certificate:   $r \gets \gencertificate(\ket{\psi'},s)$.
    \item $\adv$ wins if $r=0$. Let $\sabotagecertificate(\adv,\lambda)$ be the random variable that is $1$ if the adversary $\adv$ wins, and is $0$ otherwise.
\end{enumerate}
\begin{definition}
A quantum lighting scheme with a bolt-to-certificate capability is secure against sabotage if, in addition to the requirement in \cref{def:sabotage_lightning}, for every QPT $\adv$ there exists a negligible function $\negl[]$ such that:
\begin{equation}
    \Pr(\sabotagecertificate(\adv,\lambda)=1)=\negl
    \label{eq:sabotage_certificate}
\end{equation}
\label{def:sabotage_certificate}
\end{definition}

We suspect that Zhandry's construction based on non-collapsing hash functions, as well as the construction by Farhi et al. (see p.~\pageref{par:farhi_et_al}) does not satisfy the security against sabotage. Fortunately, the construction based on the multi-collision resistance is secure against sabotage:
\begin{prop}
The quantum lighting construction with the bolt to certificate capability discussed in \cref{prop: extra property 2} is secure  against sabotage.
\end{prop}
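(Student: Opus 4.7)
The plan is to reduce both sabotage games to structural properties of Zhandry's degree-2 construction, and in particular to the fact that $\textsf{verify-bolt}$ is (negligibly close to) a projective measurement whose accepting subspace consists of states that are well-formed superpositions of preimages. I would begin by recalling the form of the scheme: a bolt is a state on $n$ registers, where $\textsf{verify-bolt}$ applied to $(\ket{\psi}, s)$ with $s = (y_1, \ldots, y_n)$ factors as an $n$-fold product of single-register verifications, each of which projects onto the subspace spanned by $\sum_{x : H(x) = y_i} \ket{x}$ (up to negligible error, as analyzed in Section~6 of \cite{zha19}). The construction of $\gencertificate$ is to measure each register in the computational basis and output the tuple of preimages; $\verifycertificate$ then just checks that each claimed preimage hashes to the corresponding $y_i$.

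For security against money sabotage (\cref{eq:sabotage_lighning}), I would argue as follows. Let $\ket{\psi}, s$ be the adversary's output. Since the single-register verifications are (negligibly close to) projectors $\Pi_{y_i}$ onto the subspace spanned by preimages of $y_i$, the post-measurement state conditioned on acceptance lies in $\bigotimes_i \textnormal{Im}(\Pi_{y_i})$ up to negligible error. Applying the same projective measurement a second time must therefore accept with probability $1 - \negl[\lambda]$, and must return the same serial number $s$ (since $\Pi_{y_i}$ only has the outcome $y_i$ in its accepting branch). A hybrid argument over the $n$ registers, bounding the deviation of each $\textnormal{Mini-Ver}$ from an exact projector, then gives the required negligible bound. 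The main technical point is quoting from \cite{zha19} the statement that $\textnormal{Mini-Ver}$ is $\negl[\lambda]$-close to a projector; this already appears in Zhandry's security analysis and needs only to be reformulated in the sabotage language.

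For security against certificate sabotage (\cref{eq:sabotage_certificate}), I would rely on the same structural fact. After $\textsf{verify-bolt}(\ket{\psi}, s)$ accepts, the leftover state $\ket{\psi'}$ lies (negligibly close) in the subspace spanned by tensor products $\bigotimes_i \ket{x_i}$ with $H(x_i) = y_i$. A computational-basis measurement of such a state, which is exactly what $\gencertificate(\ket{\psi'}, s)$ performs, therefore outputs a tuple $(x_1, \ldots, x_n)$ satisfying $H(x_i) = y_i$ for all $i$ with probability $1 - \negl[\lambda]$. This tuple is accepted by $\verifycertificate$ by definition, which is what is required. Formally, I would write this as a two-step hybrid: first replace each $\textnormal{Mini-Ver}$ by an exact projector (accruing negligible error by the bound above), then observe that in the idealized execution, acceptance of verification guarantees that the subsequent certificate check is deterministic.

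The main obstacle I anticipate is isolating and citing the precise quantitative statement from \cite{zha19} that $\textnormal{Mini-Ver}$ is $\negl[\lambda]$-close to a projection in operator norm (or in a suitable one-shot sense); this is implicitly used in Zhandry's uniqueness-of-serial-number proof but is not always stated as a standalone lemma. Once that lemma is in hand, both parts of the sabotage theorem follow by the routine hybrid argument described above. A minor secondary point is that $n = \textnormal{poly}(\lambda)$ so the $n$-fold hybrid loss remains negligible, which is immediate.
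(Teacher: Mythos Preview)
Your proposal is correct and follows essentially the same approach as the paper. The paper's proof is even terser than yours: it simply asserts that $\textsf{verify-bolt}(s,\cdot)$ in Zhandry's degree-2 construction is (exponentially close to) a \emph{rank-1} projector, from which both sabotage properties follow immediately---re-verification succeeds because projectors are idempotent, and certificate generation succeeds because the unique accepted state has support only on computational-basis vectors that are valid preimages. Your hybrid over the $n$ mini-bolts and your explicit worry about extracting a quantitative ``Mini-Ver is close to a projector'' lemma from \cite{zha19} are more careful than what the paper actually writes down, but the underlying argument is the same; note also that the construction in Proposition~\ref{prop: extra property 2} has each mini-bolt of the form $\ket{\Psi}^{\otimes(k+1)}$ rather than a single register, so your description of the register structure is closer to the construction of Proposition~\ref{prop: extra property 1}, though this does not affect the correctness of the argument.
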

\begin{proof}
Zhandry's scheme discussed in \cref{prop: extra property 2} has the property that $\ver(s,\cdot)$, is (exponentially close to) a rank-1 projector, and therefore upon one successful verification, it will continue to pass verifications and therefore satisfy~\cref{def:sabotage_lightning}. In fact, it holds even against unbounded adversaries. This rank-1 projector is such that the state that it accepts only has support on $x$'s that are valid certificates. Since the $\gencertificate$ algorithm is simply a measurement in the standard basis, we conclude that \cref{def:sabotage_certificate} holds.
\end{proof}

The $\sabotagesignature$ experiment is defined in an analogous fashion:
\begin{enumerate}
     \item The challenger runs $(\textsf{gen-bolt}, \textsf{verify-bolt},\sign, \verifysig) \leftarrow \textsf{QL.Setup}(\lambda)$ and sends $(\textsf{gen-bolt}, \textsf{verify-bolt},\sign, \verifysig)$ to $\mathcal{A}$.
    \item $\adv$ sends $\ket{\psi}$, and (a document to be signed) $\alpha$ to the challenger.
    \item The challenger runs $\textsf{verify-bolt}(\ket{\psi})$. If verification fails, set $r=1$. If verification accepts the challenger uses the post-measured state $\ket{\psi'}$ to generate a signature of $\alpha$: $\sigma \gets \sign(\ket{\psi'},s,\alpha)$. The challenger runs $r \gets \verifysig(s,\sigma,\alpha)$. 
    \item $\adv$ wins if $r=0$. Let $\sabotagesignature(\adv,\lambda)$ be the random variable that is $1$ if the adversary $\adv$ wins, and is $0$ otherwise.
\end{enumerate}

\begin{definition}
A quantum lighting scheme with a bolt-to-signature capability is secure against sabotage if, in addition to the requirement in \cref{def:sabotage_lightning}, for every QPT $\adv$ there exists a negligible function $\negl[]$ such that:
\begin{equation}
    \Pr(\sabotagesignature(\adv,\lambda)=1)=\negl
    \label{eq:sabotage_signature}
\end{equation}
\label{def:sabotage_signature}
\end{definition}

\begin{prop}
The construction \textsf{QLDS} in \cref{fig: bolt-to-signature} is secure against sabotage, assuming the underlying quantum lightning scheme with a bolt-to-certificate capability \textsf{QLC} which it uses is secure against sabotage. 
\end{prop}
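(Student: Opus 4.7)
The plan is to reduce each of the two sabotage requirements for $\textsf{QLDS}$ to the corresponding sabotage requirement for $\textsf{QLC}$. The key structural observation I will use is that a $\textsf{QLDS}$ bolt lives on $2n$ independent $\textsf{QLC}$ registers, and that all the operations involved (in particular $\textsf{verify-bolt}$ and $\textsf{gen-certificate}$ on any single register) act on disjoint subsystems and therefore commute as operators on the full Hilbert space. This lets a reduction apply $\textsf{verify-bolt}$ to any subset of registers itself and hand an untouched factor off to the $\textsf{QLC}$ challenger, obtaining the same joint distribution of outcomes on the handed-off factor as the original game would have produced.

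For the $\sabotagemoney$ condition on $\textsf{QLDS}$, I will start by assuming an adversary $\adv$ that wins with non-negligible probability $\epsilon$. Since $\textsf{QLDS.Ver}$ accepts iff each of its $2n$ factor verifications accepts, and the reported serial number is the concatenation $s_1 \| \cdots \| s_{2n}$, a win for $\adv$ forces at least one factor $j^{*} \in [2n]$ to be sabotaged in the $\textsf{QLC}$ sense (it rejects on re-verification or yields a different serial number). The reduction $\mathcal{B}$ will sample $j \in [2n]$ uniformly, build the $\textsf{QLDS}$ setup from its $\textsf{QLC}$ setup, simulate $\adv$, and submit the $j$-th factor of the returned state together with $s_j$ to its $\textsf{QLC}$ challenger. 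A union bound will give $\mathcal{B}$ winning probability at least $\epsilon/(2n)$, contradicting sabotage security of $\textsf{QLC}$.

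For the $\sabotagesignature$ condition, I will assume $\adv$ wins with non-negligible probability $\epsilon$ by outputting $(\ket{\Psi}, s)$ and $\alpha$ such that $\textsf{QLDS.verify-bolt}$ accepts but the resulting signature $\sigma$ fails $\textsf{QLDS.verify-sig}$. Writing $\beta = H(\alpha)$, a signature failure entails that for at least one $i \in [n]$ the certificate on the $(\beta_i \cdot n + i)$-th factor fails $\textsf{verify-certificate}$. The reduction $\mathcal{B}$ against $\textsf{QLC}$'s $\sabotagecertificate$ game will simulate $\adv$, compute $\beta = H(\alpha)$, sample $i \in [n]$ uniformly, run $\textsf{verify-bolt}$ locally on the other $2n-1$ factors, and then submit the $(\beta_i \cdot n + i)$-th factor together with $s_{\beta_i \cdot n + i}$ to its challenger. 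A union bound over the $n$ selected indices then yields $\mathcal{B}$ winning probability at least $\epsilon/n$.

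The hardest step will be making the commutativity argument airtight in the presence of adversarial entanglement across the $2n$ registers. Concretely, one must argue that the probability the $\textsf{QLC}$ challenger's single-register $\textsf{verify-bolt}$ accepts and its generated certificate is invalid is lower bounded by the probability, in the original $\textsf{QLDS}$ game, that all $2n$ joint verifications accept and the chosen certificate is invalid. Because $\textsf{verify-bolt}$ on the $j$-th register acts trivially on the others and is a projective measurement, the sequence of operations (run $\mathcal{B}$'s $2n-1$ verifications, then the challenger's verification on the $j$-th factor, then $\textsf{gen-certificate}$ on that factor) yields the same outcome statistics as running all $2n$ verifications jointly in the original game and then $\textsf{gen-certificate}$ on the same factor. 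Once this is spelled out, the rest of the proof is the routine union-bound bookkeeping sketched above.
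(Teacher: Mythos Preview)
Your proposal is correct and follows the same core strategy as the paper: pick a random factor of the $2n$-register $\textsf{QLDS}$ bolt and forward it to the $\textsf{QLC}$ sabotage challenger, then use a union bound. There are three differences worth noting. First, you also handle the $\sabotagemoney$ condition for $\textsf{QLDS}$, which the paper's proof silently omits (the definition of sabotage security for a bolt-to-signature scheme does require it). Second, for $\sabotagesignature$ you compute $\beta=H(\alpha)$ first and sample only among the $n$ indices $\beta_i\cdot n+i$, yielding $\epsilon/n$, whereas the paper samples uniformly from all $2n$ registers and gets $\epsilon/(2n)$; your bound is tighter but both suffice. Third, you explicitly verify the remaining $2n-1$ factors and invoke commutativity, while the paper just hands off one register without touching the rest; since the single-register channels act on disjoint subsystems, the marginal on the target is unaffected either way, so your extra step is harmless but not needed. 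One small caveat: you call $\textsf{verify-bolt}$ a projective measurement, which the paper does not assume; your argument only needs that it acts as identity on the other registers, so the conclusion is unaffected.
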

\begin{proof}
Given a QPT adversary $\adv$ that wins the $\sabotagesignature$ experiment with probability $\epsilon$ with respect to the \textsf{QLDS} scheme, we can construct ano adversary $\bdv$ that wins the $\sabotagecertificate$ experiment with probability $\frac{\epsilon(\lambda)}{2n}$. By our assumption that \textsf{QLC} is secure against sabotage, we conclude that $\epsilon(\lambda)$ is necessarily a negligible function. 

The adversary $\bdv$ receives from his challenger $(\gen,\ver,\gencertificate,\verifycertificate)$. $\bdv$ will sample $H:\{0,1\}^*\rightarrow\{0,1\}^n\leftarrow \mathcal{H}^{1^n}$. $\bdv$ will play the role of the challenger in the
$\sabotagesignature$ experiment,  and will also simulate $\adv$. $\adv$ will receive the tuple above and $H$ as part of the setup, and will produce  a state $\ket{\psi}$ over $2n$ registers, serial number $s_1,\ldots,s_n$ a document $\alpha$. 
$\bdv$ will sample $i\in[2n]$ uniformly at random, and send the $i$'th register of $\ket{\psi}$ to his challenger. 

Next we show that indeed $\bdv$ wins with probability at least $\frac{\epsilon}{2n}$.

We know that with probability $\epsilon(\lambda)$, $\adv$'s challenger will verify all the $2n$ states, generate a certificate from $n$ of these states, and at least one of these certificates will verification. Recall that $i$ was sampled uniformly at random. $\bdv$'s challenger preforms exactly the same procedure -- the challenger generates a certificate from the state it receives and checks whether it it passes the verification as a certificate. The probability that $i$ is one of the failed certificates is therefore at least $\frac{\epsilon}{2n}$.
\end{proof}

\subsection{A resolution of the practical issues}
\label{sec: resolution practical}
We first employ a quantum lightning scheme with bolt-to-signature capability to resolve issues (i) and (ii) of Section \ref{sec: possible attacks}.

We make a simple modification to the protocols of Fig.~\ref{fig: protocol preventing} (preventing and challenging malicious ``lost banknote claims'') and Fig.~\ref{fig: protocol trading banknotes} (trading quantum banknotes for coins).

We upgrade the quantum lightning scheme with bolt-to-certificate capability to one with bolt-to-signature capability. To deal with issue (i), we make two modifications to our payment system of Section \ref{sec: main}.
\begin{itemize}
    \item We modify the protocol of Fig. \ref{fig: protocol preventing} as follows: when party $P$ notices a malicious ``lost banknote claim'' for a serial number $s$ associated to a quantum banknote $\ket{\psi}$ that he possesses, he does not simply compute the classical certificate associated to $s$ and send it in clear to $\mathcal{F}_{Ledg}$. Rather, $P$ computes a signature $\sigma \leftarrow \textsf{gen-sig}(\ket{\psi}, s, \alpha)$ where $\alpha$ is a message saying ``Party $P$ challenges the claim''. 
    \item We modify the definition of banknote-contract (Definition \ref{def: banknote-contract}) so that whenever there is an active lost claim, the coins deposited in a contract with state variable $\textsf{serial} = s$ are released to a party $P$ only upon receipt of a signature $\sigma$ with respect to $s$ of a message ``Party $P$ challenges the claim''.
\end{itemize}

We do not give a formal proof of security of the modified scheme, as this would require first defining formally the modified ledger functionality. Instead we argue informally: it is straightforward to see that the attack described in point (i) of Section \ref{sec: possible attacks} is impossible. Any adversary that is able to carry out that attack could be used to create an adversary that is able to forge a signature with respect to a serial number $s$ of a bolt that they do not possess. This violates the security of the bolt-to-signature property of the lightning scheme.

To deal with issue (ii), we make a similar simple modification: 
\begin{itemize}
    \item We modify the protocol of Fig. \ref{fig: protocol trading banknotes} as follows: in order to trade a valid quantum banknote with serial number $s$ for the coins deposited in a smart-contract with state variable $\textsf{serial} = s$, party $P$ does not simply compute the classical certificate associated to $s$ and send it in clear to $\mathcal{F}_{Ledg}$. Instead, $P$ computes a signature $\sigma \leftarrow \textsf{gen-sig}(\ket{\psi}, s, \alpha)$ where $\alpha$ is a message saying ``Party $P$ wishes to recover the coins in the contract''.
    \item We modify the definition of banknote-contract (Definition \ref{def: banknote-contract}) so that whenever there is no active claim, the coins deposited in a contract with state variable $\textsf{serial} = s$ are released to a party $P$ only upon receipt of a signature $\sigma$ with respect to $s$ of a message ``Party $P$ wishes to recover the coins in the contract''.
\end{itemize}

For a similar reasoning as for point (i), the attack in point (ii) of Section \ref{sec: possible attacks} is no longer possible. 

Dealing with issue (iii) of Section \ref{sec: possible attacks} is trickier. In this case, an honest party $P$ has lost a valid quantum banknote with serial number $s$, and there is no way for $P$ to recover any certificate or signature proving possession of the banknote. The only difference between $P$ and everyone else, as far as owning the coins deposited in the associated contract, is that $P$ knows that the banknote has been damages and lost, and no one else does. The ``lost banknote claim'' protocol of Fig. \ref{fig: protocol making a claim} requires $P$ to send a message to $\mathcal{F}_{Ledg}$ declaring the loss, and the only reason why $P$ is be able to recover the coins deposited in the contract in the idealized setting is that he is the first to make this claim, and no has the ability to challenge it. The situation changes dramatically if we allow the adversary to delay the processing of honest parties' messages to $\mathcal{F}_{Ledg}$ in favour of its own. The adversary could simply notice a ``lost banknote claim'' and take advantage of this by making its own claim and ensuring that it is registered first on the ledger. We propose the following modification to handle this issue:
\begin{itemize}
    \item Instead of directly making a ``lost banknote claim'', a party $P$ posts a commitment to a message of the form ``$P$ is filing a lost banknote claim associated to the smart contract with identifier $ssid$''. The commitment contains a deposit of $d_0$ coins. 
    \item $P$ has to reveal that commitment after no more than $t_0$ blocks.
    \item The coins are released to user $P$ only $t_1$ blocks after the reveal phase, and provided no reclaim was made during that time. 
    \item In case there are two or more reveals to two or more commitments to the same $ssid$ -- the one which was committed to the earliest (i.e., the commitment appears in a an earlier block) receives the coins. 
\end{itemize}

Intuitively, this modification resolves issue (iii). This is because the commitment is hiding, and hence the adversary does not learn anything about the claim it contains prior to the reveal. After the reveal phase, it is too late for the adversary to start the commitment phase. On the other hand, if an adversary simply tries to guess what the content of a claim is (i.e. which quantum banknote was lost), and tries to make a claim of his own, the adversary will most likely lose coins, assuming the frequency at which ``lost banknote claims'' are made is low (recall that making a claim requires staking some coins, which are lost if the claim is successfully challenged).

As much as this resolution to issue (iii) seems to work in theory, the adversary could in practice possess some side information regarding the claim that is hidden in the commitment, and it becomes difficult to model this side information in a way that is both rigorous and faithful to what is possible in practice. We illustrate this with some practical examples.

First, it might be hard to know, for an honest user who lost their quantum banknote, whether the latter was indeed lost, or whether it was stolen. Therefore, an honest user who wrongfully believes that his quantum banknote was lost might end up losing an extra $d_0$ coins, to the thief who stole it, when trying to recover it.

Additionally, if an adversary could guess a serial number of a quantum money state that was or will be lost, and post it before the honest user, the adversary will be the one who will eventually receive these coins. Such a setting might be plausible, for example, during (or even before) power outages.

Another attack vector is the following. Indeed, the commitment does not reveal which quantum banknote was lost. However, each commitment reveals that 1 banknote was lost. Suppose there is a claim for 1743 coins at some time $t$. There is a good chance that these $1743$ coins belong to one person who lost all his $1743$ coins. An adversary that can figure out who owns exactly $1743$ coins that were recently lost, and what are the serial numbers of these coins, can effectively steal these coins. Since our construction does not claim any guarantees about the privacy of users, this information might be readily available to the adversary. Various attacks on the users' privacy are known in the "classical" Bitcoin literature -- see \cref{it:privacy} in \cref{sec: comparison}. 

Therefore, we elect to stir away from formal security claims, and we leave this as a proposed resolution which requires further investigation. 

\section{A practical implementation on Bitcoin and space optimization}
\label{sec: practical implementation on Bitcoin}
In this section, we informally discuss a practical implementation of our payment system using Bitcoin. In Section \ref{sec: main}, we have elected to describe our payment system in a model with access to a ledger functionality that supports stateful universal smart contracts. This was for two main reasons: clarity of exposition and flexibility of the framework. Nevertheless, it is possible to implement our payment system on the Bitcoin blockchain. 

Bitcoin uses a special purpose scripting language for transactions. The basic building block of a a script is called an \emph{opcode}. The opcodes that Bitcoin supports are extremely limited. For example, the opcode OP\_ADD, which adds two inputs, is supported, but even a simple operation such as multiplication or raising a number to a power are \emph{not} supported -- as they are not strictly needed, yet they increase the attack surface (for example, they may be used to preform a memory-based attack). More information about the scripting language and opcodes can be found in Ref.~\cite{NBF+16}. 
The only required adjustment to the Bitcoin protocol needed to implement our payment system is to add two opcodes to the Bitcoin scripting language. These opcodes are utilized once when transforming Bitcoin to quantum banknotes, and once when the quantum banknotes are transformed back to Bitcoin. We provide more detail about this implementation by describing a possible improvement to the space efficiency of our payment system. In Section \ref{sec: bitcoin simple approach}, we informally describe an implementation on Bitcoin of the original payment system. In Section \ref{sec: bitcoin space optimization}, we informally describe a modification that drastically improves space efficiency. 

\subsection{Bitcoin to Quantum Money: The Simple Approach} 
\label{sec: bitcoin simple approach}
In order to use a quantum lightning scheme, of course, we need to first run  $\textsf{QL.Setup}(\secparam)$. In some scenarios, where we can assume some trusted setup, this is not an issue. But in an existing distributed system, such as Bitcoin, this could be considered contentious. One of Zhandry's construction only requires a Common Random String (CRS) -- see Fact~\ref{fac: Zhandry requires CRS}.

In the context of Bitcoin, such a CRS could be generated by a multiple-source randomness extractor, applied on the headers of the first $n$ blocks, where $n$ is determined by the amount of randomness required for the quantum lightning scheme, and the min-entropy each header contains. More details regarding using Bitcoin as a source of randomness can be found in Ref.~\cite{BCG15}\footnote{Though, for our purposes, we do not need a recurring form of randomness usually called a randomness beacon, which is the focus of their work. In this context the source of randomness is only used once.}. Bonneau et al. have argued that the min-entropy in each block header is at least 32 bits~\cite[Table 1]{BCG15}.

To transform $x$ bitcoins associated to a signing key $sk$, to quantum money, the user acts as follows. First, the user mints a quantum lightning state, together with some serial number: $(\ket{\$},s)\gets \mint_{pk}$. Second, the user signs the concatenation\footnote{We use $||$ to denote concatenation.} of a special "quantum money" opcode, the serial number, and the value $y$ using the secret key:
\begin{equation}
     \sigma \gets \sign_{sk}(\text{OP\_BITCOIN\_TO\_QUANTUM\_MONEY} || s || y).
 \end{equation}
 Third, the user propagates the signed message $(\text{OP\_BITCOIN\_TO\_QUANTUM\_MONEY} || s || y,\sigma)$ to the Bitcoin miners. Here $y$ is the value which the quantum money holds, and $x-y$ represents the fee which incentivizes miners to include that message in the blockchain, in line with the current fee mechanism in Bitcoin (which we have not covered in this work -- see ~\cite{NBF+16} for more details). 

A miner would include the message $(\text{OP\_BITCOIN\_TO\_QUANTUM\_MONEY} || s || y,\sigma)$ in their block, as long as (i) the signature is legitimate, i.e., $\verify_{vk}(\text{OP\_BITCOIN\_TO\_QUANTUM\_MONEY} || s || y,\sigma)=1$, (ii) $y<x$, where $x$ is the value of the bitcoins associated with the verification key (bitcoin address) $vk$, and (iii) as long as the miner fee is attractive enough.

Under the above conditions, eventually, a block which contains that transaction would be mined, verified by all other nodes, and would become part of the longest chain.

At this point, the original owner of the $x$ bitcoins does not possess any "standard" bitcoins, since they are considered spent. Instead, she holds the quantum state $(\ket{\$},s)$ with the ``value'' $y$. She could then spend the quantum money by sending it to other others, and it would be treated as having the same value as $y$ bitcoins. To verify the quantum money $(\ket{\psi},s,y)$, the receiver would first check that $\text{OP\_BITCOIN\_TO\_QUANTUM\_MONEY} || s || y$ indeed appears in the blockchain. This step requires storing the blockchain. Any old version that contains the signed message will do. Techniques such as Simplified Payment Verification (SPV) could be used to eliminate the need to store that information, with some downsides in terms of privacy and security~\cite{Nak08,NBF+16}. Then, the receiver would accept the quantum money if $\verify_{pk}(\ket{\psi},s)=1$.  The receiver could in turn spend the received quantum money to other users via the same procedure. 

\subsection{Bitcoin to Quantum Money: Space Optimization} 
\label{sec: bitcoin space optimization}

The approach mentioned above has a limitation. Unlike bitcoin, which is divisible almost infinitely, quantum money is not (see \cref{it:divisibility} in p.~\pageref{it:divisibility} for more details). The most naive workaround is to divide the value of the $x$ bitcoins between several quantum money states, with values $y_1,\ldots,y_n$ so that $\sum_{i\in [n]} y_i < x$. The disadvantage is in terms of space: each serial number is recorded on the blockchain, which causes high fees and somewhat of a scalability problem. 

We now present a much more efficient approach, in which the space required on the blockchain is independent of the number of quantum money states the user creates. Suppose the user wants to split the $x$ bitcoins into $2^n$ quantum money states, each having a value of $2^{-n} y$. The user first creates $2^n$ quantum money states with serial numbers $s_1,\ldots, s_n$, and calculates the Merkle Hash Tree~\cite{Mer80}\footnote{We note that the original motivation of Merkle was very different than the one we use here. See ~\cite[Section 5.6.2]{KL14} and \cite[Section 1.2]{NBF+16} for the specific application we need.} of these serial numbers. The user than signs and publishes only the root of the Merkle tree $r$, the total value $y$ (which must satisfy $y< x$ as before), and $n$. 
Each quantum money state now consists of three parts: the quantum part $\ket{\$}$, the serial number $s_i$, and a Merkle path from $s_i$ to the root $r$. Note that the information recorded on the block-chain is independent of $n$, and one could create very small denominations using this approach. 

Another advantage of this approach is that it allows a slightly weak variant of a Proof of Payment -- see \cref{it:proof_of_payment} in p.~\pageref{it:proof_of_payment}.

One could even take this one step further. Miners could include a Merkle root of the Merkle tree of all existing serial numbers, at the end of every calendar year. This would require users that wish to verify quantum money to come online once every year to store the Merkle root. Every owner of quantum money would calculate the Merkle path from the serial number of the quantum money state, to that root, and append it to their  quantum money state. That would allow these users to verify the quantum money state generated in previous years using only one hash, typically, e.g., 256 bits. Of course, this technique does not reduce the space requirement for quantum money which was generated in a year which has not ended yet. 


\section{Comparing our scheme to classical alternatives}
\label{sec: comparison}
In this section, we discuss several trade-offs between Bitcoin and our payment system based on quantum money. The current roadmap for solving the scalability problem of Bitcoin goes through a second layer solution called the Lightning Network (LN) \footnote{Note that there is no connection between the terms quantum lightning and the lightning network}. This section discusses the trade-offs between the following three different alternatives of employing Bitcoin: (a) a standard Bitcoin transaction, (b) a LN transaction and (c) a quantum money transaction as in our payment system (we will refer to this as QM for short, from now on). Before doing so, we give a brief overview of the LN.  

The LN improves upon Bitcoin, and provides a way to transact off-chain using \emph{bidirectional payment channels} between two parties \cite{PD16} (see also~\cite{BDW18}).
The opening and closing of the channel is done on-chain (i.e., both of these operations require $1$ transaction that would appear on the Bitcoin blockchain) while updating the balance of the channel is done completely off-chain, and thus allows for an unlimited number of transactions which do not affect the limited throughput of the Bitcoin blockchain, and do not incur the long confirmation times of on-chain transactions. At any point, each of the two parties involved can ``close'' the channel, by submitting the most up to date state of the channel to the blockchain. Crucially, the lightning network supports routing. What this means is that the graph induced by all bi-directional payment channels allows Alice to pay Bob, without any on-chain transaction, as long there is a path in the graph from Alice to Bob. The objectives of the LN are to increase the effective transaction throughput of the Bitcoin blockchain (by having most of the transactions happen off-chain), make transactions be effective immediately, and reduce the transaction costs.   
Since early 2018, the LN has been active on the Bitcoin main-net. The capacity of the LN is shown in \cref{fig:capacity_LN}, and as of August 2019, it stores only $0.004\%$ of the total bitcoins in circulation, and is still considered experimental\footnote{For example, as of August 2019, the maximal capacity of a LN channel is $0.16$ bitcoin, see \url{https://github.com/lightningnetwork/lightning-rfc/blob/master/02-peer-protocol.md\#the-open_channel-message}, and the requirements regarding the parameter \textsf{funding\_satoshis}.} . 
\begin{figure}[htb]
    \centering
    \includegraphics[width=\textwidth]{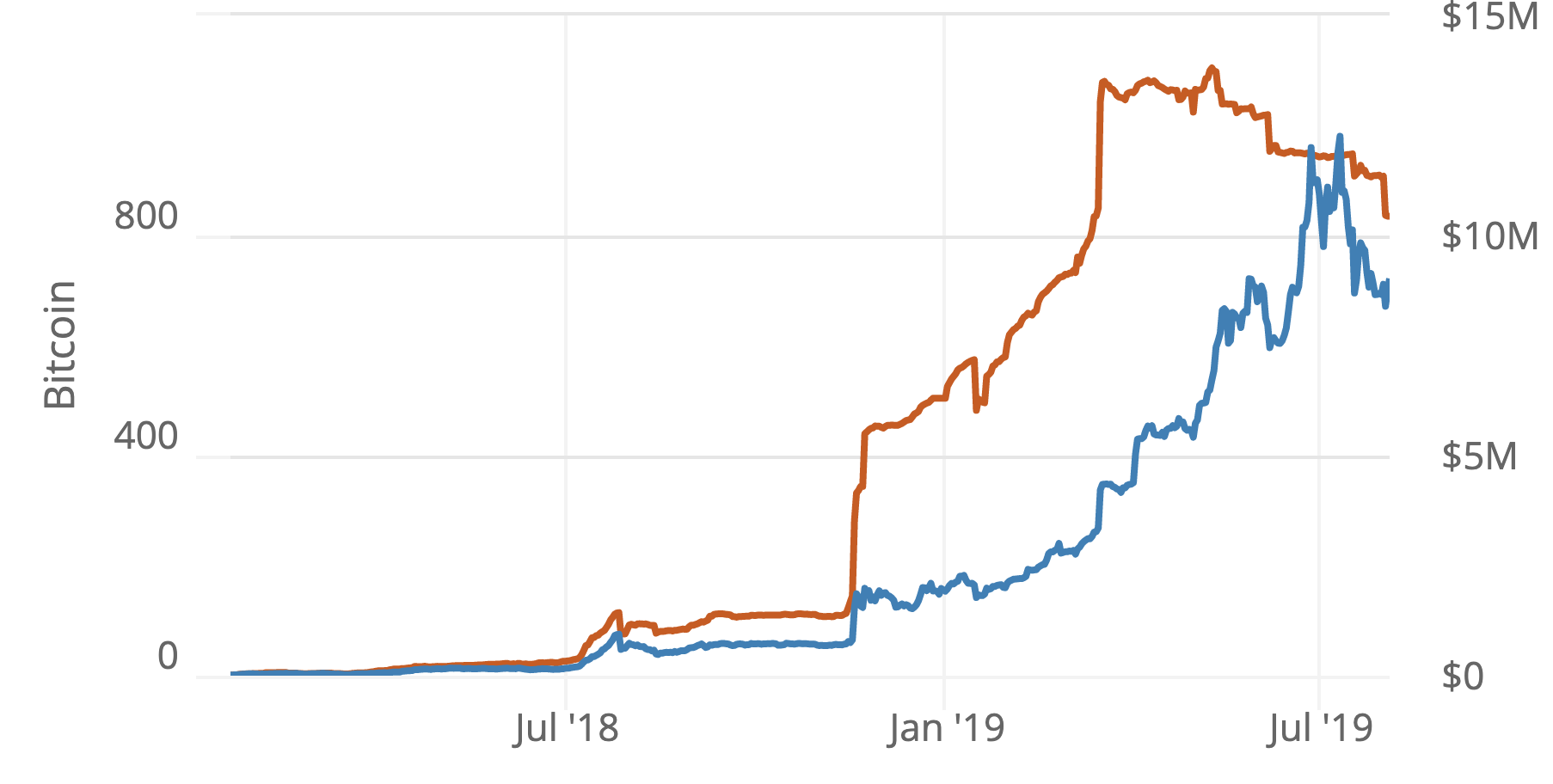}
    \caption{The total capacity of the lightning network, starting on January 2018, brown in BTC, blue in USD. Source: \url{https://bitcoinvisuals.com/ln-capacity}  }
    \label{fig:capacity_LN}
\end{figure}

The following list provides an exhaustive comparison between the three alternative modes of operation mentioned above. It is worth noting that quantum money, used as in our payment system, in many ways resembles (digital) cash banknotes, both in terms of the advantages and the disadvantages, and is thus the closest of the three modes to ideal digital cash. Items 1-\ref{it:communication_availability} present the aspects in which quantum money outperform Bitcoin and the LN, while the disadvantages of quantum money are presented in Items~\ref{it:computational_resources}-\ref{it:proof_of_payment}.
\begin{enumerate}
    \item \label{it:throughput} \textbf{Throughput.} The throughput of Bitcoin is strongly capped. On average, in the first 6 months of 2019, the Bitcoin network had 3.9 transaction per second\footnote{Source: \url{https://www.blockchain.com/charts/n-transactions}.}. This throughput cannot increase dramatically without changing the Bitcoin protocol. 

    To receive bitcoins through the LN, Alice must have an open channel, which she will eventually need to close. This requires at least two on-chain transaction, though, the number of uses is not bounded. In this regard, transforming quantum money to Bitcoin is very similar to opening a quantum channel, and transforming the quantum money back to Bitcoin is similar to closing a channel.

    The balance of a LN channel can be updated, but is always bounded by the transaction that opened that channel. For example, if Alice locks $10$ bitcoins in a channel with Bob, then initially she has $10$ bitcoins and Bob has zero; Alice and Bob could then update it, but Bob could never receive more than $10$ bitcoins using the channel. 

     A user could receive and send quantum money without ever making an on-chain transaction. Quantum money has no limit on the value transferred.

     \item \textbf{Liquidity.} Suppose Alice wants to pay Bob. In Bitcoin, she could do that, assuming she has enough funds, and connection to the Bitcoin network. In the LN, this is not always the case: there needs to be a way to route money among the open channels. Sometimes no such route exists, or is inaccessible (using these channels for routing requires the cooperation of the parties along the channel). This may have an impact on the effective throughput of the LN. A quantum money state can always be sent from the receiver to the sender.  

     \item \textbf{Latency.} The recommended best practice for a Bitcoin transaction is waiting for 6 confirmations, which takes 1 hour on average. 

     Both the LN and QM need one transaction -- in order to open a LN channel, or to transform bitcoin to QM. That part can be done in advance, and is only needed once, but it suffers from the same latency as a Bitcoin transaction. A LN has no inherent latency other than the delays caused by the routing. For example, a single transaction might involve several hops between distant nodes, which takes on the order of a second. QM has a slightly better performance, especially when the sender and receiver are physically close to each other -- the latency in this case is only due to the quantum verification algorithm. Overall, the LN and QM have comparable latencies, which are much better than Bitcoin's.

     \item \textbf{Fees.} Each Bitcoin transaction needs to pay a \emph{fee}, in order to get approved. This fee is collected by the miner, which included the transaction. The average fee per transaction in the first 6 months of 2019 was $1.41$ USD\footnote{Sources: \url{https://www.blockchain.com/charts/n-transactions} and \url{https://www.blockchain.com/charts/transaction-fees-usd}.}. To encourage LN nodes to provide liquidity, the protocol uses routing fees. These fees are expected to be smaller than on-chain transaction fees, but still non-zero. No fees are needed to transact with QM. 

     \item \textbf{Dependence on mining.} The Bitcoin protocol and implicitly, the LN, are based on mining (also known as Proof-of-Work~\cite{DN92}). This approach suffers from two main drawbacks: (a) As of August 2018, Bitcoin mining (also known as Proof-of-Work~\cite{DN92}) consumed slightly under 1\% of the world's electricity~\cite{Nar18}. Mining is required to secure the network. Proof-of-Stake is a competing approach with somewhat different trade-offs~\cite{KRDO17,DGKR18}, with the main advantage that it does not spend so much energy.
     (b) The security model related to mining is convoluted, and is based on incentives, without clear assumptions nor a full security analysis. In particular, it is known that the Bitcoin protocol is \emph{not} incentive compatible~\cite{ES14}.  In PoW and PoS, a corrupted majority can double-spend. In addition, the Bitcoin protocol does not provide finality -- in principle, at any point, a roll-back (known as re-organization) could occur.

     Quantum money does not require PoW or PoS, and provides finality. For example, if all the bitcoins were transformed to quantum money, PoW or PoS would not be needed at all. Of course, coordinating such a change is highly non-trivial. 

    \item \textbf{Privacy.}\label{it:privacy} The users' privacy is not guaranteed in Bitcoin, and certainly not by default. All the transactions that have ever occurred are publicly accessible. The common approach to provide some level of privacy is to use a new bitcoin address per transaction. Yet, there are various techniques to relate these addresses to real-world identities~\cite{RS13,RS14,MPJ16}. There are several commercial services that deanonymize Bitcoin transactions\footnote{E.g., \url{https://www.chainalysis.com}. They claim to have "the biggest database in the world of connections between real world entities and transactions on the blockchain", see~\url{https://youtu.be/yNpNz-FvSYQ?t=154}. }. More recent works tackle this privacy issue -- see ~\cite{Sab13,MGGR13,BCG+14,Poe16} and references therein.

    There is a trade-off between privacy and concurrency in the LN, see~\cite{MMK+17} and references therein. In addition, this aforementioned work explicitly leaves the question of privacy preserving routing for the LN open. 

    Quantum money enjoys superior privacy, as only the sender and receiver take part in the transaction, and the transaction leaves no trace. The privacy of QM is analogous to that of physical banknotes. Bear in mind that banknotes have serial numbers, and these could potentially be traced, although arguably the effect on privacy is negligible. In this sense, coins provide better privacy for the users, since they are indistinguishable. \emph{Quantum Coins} have been formally studied: these are indistinguishable quantum states which provide an analogous level of privacy as physical coins - see~\cite{MS10,JLS18} (improving upon~\cite{TOI03}). Unfortunately, these constructions are for private quantum money (rather than public), and do not constitute a quantum lightning scheme, which is crucial for our construction, and thus cannot be used in our payment system. We leave it as an open question whether the same level of anonymity achieved using quantum coins, or the construction mentioned above, could be achieved in the context of quantum money for Bitcoin.

    \item  \textbf{Risk of Locked funds.} In the LN, the parties lock funds in a channel. When both parties are cooperative, these funds can be easily unlocked, and used in the next block. In case one party is uncooperative, and refuses to close the channel, these funds are effectively locked for some period of time, typically, for a few days. 

    Quantum money does not require users to lock any funds.

     \item \textbf{Connectivity requirements.} A Bitcoin transaction requires communication with the Bitcoin network. At the very least, the receiver needs a communication channel with at least one other Bitcoin node. A LN transaction needs even more resources, since the LN uses source routing -- therefore, the sender has to be well connected to the entire network in order to find the route. A QM transaction only requires quantum communication between the sender and the receiver.


     \item \label{it:communication_availability} \textbf{Liveliness.} For technical reasons which are outside the scope of this work, both parties participating in each channel have to be on-line occasionally, and monitor the Bitcoin network, in order to revoke a transaction, in case the other party cheats. This task can be outsourced to a \emph{Watchtower}, which has to be trusted to perform its job (and of course the watch-tower has to be on-line occasionally). Online availability is not required for QM, if one opts for a version of our payment system in which there is no procedure to recover the value corresponding to lost or damaged quantum banknotes. If one opts to include such a recovery mechanism, then a similar level of online availability as in the LN is required.


     \item \label{it:computational_resources} \textbf{Technological requirements.} Transactions with quantum money require 3 main resources: long term quantum memory to store the quantum money, a universal quantum computer to verify the quantum money, and a quantum network to transmit the quantum money between users.

     \item \textbf{Smart contracts.} Bitcoin provides a scripting language which can be used to design \emph{smart contracts}. Notable examples include \emph{multi-sig transactions}, in which $m$-out-of-$n$ signatures are needed to spend the money, and \emph{atomic cross-chain swap} which allows Alice and Bob to trade two crypto-currencies without trusting each other~\cite{NBF+16}. Both capabilities naturally extend to the LN\footnote{The video in ~\url{https://youtu.be/cBVcgzEuJ7Q} demonstrates a LN atomic swap between Bitcoin and Litecoin.}. Other crypto-currencies such as Ethereum have a richer programming language, which allows constructing Decentralized Applications (DApps) ~\cite{Woo14,But14}. 

     QM does not solve the consensus problem or any variant of it, and does not provide the functionality of smart contracts.

     \item \textbf{Backup.} It is pretty easy to backup a Bitcoin wallet. Typically, all that a user needs is a fairly short string called a \emph{seed}. This seed is used to generate a Hierarchical Deterministic Wallet~\cite{Wui13}.  A backup can be done once, and never needs to be updated in the lifetime of a Bitcoin wallet. LN channels are slightly harder to backup, since the protocol is stateful, and therefore, currently, backing up requires having the most up-to-date state of the channel. 

     By definition, it is impossible to backup a quantum money state. In this work, we proposed a mechanism to \emph{recover} lost banknotes. In essence, a user can claim that her quantum money was lost, by depositing $d$ bitcoins. The user would receive these bitcoins after some period of time (e.g., one year). If the party that claimed that the coins were lost is dishonest, then the person holding the legitimate quantum money state can produce a certificate of this fact, and claim the $d$ bitcoins, in addition to the value of the quantum money that she originally held. To avoid theft, users that want this option available have to be on-line occasionally (in this example, at least once a year).
     
     \item \textbf{Divisibility.}\label{it:divisibility} One of the advantages of Bitcoin and the LN is that any amount, down to $10^{-8}$ bitcoin can be sent, and in principle, even smaller amounts could be used. Quantum money, on the other hand, is \emph{not} divisible. The user must decide, at the time of the quantum minting, what is the denomination of the quantum money, and it remains the same for the lifetime of the quantum money. 

     \item \textbf{Hidden inflation.} \label{it:hidden_inflation} Consider a computationally powerful adversary that attacks the Bitcoin network. Such an adversary could, for example, break the digital signature scheme and steal other people's Bitcoins. Yet, such an adversary couldn't "print" bitcoin from thin air, without others noticing it\footnote{Unless there is some bug that is unrelated to cryptography}. 

    When we use quantum money, the situation is different. A powerful adversary could create new quantum money from thin air, without others being able to notice it. This might be a threat either because of invalid computational assumptions, or flaws in the implementation. This threat is not unique to quantum money. In fact, such a flaw in the implementation occurred in ZCash, a crypto-currency which is based on the ZeroCash protocol~\cite{BCG+14}, see  \url{https://electriccoin.co/blog/zcash-counterfeiting-vulnerability-successfully-remediated}. Inevitably, there is no definitive way to know whether that bug was exploited. 

    \item \textbf{Finality of the security parameters.}\onote{Finish.} One interesting feature of Bitcoin is that the level of security that is achieved can, in principle, be increased. If the level of security seems insufficient due to technological advancement, the protocol may allow users to transition to more secure schemes. This is exactly the case for the proposed post-quantum secure digital signature schemes~\cite{But13,LRS18}. It is the the responsibility (and incentive) of each individual user to transition -- otherwise, her funds might be lost. 

    The quantum money can be used too increase security in the same manner: 
    users with QM in circulation can create new QM with the improved parameters, sign the new serial number using their the bolt-to-signature capability (and by that destroying it), and adding that signed message to the blockchain.
    Yet, the incentives here a slightly different: an adversary could steal the bitcoins of some user, if the security parameters are poorly chosen. In the quantum money setting, the adversary could print money from thin air -- see \cref{it:hidden_inflation}. This makes the system, as a whole, insecure. Therefore, it is advisable to make such a transition mandatory.

    \item \textbf{Optional Transparency.} Bitcoin transactions are publicly available, and organizations or individuals that wish to, can make their accounting book completely transparent. See Bitcoin Improvement Proposal (BIP) 32~\cite{Wui13} and Ref.~\cite[Chapter 4.2]{NBF+16} for more details. 

    Quantum money transactions leave no trace, and therefore it seems impossible to achieve this sort of transparency.


     \item \label{it:proof_of_payment} \textbf{Proof of payment.} Consider the following scenario. You go to a store, and pay for an item. You hand over a valid banknote to the seller. The seller takes it to the bill checking machine, and secretly replaces your valid bill with a fake one. The seller then gives you back the fake money, and blames you for trying to fraud him.

    This kind of attack cannot happen in Bitcoin, if used appropriately. The seller can ask for a signed payment request, and the buyer can then verify the authenticity of that message, using the seller's public key. After payment, the seller cannot argue that the payment was not received -- the buyer can prove that the bitcoins were sent to the seller's address, by showing the payment on the blockchain.

    A Bitcoin payment can be done in such a way that an honest user would have a proof of payment~\cite{AH13}. A similar functionality might be possible to achieve in the LN, though currently, as far as the authors are aware, the LN does not provide such functionality. 

    On the other hand, QM transactions leaves no trace, and proof of payment seems harder to achieve. A possible workaround (which works for the LN as well) could be the following. Suppose Alice wants to send 10 bitcoins worth of quantum money to Bob. Instead of sending it all at a time, she could divide the payment into 100 iterations. In each iteration she would send $0.1$ bitcoin, and expect a digital signature in return approving the payment in return. If Bob fails to provide such a signature, she would abort. The worst case scenario in this case is that she would not have a proof of payment for $0.1$ bitcoins.

 \end{enumerate}









\section{Conclusion}
In this work, we gave the first example of the use of classical smart contracts in conjunction with quantum cryptographic tools. We showed that smart contracts can be combined with quantum tools, in particular quantum lightning, to design a decentralized payment system which solves the problem of scalability of (payment) transactions. There is currently only one known secure construction of quantum lightning, which relies on a computational assumption about multi-collision resistance of certain degree-2 hash functions \cite{zha19}. Finding alternative constructions of quantum lightning, secure under more well-studied computational assumptions, is a very interesting open problem.

Smart contracts have found several applications in classical cryptographic tasks, but their application to quantum cryptographic tasks is virtually unexplored. We hope that this work will ignite future investigations. Some candidate tasks which might potentially benefit from smart contracts are: generation of public trusted randomness, distributed delegation of quantum computation, secure multi-party quantum computation. 

\paragraph{Acknowledgments}
A.C is supported by the Simons Institute for the Theory of Computing. O.S. is supported by the Israel Science Foundation (ISF) grant No. 682/18 and 2137/19, and by the the Cyber Security Research Center at Ben-Gurion University. 

\ifnum\sigconf=1
    \bibliographystyle{ACM-Reference-Format}
\else
    \bibliographystyle{alphaabbrurldoieprint}
\fi

\bibliography{quantum_money_solution_blockchain_scalability}

\appendix
\section{Appendix}
\subsection{Proof of Proposition \ref{prop: extra property 2}}
\label{sec: appendix}
\begin{proof}[Proof of Proposition \ref{prop: extra property 2}]
We assume some familiarity with Zhandry's construction (see section 6 of \cite{zha19} for more details). In his construction, a full bolt is a tensor product of $n$ mini-bolts. A valid mini-bolt with serial number $y$ takes the form $\ket{\Psi}^{\otimes (k+1)}$ where $\ket{\Psi}$ is a superposition of pre-images of $y$ under a certain function $H$ (this is called $f_{\mathcal{A}}$ in Zhandry's paper, and we do not go into the details of what this function is). The serial number of the full bolt is the concatenation of the serial numbers of the mini-bolts. \textsf{verify-bolt} has $H$ hardcoded in its description. The computational assumption under which Zhandry's construction is proved secure is that $H$ is $(2k+2)$-multi-collision resistant, i.e. it is hard to find $2k+2$ colliding inputs (for this particular function it is easy to find $k+1$ on the other hand). 

Similarly to the proof of Proposition \ref{prop: extra property 1}, we define $\textsf{gen-certificate}$ to be the QPT algorithm that measures each mini-bolt in the computational basis and outputs the concatenation of the outcomes. We define \textsf{verify-certificate} to be the deterministic algorithm which receives as input a serial number $(y_1,..,y_n)$, and the concatenation of $(z^{(i)}_{1},.., z^{(i)}_{k+1})$ for $i=1,..,n$, and checks that $H(z^{(i)}_{1}) = .. = H(z^{(i)}_{k+1}) = y_i$ for all $i$.

It is clear that $(I)$ holds.

For property $(II)$, similarly to the proof of Proposition \ref{prop: extra property 1}, we can construct an adversary $\mathcal{A}'$ that breaks the $(2k+2)$-multi-collision resistance of $H$ from an adversary $\mathcal{A}$ that wins game $\textsf{Forge-certificate}$ with non-negligible probability: $\mathcal{A}'$ runs $(\textsf{gen-bolt}, \textsf{verify-bolt}) \leftarrow \textsf{QL.Setup}(1^\lambda)$ and sends $(\textsf{gen-bolt}, \textsf{verify-bolt},\textsf{gen-certificate}, \textsf{verify-certificate})$ to $\mathcal{A}$, where the latter two are defined as above in terms of the function $H$ hardcoded in \textsf{verify-bolt}. $\mathcal{A}$ returns $c$ which is parsed as $(x_1,..,x_n)$, where each $x_i$ is a $(k+ 1)$-tuple, and $\ket{\psi}$. $\mathcal{A}'$ then measures each of the $n$ registers of $\ket{\psi'}$ to get $n$ $(k+1)$-tuples $ (x'_1,..,x'_n)$. If there is some $i$ such that $(x_i, x_i')$ is a $(2k+2)$-collision, then $\mathcal{A}'$ outputs this. We claim that with non-neglibile probability $\mathcal{A}'$ outputs a $(2k+2)$-collision: in fact, since $\mathcal{A}$ wins $\textsf{Forge-certificate}$ with non-negligible probability, then $\ket{\psi'}$ must pass $\textsf{verify-bolt}$ with non-negligible probability; from the analysis of Zhandry's proof, we know that any full bolt that passes verification with non-negligible probability must be such that most mini-bolts have non-negligible weight on most pre-images (in fact they should be close to uniform superpositions over all pre-images). Indeed, in Section 6.3, p. 435, Zhandry argues:
\begin{quote}
    Conditioned on acceptance, by the above arguments the resulting mini bolts
must all be far from singletons when we trace out the other bolts. This means
that if we measure the mini-bolts, the resulting superpositions will have high
min-entropy.
\end{quote}
\mpar{\label{mar:prop2_quote}}
\end{proof}

\subsection{Standard Security Definitions} 
\label{sec:standard_security_definitions}
\begin{definition}[Digital Signature Scheme]\label{def:digital_signature}
A digital signature scheme consists of 3 PPT algorithms $\keygen,\ \sign$ and $\verify$. 
The scheme is complete if the following holds.
When a document is signed using the private key, the signature is accepted by the verification algorithm using the public key. Formally, for every $\alpha\in \{0,1\}^{*}$:
\begin{equation}
\Pr\left[(vk,sk) \gets \keygen(\secparam); \sigma \gets\sign_{sk}(\alpha); r \gets \verify_{vk}\left(\alpha,\sigma) \right): r = 1 \right] = 1
\label{eq:ds_correctness}
\end{equation}

The scheme satisfies Post-Quantum Existential Unforgeability under a Chosen Message
Attack (PQ-EU-CMA) if the following holds. Consider a QPT adversary with the capability of adaptively requesting
documents to be signed by a signing oracle. The scheme is secure if such an adversary
cannot generate a signature for any fresh document -- a document which the adversary did not ask the oracle to sign. Formally, for every QPT adversary $\adv$ there exists a negligible function $\negl[]$ such that
\begin{equation}
\Pr\left[
  (vk,sk) \gets \keygen(\secparam); (\alpha,\sigma) \gets \adv^{\sign_{sk}}(\secparam,vk): \verify_{vk}(\alpha,\sigma) = 1 \wedge \alpha \notin Q_{\adv}^{\sign_{sk}}\right] \leq \negl,
\label{eq:unforgeability_digital_signature} 
\end{equation} 
where $\adv^{\sign_{sk}}$ is a QPT algorithm with access to the signing
oracle, and $Q_{\adv}^{\sign_{sk}}$ is the set of queries it made to the oracle.
\end{definition}

\fi

\end{document}